\newif\ifreview
\newcommand{\dz}{Datalog$_\textbf{Z}$}
\newcommand{\Swouts}{\sigf}
\newcommand{\Xwouts}{\Xi\setminus\Sigma}
\newcommand{\Z}{\mathbb{Z}}
\newcommand{\WW}{W}
\newcommand{\Nat}{\mathbb{N}}
\newcommand{\eqS}{=_S}
\newcommand{\sem}[1]{\llbracket#1\rrbracket}
\newcommand{\order}{\textrm{order}}
\newcommand{\level}{\mathrm{level}}
\newcommand{\boolsort}{o}
\newcommand{\lr}{\lesssim}
\newcommand{\glr}{\gtrsim}
\newcommand{\fspace}[2]{[{#1}\to{#2}]}
\newcommand{\firstdef}[1]{\emph{#1}}
\newcommand{\natty}{\mathsf{Nat}}
\newcommand{\tyzero}{\zeta}
\newcommand{\tyint}{\xi}
\newcommand{\tynint}{\nu}
\newcommand{\bnsem}[1]{\Hf_n\sem{#1}}
\newcommand{\confused}{level-entwined\ }
\newcommand{\oconfused}{entwined\ }
\newcommand{\confusionf}[1]{\langle #1 \rangle^\ell}
\newcommand{\oconfusionf}[1]{\langle #1 \rangle}
\newcommand{\assigned}[2]{#1^{#2}}
\newcommand\init{\mathit{Init}}
\newcommand{\cref}[1]{\ref{#1}}
\newcommand\defn[1]{\textbf{\em #1}}
\newcommand\defeq{\coloneqq}
\renewcommand\phi{\varphi}
\newcommand\set[1]{\{{#1}\}}
\definecolor{oxblue}{RGB}{0,33,71}
\definecolor{oxlightblue}{RGB}{0,72,205}
\definecolor{oxlightblue2}{RGB}{161,196,208}
\definecolor{oxgold}{HTML}{a0630a
}
\definecolor{oxyellow}{RGB}{243,222,116}
\definecolor{oxbrown}{HTML}{88562e}
\definecolor{oxgreen}{HTML}{3e6111}
\definecolor{oxgreen2}{RGB}{105,145,59}
\definecolor{oxgreen3}{RGB}{185,207,150}
\definecolor{oxred}{HTML}{ac0b1d}
\definecolor{oxred2}{RGB}{190,15,52}
\definecolor{oxpink}{RGB}{235,196,203}
\definecolor{oxorange}{RGB}{207,122,48}
\definecolor{oxgrey}{RGB}{167,157,150}
\newcommand\sinti[2]{{#1}\llbracket#2\rrbracket}
\newcommand{\from}{:}
\newcommand{\As}{\mathcal A}
\newcommand{\Bs}{\mathcal B}
\newcommand{\Hf}{\mathcal H}
\newcommand{\Sf}{\mathcal S}
\newcommand\nat{\mathbb N}
\newcommand\bool{\mathbb B}
\newcommand\nin{\not\in}
\newcommand\arel{\precsim}
\DeclareMathOperator{\dom}{dom}
\DeclareMathOperator{\andf}{and}
\DeclareMathOperator{\orf}{or}
\DeclareMathOperator{\LIA}{LIA}
\DeclareMathOperator{\FO}{{FO}}
\newcommand{\Res}{\Rightarrow_{\Resh}}
\newcommand\btypes{\mathfrak I}
\newcommand{\stkout}[1]{\ifmmode\text{\sout{\ensuremath{#1}}}\else\sout{#1}\fi}
\newcommand{\tden}[1]{\llbracket #1\rrbracket}
\newcommand{\sigf}{\Sigma_{\mathrm{fg}}}
\newcommand{\sigb}{\Sigma_{\mathrm{bg}}}
\setlist[enumerate,1]{label={(\roman*)}}
\newlist{thmlist}{enumerate}{1}
\setlist[thmlist]{label=\textup{(\roman{thmlisti})},ref={(\roman{thmlisti})},noitemsep}
\newlist{complist}{enumerate}{1}
\setlist[complist,1]{noitemsep,label=\textbf{(C\arabic*)},leftmargin=2.5\parindent}
\renewcommand{\p@thmlisti}{\perh@ps{\thetheorem}}
\protected\def\perh@ps#1#2{\textup{#1#2}}
\newcommand{\itemrefperh@ps}[2]{\textup{#2}}
\newcommand{\itemref}[1]{\begingroup\let\perh@ps\itemrefperh@ps Part~\ref{#1}\endgroup}
\newcommand{\itemrefs}[2]{\begingroup\let\perh@ps\itemrefperh@ps Parts~\ref{#1} and~\ref{#2}\endgroup}
\renewcommand{\Res}{\vdash_\As}
\newcommand{\Set}{\Gamma}
\renewcommand{\Hf}{\mathcal F}
\declaretheorem[    name=Theorem,    Refname={Theorem,Theorems}, numberwithin=section]{theorem}
\declaretheorem[    name=Lemma,    Refname={Lemma,Lemmas},    sibling=theorem]{lemma}
\declaretheorem[    name=Proposition,    Refname={Prop.,Propositions},    sibling=theorem]{proposition}
\declaretheorem[    name=Corollary,    Refname={Corollary,Corollaries},    sibling=theorem]{corollary}
\declaretheorem[numbered=yes,
    name=Assumption,
    Refname={Assumption,Assumptions}]{assumption}
\theoremstyle{definition}
\declaretheorem[    name=Definition,    Refname={Definition,Definitions},    sibling=theorem]{definition}
\declaretheorem[    name=Example,    Refname={Example,Examples},    sibling=theorem]{example}
\theoremstyle{remark}
\declaretheorem[    name=Remark,    Refname={Remark,Remarks},    sibling=theorem]{remark}
    \declaretheoremstyle[    spaceabove=-4pt,     spacebelow=6pt,     headfont=\it,     bodyfont = \normalfont,    postheadspace=1em,     qed=$\blacksquare$,     headpunct={.}]{myproofstyle} 
\Crefname{theorem}{Thm.}{Theorems}
\Crefname{corollary}{Cor.}{Corollary}
\crefname{proposition}{Prop.}{Propositions}
\Crefname{claim}{Claim}{Claims}
\Crefname{definition}{Def.}{Definitions}
\Crefname{fact}{Fact}{Facts}
\Crefname{conj}{Conjecture}{Conjectures}
\Crefname{example}{Ex.}{Examples}
\Crefname{example}{Rem.}{Remarks}
\Crefname{convention}{Convention}{Conventions}
\Crefname{lemma}{Lem.}{Lemmas}
\Crefname{assumption}{Assumption}{Assumptions}
\Crefname{section}{Sec.}{Sec.}
\Crefname{appendix}{App.}{App.}
\newcommand{\inferbintocl}[5]{\begin{minipage}{12ex}{\bfseries #1}\end{minipage} {
    \begin{minipage}{8ex}{$\infer{#4}{#2 &&& #3}$}\end{minipage}
    \par\noindent #5}}
\newcommand{\inferuntocl}[4]{\begin{minipage}{12ex}{\bfseries #1}\end{minipage} {
    \begin{minipage}{8ex}{$\infer{#3}{#2}$}
    \end{minipage}
    \par\noindent
      #4
    }}
\newcommand{\subparagraph}[1]{\paragraph{#1}}
\newenvironment{fminipage}[1]%
    {\begin{Sbox}\begin{minipage}{#1}}%
  {\end{minipage}\end{Sbox}\fbox{\TheSbox}}
\renewcommand\impliedby{\leftarrow}
\newlength\dlf
\newcommand\alignedbox[3][black!10]{
  &
  \begingroup
  \settowidth\dlf{$\displaystyle #2$}
  \addtolength\dlf{\fboxsep+\fboxrule}
  \hspace{-\dlf}
  \colorbox{#1}{$\displaystyle #2 #3$}
  \endgroup
}
\begin{document}

\title{Initial limit Datalog: a new extensible class of decidable constrained Horn clauses}


\author{Toby Cathcart~Burn \and Luke Ong \and Steven Ramsay \and Dominik Wagner}

\maketitle

\begin{abstract}
  We present \emph{initial limit Datalog}, a new extensible class of constrained Horn clauses for which the satisfiability problem is decidable.
  The class may be viewed as a generalisation to higher-order logic (with a simple restriction on types) of the first-order language \emph{limit \dz{}} (a fragment of Datalog modulo linear integer arithmetic), but can be instantiated with any suitable background theory.
  For example, the fragment is decidable over any countable well-quasi-order with a decidable first-order theory, such as natural number vectors under componentwise linear arithmetic, and words of a bounded, context-free language ordered by the subword relation.
  Formulas of initial limit Datalog have the property that, under some assumptions on the background theory, their satisfiability can be witnessed by a new kind of term model {which} we call \emph{entwined structures}.
  Whilst the set of all models is typically uncountable, the set of all entwined structures is recursively enumerable, and model checking is decidable.
\end{abstract}

\IEEEpeerreviewmaketitle

\section{Introduction}
\label{sec:intro}

\emph{Constrained Horn Clauses} (CHCs) are a class of formulas that have been found to be especially suitable for tasks in automated reasoning.
They are the language of constraint logic programming \cite{JAFFAR1994503}. 
More recently, there has been a concerted effort to exploit the class as a programming-language independent basis for automatic program verification \cite{DBLP:conf/birthday/BjornerGMR15,popl18}.

CHCs are a liberalisation of the class of Horn formulas in which, additionally, clauses may contain \emph{constraints} drawn from a specified first-order \emph{background theory}\footnote{Note: in this work we will assume the background theory has a fixed interpretation, as is common in the satisfiability-modulo-theories literature.}.
This extension preserves many of the good properties of the Horn format, such as the existence of canonical models and the sufficiency of SLD-style derivations, whilst allowing for the expression of domain-specific knowledge in the form of assertions from the background \changed[tcb]{theory}.

Unfortunately, this pleasing combination of expressivity and semantic characterisation comes with an algorithmic cost. 
In general, decidability of the satisfiability problem for a class of CHC depends on the choice of background theory, and for many theories that are typical in automated reasoning (e.g. because they are decidable), the class of CHC is undecidable.
For example, \cite{DBLP:journals/amai/CoxMT92} shows that not only is CHC over linear integer arithmetic undecidable \cite{D72}, but so too CHC over complex, real or rational linear arithmetic.
On the other hand, it is easy to see that CHC over the theory of equality on a finite set has decidable satisfiability.

Since the most promising applications concern theories of infinite structures, it becomes important to identify restrictions on the format that both preserve its essential character and yet guarantee decidability.  
In \cite{DBLP:journals/amai/CoxMT92}, a catalogue of (sub-recursive) complexity results are derived concerning  limitations placed on the use of variables within clauses and the nature of parameter passing.  

An alternative approach, and the starting point for the work in this paper, is the \emph{limit} restriction of the language \emph{limit \dz{}}, which was proposed in \cite{DBLP:conf/ijcai/KaminskiGKMH17} as a foundation for declarative data analysis.
Limit \dz{} can be viewed as a language of first-order CHCs over the theory of linear integer arithmetic, but with the following proviso:
predicates in limit \dz{} (called \emph{limit predicates}) are restricted so as to capture only the minimum (or maximum) numeric values in their unique integer parameter.
This restriction ensures that the satisfiability problem is decidable for this class of CHC,
whilst remaining expressive enough to describe important problems in data analysis  (in particular, one may still describe certain kinds of recursively defined predicates over the integers).

One way to implement the limit predicate restriction is to require that all predicates with an integer parameter are either upwards 
or downwards 
closed with respect to that parameter.  We enforce this by the highlighted clauses in the examples below, taken from \cite{DBLP:conf/ijcai/KaminskiGKMH17}.  

The background theory of these examples is the combination of linear integer arithmetic and the theory of equality over a finite set.  
We assume that the elements of this set can be arranged into a linear order $y_1,y_2,\ldots,y_k$ (which will differ from example to example), described by two constraint formulas (i.e. of the background theory) that we will abbreviate $\mathsf{FIRST}(y_1)$ and $\mathsf{NEXT}( y_n, y_{n+1})$.
\begin{example}[Social networking]
\label{eg:fo-tweet}
In this first example, the finite set describes people who tweet and follow each other's tweets.
Let us suppose we have a constraint formula\footnote{One can also think more specifically of an intensional database predicate.} (i.e. of the background theory) abbreviated by $\mathsf{TH}(x, m)$, indicating the \emph{retweet threshold}.  
That is, asserting that a person $x$ will tweet a (hypothetical) message if at least $m$ of those they follow also tweet it.
Suppose we have a constraint formula $\mathsf{FOLLOWS}(x, y)$, describing when one individual $x$ follows the tweets of another $y$. 
The following clauses constrain a proposition $\mathrm{Tw}\, x$ so that it holds if $x$ tweeted.
The proposition $\mathrm{Nt}\, x\, y\, m$ holds if, out of the people at or before $y$ (according to the ordering on the set of people), at least $m$ people that $x$ follows tweeted.

\begin{align*}
	\mathrm{Nt}\, x\, y\, 0 &\impliedby \mathsf{FIRST}(y)\\
	\mathrm{Nt}\, x\, y\, 1 &\impliedby \mathsf{FOLLOWS}(x, y) \land \mathsf{FIRST}(y) \land \mathrm{Tw}\, y\\
	\mathrm{Nt}\, x\, y\, m &\impliedby \mathrm{Nt}\, x\, y'\, m \land \mathsf{NEXT}(y', y)\\
	\mathrm{Nt}\, x\, y\, (m+1) &\impliedby \mathrm{Nt}\, x\, y'\, m \land \mathsf{FOLLOWS}(x, y)\\ 
	&\qquad \quad \land \mathsf{NEXT}(y', y) \land \mathrm{Tw}\, y\\
	\alignedbox{\mathrm{Nt} \; x \; y \; m}{\impliedby m\le n\land \mathrm{Nt} \; x \; y \; n} \\
	\mathrm{Tw}\, x &\impliedby \mathsf{TH}(x, m) \land \mathrm{Nt}\, x\, y\, n \land m\le n
\end{align*}
\end{example}

\begin{example}[Path counting]
\label{eg:fo-paths}
In this second example, the finite set describes the vertices of a directed acyclic graph and the clauses can be used to reason about the number of paths between two nodes. We assume a constraint formula $\mathsf{EDGE}(x, y)$ indicating that there is an edge from $x$ to $y$.
\begin{align*}
	\mathrm{Np'}\, x\, y\, z\, 0 &\impliedby \mathsf{FIRST}(z)\\
	\mathrm{Np'}\, x\, y\, z\, m &\impliedby \mathsf{FIRST}(z) \land \mathrm{Np}\, z\, y\, m \land \mathsf{EDGE}(x, z)\\
	\mathrm{Np'}\, x\, y\, z\, m &\impliedby \mathsf{NEXT}(z', z) \land \mathrm{Np'}\, x\, y\, z'\, m\\
	\mathrm{Np'}\, x\, y\, z\, (m+n) &\impliedby \mathsf{NEXT}(z', z) \land \mathrm{Np'}\, x\, y\, z'\, m \\
	& \qquad \quad \land \mathsf{EDGE}(x, z) \land \mathrm{Np}\, z\, y\, n\\
	\alignedbox{\mathrm{Np'}\, x\, y\, z\, m}{\impliedby m\le n\land \mathrm{Np'}\, x\, y\, z\, n}\\
	\mathrm{Np}\, x\, y\, m &\impliedby \mathrm{Np'}\, x\, y\, z\, m\\
	\mathrm{Np}\, x\, x\, 1 &\impliedby \mathsf{true}\\
	\alignedbox{\mathrm{Np}\, x\, y\, m}{\impliedby m\le n\land \mathrm{Np}\, x\, y\, n}
\end{align*}
Here, $\mathrm{Np'}\, x\, y\, z\, m$ holds if there are at least $m$ paths of the form $x, w, \cdots ,y$ where $w$ occurs at or before $z$ according to the linear ordering of nodes.
Finally, $\mathrm{Np}\, x\, y\, n$ holds if there are at least $n$ paths from $x$ to $y$.

\end{example}

\subsection*{Contributions}
In this paper, we introduce a significant \changed[lo]{yet} decidable extension of limit \dz{} which we call \emph{initial limit Datalog}.
Our language encompasses generalisations of the original work \cite{DBLP:conf/ijcai/KaminskiGKMH17} along two dimensions and we define a new class of models:

\noindent
\emph{(i) Parametrisation with respect to a wide range of background theories.}  We give a number of abstract conditions on the character of the background theory which, if satisfied, guarantee decidability of the language (\cref{thm:fold}).  Instances of particular note include all countable well-quasi orders (WQOs) with a decidable first-order theory. This contains, for example, the theory of tuples of naturals under component-wise ordering, allowing the use of predicates with more than one natural number argument.

\noindent
\emph{(ii) (Un)decidability at higher type.} We show that the most natural extension of limit \dz{} to higher-order logic, in which clauses can define predicates of arbitrary higher type, already has undecidable satisfiability (\Cref{thm:undecidability}).
Through a careful analysis of the interaction between the typing discipline and model construction, we design a restriction on the types of predicates (automatically satisfied by all first-order predicates) that we call \emph{initial}.
We show that the resulting language, \emph{initial limit Datalog}, regains decidable satisfiability (\Cref{thm:order-main-sat-result}).

\noindent
\emph{(iii) A recursively enumerable set of candidate models.} The solution space for a given set of clauses is typically uncountable, because predicates are interpreted as subsets of the domain.  A key step in proving our decidability results is to show that, remarkably, one can restrict attention to a recursively enumerable class of candidate models.  To handle the higher-order case, we introduce a new representation which we call \emph{entwined structures}, in which the interpretation of a higher \emph{type} may depend on the interpretation of particular \emph{terms} of lower types.  They have many useful properties, and their conception is sufficiently general that we believe they may be of use for obtaining similar results beyond the scope of this paper.

\subsection*{Initial limit Datalog}
The setting for our language is the fragment of higher-order logic known as \emph{higher-order constrained Horn clauses} (HoCHC) \cite{popl18,DBLP:conf/lics/OngW19}.
\emph{Higher-order} constrained Horn clauses allow for the description of  predicates of higher-types (i.e., whose subjects may themselves be predicates).
Such predicates can be described by clauses built from terms of the simply typed $\lambda$-calculus when equipped with the appropriate logical constants.  
As in \cite{popl18,DBLP:conf/lics/OngW19}, we forgo the use of explicit abstraction to simplify the Horn clause format.

As a first example, we demonstrate in \cref{eg:tweet} and \cref{eg:paths} how the first-order limit \dz{} examples above share a common structure which can be factored out into a higher-order recursion combinator $\mathrm{Iter}$ of the following type:
\[
	\mathrm{Iter} : S \to \Z \to (S \to \Z \to o) \to o
\]
Throughout (the examples of) this paper, we will use $S$ to denote the type of a fixed finite set, $o$ as the type of propositions and, by some abuse, $\Z$ as the type of the integers.
This combinator can be defined as follows:
\begin{align*}
	\mathrm{Iter} \, y \, n \, p &\impliedby \mathsf{FIRST}(y) \land p \, y \, n\\
	\mathrm{Iter} \, y \, n \, p &\impliedby \mathsf{NEXT}(y', y) \land p \, y \,k \land \mathrm{Iter}\, y' \, m \, p \land n=k+m\\
	\alignedbox{\mathrm{Iter} \, y \, n \, p}{\impliedby n\le m \land \mathrm{Iter} \, y \, m \, p}
\end{align*}
The proposition $\mathrm{Iter} \, y \, n \, p$ describes iteration over a generic sequence of data items in $S$ from the first item until item $y$, evaluating the predicate $p : S \to \Z \to o$ on each item and summing the associated integers to $n$.  
As in the first-order case, we must implement the limit predicate restriction, so we include the shaded clause to guarantee the (in this case) downwards closure of its integer argument.

\begin{example}[Refactoring social networking]
	\label{eg:tweet}
	Using $\mathrm{Iter}$, the whole of the social network example \cref{eg:fo-tweet}, in which the data items are users, can be encoded more concisely as:
	\begin{align*}
		\mathrm{Inc?}\, x \, y \, n &\impliedby n = 0 \lor \big(\mathsf{FOLLOWS}(x, y) \land \mathrm{Tw} \, y \land n = 1\big)\\
		\alignedbox{\mathrm{Inc?}\, x \, y \, n}{\impliedby n\le m\land \mathrm{Inc?} \, x \, y \, m}\\
		\mathrm{Tw} \, x &\impliedby \mathsf{TH}(x, m) \land \mathrm{Iter} \, y \, n \, (\mathrm{Inc?}\,x) \land m\le n
	\end{align*}

\end{example}
The predicate $\mathrm{Inc?}$, which satisfies the limit restriction, expresses the domain specific reasoning that happens on each iteration, namely that the number of tweeters will either be increased by 0, or by 1 in case $x$ follows some $y$ who tweets the message.

\begin{example}[Refactoring path counting]\label{eg:paths}
	The path counting example \cref{eg:fo-paths} uses a similar iterative structure.  The whole example can be rewritten as:
	\begin{align*}
		\mathrm{NpExt} \, x \, y \, z \, m &\impliedby \big(\mathrm{Np} \, z \, y\, m \land \mathsf{EDGE}(x, z)\big) \lor m = 0\\
		\alignedbox{\mathrm{NpExt} \, x \, y \, z \, m}{\impliedby m\le n\land \mathrm{NpExt} \, x \, y \, z \, n}\\
		\mathrm{Np} \, x \, y \, m &\impliedby \mathrm{Iter}\, z \, m\, (\mathrm{NpExt}\, x\,y) \lor (x=y \land m = 1)\\
		\alignedbox{\mathrm{Np} \, x \, y \, m}{\impliedby m\le n\land \mathrm{Np} \, x \, y \, n}
	\end{align*}
    In this case, the second and fourth clauses ensure that the respective predicates adhere to the limit restriction.

\end{example}

\begin{example}[Generic query] \label{eg:tweet-follows}
An orthogonal benefit of higher-type predicates is to allow the expression of higher-order properties (e.g. properties of the form \emph{for all relations $r$...}).  
Returning to \cref{eg:fo-tweet}, the follows relation was fixed by some first-order constraint formula (or intensional database predicate) $\mathsf{FOLLOWS}(x,y)$.  
Using predicates of higher type, we can instead parametrise the mutually recursive predicates $\mathrm{Nt}$ and $\mathrm{Tw}$ by an \emph{arbitrary} follows relation $f$ of type $S \to S \to o$:
\begin{align*}
	\mathrm{Nt}\, x\, y\, 0\, f &\impliedby \mathsf{FIRST}(y)\\
	\mathrm{Nt}\, x\, y\, 1\, f &\impliedby f \, x \, y \land \mathsf{FIRST}(y) \land \mathrm{Tw}\, y\, f\\
	\mathrm{Nt}\, x\, y\, m\, f &\impliedby \mathrm{Nt}\, x\, y'\, m \land \mathsf{NEXT}(y', y)\\
	\mathrm{Nt}\, x\, y\, (m+1)\, f &\impliedby \mathrm{Nt}\, x\, y'\, m \land f \, x \, y\\ 
	&\qquad \quad \land \mathsf{NEXT}(y', y) \land \mathrm{Tw}\, y\\
	\alignedbox{\mathrm{Nt} \, x \, y \, m \, f}{\impliedby m\le n\land \mathrm{Nt} \, x \, y \, n \, f} \\
	\mathrm{Tw}\, x\, f &\impliedby \mathsf{TH}(x, m) \land \mathrm{Nt} \, x \, y \, n \, f \land m\le n
\end{align*}
This allows us to check that a property of the system holds \emph{independently} of who follows whom.
For example, according to Kaminski et al.'s formulation, nobody will tweet the message if we fix all retweet thresholds at 1.
To verify this, we set $\mathsf{TH}(x,m)$ to the constraint formula $m = 1$ and decide satisfiability of the clauses extended with the following goal:
\[
	\mathit{false} \impliedby \mathrm{Tw} \, x \, f
\]
From the satisfiability of the clauses, we can deduce that there does not exist a choice of an individual $x$ and a followers relation $f$ for which the message would be tweeted.
\end{example}

\cref{eg:tweet,eg:paths,eg:tweet-follows} are not limit \dz{} problems, but they are problems of our generalisation: \emph{initial limit Datalog}.
As well as admitting the definition of higher-order relations, in place of the  theory of integer linear arithmetic we allow for the theory of any preordered set $W$ satisfying certain conditions.

\medskip

\paragraph*{Initial limit Datalog problem and satisfiability} Henceforth let $\WW$ be a preordered set with a decidable first-order theory, such that every upwards closed subset of $\WW$ is definable in the theory.
We consider relational types generated from $\WW$ and any finite set $S$ (abusing notation by naming the types after their interpretations).

\noindent\begin{fminipage}{.965\columnwidth}
An \textbf{\em initial limit Datalog problem} 
is a (finite) set $\Gamma$ of HoCHC clauses over $\WW$ and $S$ 
 such that for every predicate $X : \rho$ in the signature,
with $\rho = \sigma_1 \to \cdots \to \sigma_k \to \boolsort$ of order $n$ (say):
\begin{compactenum}[(i)]
\item $\rho$ is \defn{initial}, meaning $\sigma_j = \WW$ for at most one $j$, and if there is such a $j$ then for all $i < j$, $\order(\sigma_i) < \order(\sigma_j \to \cdots \to \sigma_k \to \boolsort)$; \changed[lo]{moreover each $\sigma_i$ is $S$, or $W$, or initial.}


\item if $\sigma_j = \WW$ for some $j$, then $\Gamma$ contains the \defn{limit clause}
\[
{X \, \overline{z} \, x \, \overline{z'} \; \impliedby \;
x \le y \land X \, \overline{z} \, y \, \overline{z'}}
\]
(and $\rho$ is called an \defn{active type}).
\end{compactenum}
The \defn{satisfiability problem} for {\em initial limit Datalog} asks: given an initial limit Datalog problem $\Gamma$, is it satisfiable (modulo the theory of $W$ and $S$)?
\end{fminipage}\\[2mm]

\tcb{I've removed the comment about upwards vs downwards closed. Since the examples now match the definition next to them, we don't need to say as much, but it might be worth saying something here in addition to what I've mentioned in \cref{rem:directiondoesntmatter}.}


We show in \Cref{sec:undec-hold} that a na\"{i}ve extension to higher order leads to undecidability, but the forgoing examples and those we will present in the sequel all obey a certain discipline in the way that the background type $\WW$ and higher types interact.
This is captured by the \emph{initial} restriction, (i), which requires that the types of terms that may be captured by a partial application are of strictly lower order than the partial application itself.
It is easy to verify that this condition holds for the type of $\mathrm{Iter}$ and one can also see it in the types of our higher-order generalisation of $\mathrm{Nt}$ and (trivially) $\mathrm{Tw}$:
\begin{align*}
	\mathrm{Tw} &: S \to (S \to S \to o) \to o\\
	\mathrm{Nt} &: S \to S \to \Z \to (S \to S \to o) \to o
\end{align*}

Note that all formulas of limit \dz{} already satisfy requirements (i) and (ii); and $\Z$, under the theory of linear integer arithmetic, is an appropriate instantiation of $\WW$.

Parametrisation of initial limit Datalog by the type $\WW$ allows for a variety of interesting background structures beyond integer linear arithmetic.
For example, any \changed[lo]{countable} well-quasi-ordering with a decidable background theory (which must include constants for each element of the structure) satisfies the requirements on $\WW$, such as:
\begin{asparaenum}[a.]
\item Tuples of natural numbers, under componentwise ordering with the theory of linear arithmetic on components.

\item Words of a bounded, context-free language, under the subword order \cite{DBLP:conf/fossacs/KuskeZ19}.
%
%

\item Basic process algebra under the subword order. BPA is an automatic structure, and so, has a decidable first-order theory.
There are other examples in the same vein, e.g., communicating finite-state machines \cite{DBLP:journals/tcs/FinkelS01}.
\end{asparaenum}





The following example is a higher-order instance of initial limit datalog where the preorder $W$ is the WQO of tuples of natural numbers, with the theory of linear arithmetic on components.
Notice that in this case, there may be multiple consecutive parameters of type $\mathbb N$ in a predicate.

\begin{example}[Integration]
\label{eg:summation}
Monotone decreasing functions $\mathbb{N}\to\mathbb{N}$ can be represented by downwards closed subsets of $\mathbb{N}\times\mathbb{N}$: such a function $f$ is uniquely identified by $\{(x,y) : y<f(x)\}$. 
Higher-order initial limit Datalog allows us to define a predicate which computes integrals\footnote{$\mathrm{Integral}$ can equivalently be typed as $\mathbb{N} \times \mathbb{N} \to (\mathbb{N} \times \mathbb{N} \to o) \to o$.} over such functions.
\begin{align*}
	\mathrm{Integral}  &: \mathbb{N} \to \mathbb{N} \to (\mathbb{N} \to \mathbb{N} \to o) \to o\\
	\mathrm{Integral}  \, \mathit{tot} \, \mathit{bd}  \, f & \impliedby  \mathit{tot}=0\\
	\mathrm{Integral}  \, \mathit{tot} \, \mathit{bd}  \, f  & \impliedby  \mathit{tot}=x+y+1 \land \mathrm{Integral}  \, x \, (\mathit{bd} +1) \, f \\
	& \qquad \quad \land f \, \mathit{bd}  \, y\\
	\alignedbox{\mathrm{Integral}  \, \mathit{tot} \, \mathit{bd}  \, f}{\impliedby  \mathit{tot} \leq s \land \mathit{bd}  \leq c \land \mathrm{Integral}  \, s \, c \, f}\\\\
	\mathrm{Exp} &: \mathbb{N} \to \mathbb{N} \to o\\
	\mathrm{Exp} \, m \, n & \impliedby   m = 0 \land n < 128\\
	\mathrm{Exp} \, m \, n & \impliedby  \mathrm{Exp}\, x \, y \land m = x-1 \land n+n < y\\
	\alignedbox{\mathrm{Exp} \, m \, n}{\impliedby  m \le x  \land n \le y \land \mathrm{Exp}\, x \, y}\\
	\\
	\mathit{false} & \impliedby  \mathrm{Integral}\,255\,0\, \mathrm{Exp}
\end{align*}

In the canonical interpretation, $\mathrm{Exp}$ represents the function defined by $f(m) = \lfloor 2^{7-m} \rfloor$ and $\mathrm{Integral}  \, \mathit{tot} \, \mathit{bd}  \, f$ is true if $\mathit{tot}$ is less than or equal to the integral (infinite sum) of the monotone function represented by $f$ from $\mathit{bd} $ to $\infty$. 
\changed[tcb]{(Thus $\max \set{\mathit{tot} \mid \hbox{`$\mathrm{Integral}  \, \mathit{tot} \, 0  \, \mathrm{Exp}$' holds}} = 255$.)

This example is unsatisfiable (there is no consistent interpretation of $\mathrm{Integral}$ and $\mathrm{Exp}$ where $\mathrm{Integral}\,255\,0\, \mathrm{Exp}$ is false), but if the constant 255 is changed to 256, it becomes satisfiable.}
\end{example}









\subsection*{Entwined structures}

The key innovation of our decidability proof is the construction (given $\Gamma$) of a set of candidate models, called \emph{entwined structures}, which satisfy a number of pleasing properties:
\begin{compactenum}[(P1)]
\item The set of entwined structures is r.e.

\item In each order-$n$ entwined structure, the denotation of each (initial) relational type (that occurs in $\Gamma$) of order less than $n$ is finite.

\item There is an algorithm that checks if a given entwined structure models $\Gamma$.

\item There is an entwined structure that models $\Gamma$ if and only if $\Gamma$ is satisfiable.

\end{compactenum}


Entwined structures are built up by induction on order, via a bootstrapping process.
Their name reflects the interplay between the interpretation of terms and types during this process:
the interpretation of a type of order-$n$ (the set from which the interpretations of order-$n$ predicate symbols are chosen) can only be given once the interpretation of the relevant predicate symbols of lower-order types has already been fixed.
A family of structures $\set{\Bs_n}_{n \in \omega}$, indexed by (order) $n$, is \emph{entwined}, if $\Bs_0$ is the structure on the empty signature; and in each $\Bs_n$:
\begin{itemize}
\item Predicate symbols in $\Bs_{n-1}$ (those of the foreground signature of order $<$ $n$) are interpreted as per $\Bs_{n-1}$.

\item Each predicate of an order-$n$ active type $\rho = \sigma_1 \to \cdots \to \sigma_k \to \boolsort$ is interpreted as a function (in the set-theoretic $[\Bs_{n}\sem{\sigma_1} \to \cdots \to \Bs_n\sem{\sigma_k} \to \mathbb{B}]$) monotone in the $\WW$-typed argument.\tcb{We could say something here like "may be chosen freely"}
\end{itemize}
For types $\rho=\tau\to\sigma$ of order less than $n$, $\Bs_n\sem{\rho}$ is  the full function space $\fspace{\Bs_n\sem{\tau}}{\Bs_n\sem{\sigma}}$ if that is finite, otherwise it is the least collection of relational functions allowing it to support the interpretations of predicates assigned by $\Bs_{n-1}$. This results in something similar to a term model. We cannot use the term model because there can be infinitely many terms and therefore uncountably many interpretations of higher-order predicates, but our decidability proof rests on enumeration.

\tcb{In the proof, we are more explicit about frames vs structures i.e. $\Bs\sem{\rho} = \oconfusionf{\Bs_{n-1}}_n\sem{\rho}$. Should we say something about that here to make it clearer for people looking at the proof?}
\lo{No. We should just discuss the key ideas in the introduction. Besides we are running out of space.}




In an unrestricted setting, it would not make sense to interpret all the order-$(n-1)$ active predicates (i.e.~predicates of active type) before interpreting the order-$n$ predicates, because an order-$(n-1)$ active predicate may be passed an argument involving a predicate of order-$n$.

However, thanks to the initial type restriction, if an order-$n$ term $N$ of an active type has an order-$m$ subterm $M$ with $m> n$, then $M$ is a subterm of some $L$ (another subterm of $N$) of type $\sigma$ (say) whose order is less than $n$.
Since $\Bs_n\sem{\sigma}$ is finite (P2), we don't need to know all possible values of $M$ to know all possible values of $N$.

We show decidability (\cref{thm:order-main-sat-result}) by exhibiting two semi-decision procedures---one for proving the existence of a model, and the other for non-existence---and running them in parallel.
The former semi-decision procedure is an immediate consequence of (P1), (P3) and (P4).
The latter is an application of the semi-decidability of HoCHC unsatisfiability, via a refutationally complete resolution proof system ($\Gamma$ is unsatisfiable if, and only if, there is a resolution proof of $\bot$ from $\Gamma$) \cite{DBLP:conf/lics/OngW19}.

\subsubsection*{Outline}

We begin with some technical preliminaries in \cref{sec:prelims} before introducing (higher-order) limit Datalog in \cref{sec:ldz}.
We give a proof that the first-order fragment has a decidable satisfiability problem and show that satisfiability in general is undecidable.
In \cref{sec:order-initiality} we present initial restriction on types, and prove that the initial limit Datalog satisfiability problem is decidable.
In \cref{sec:examples}, we give examples of how first-order limit Datalog problems can be used with the background theory of tuples of naturals, and other well-quasi orderings (WQOs) with a decidable first-order theory. 
After a review of related work (\cref{sec:related}), we conclude and briefly discuss some further directions.

\section{Technical preliminaries}
\label{sec:prelims}


This section introduces 
a restricted form
of higher-order logic (\cref{sec:relhol}), higher-order constrained
Horn clauses (HoCHCs) (\cref{ch:defHoCHC}) and their proof system (\cref{sec:resolution}).

\subsection{Relational higher-order logic}
\label{sec:relhol}
\subsubsection{Syntax}

For a fixed set $\btypes$ (intuitively the types of individuals), the set of \emph{argument types}, \emph{relational types}, \emph{1st-order types} and \emph{types} (generated by $\btypes$) are defined by mutual recursion as follows
\[
\begin{array}{lrclr}
\hbox{\em Argument type} & \tau & ::= & \iota\mid\rho\\
\hbox{\em Relational type} &  \rho&::= & \boolsort\mid\tau\to\rho\\
\hbox{\em 1st-order type} & \sigma_{\FO}&::=&\iota\mid \boolsort\mid\iota\to\sigma_{\FO}\\
\hbox{\em Type} &  \sigma&::=&\rho\mid\sigma_{\FO},
\end{array}
\]
where $\iota\in\btypes$.
We sometimes abbreviate function types $\tau_1\to\cdots\to\tau_n\to\sigma$ to $\overline\tau\to\sigma$.
Intuitively, $\boolsort$ (where $\boolsort \notin \btypes$) is the type of the truth values (or Booleans).
The types $\sigma_{\FO}$ are exactly those of the form $\overline\iota\to\iota$ or $\overline\iota\to o$,
i.e.\ each argument is of some type $\iota_i \in \btypes$.
Moreover, each relational type has the form $\overline\tau\to o$.
We define $\order(\tau_1\to\cdots\to\tau_n\to\sigma)=n$ if $\sigma$ is $\iota$ or $\boolsort$.

\newcommand\sus[2]{{#1}^{(#2)}}
\newcommand{\sigs}{\overline{\Sigma}}

A \emph{type environment} (typically $\Delta$) is a function mapping
\emph{variables} (typically $x,y,z$) to {argument} types;
for $x \in \dom(\Delta)$, we write $x : \tau\in\Delta$ to mean $\Delta(x) = \tau$.
A \emph{signature} (typically $\Sigma, \Xi$) is a set of distinct typed \emph{symbols}
$c\from\sigma$, where $c\nin\dom(\Delta)$.
A signature $\Sigma$ is \emph{1st-order} if $\sigma$ is 1st-order for all $c\from\sigma\in\Sigma$.
We often write $c\in\Sigma$ if $c\from\sigma\in\Sigma$ for some $\sigma$.

The set of \emph{$\Sigma$-pre-terms} is given by
  $M::=x\mid c\mid M \, M$
where $c\in \Sigma$.
We assume that application associates to the left, 
and write $M\,\overline N$ for $M\,N_1\cdots N_n$, assuming implicitly that $M$ is not an application.


The typing judgement $\Delta\vdash M\from\sigma$ is defined by
\[
\begin{array}{c}
    \infer{\Delta\vdash x\from\Delta(x)}{x\in\dom(\Delta)}
    \quad
    \infer{\Delta\vdash c\from \sigma}{c\from\sigma\in\Sigma}
    \quad
    \infer{\Delta\vdash
      M_1M_2\from\sigma_2}{\Delta\vdash
      M_1\from\sigma_1\to\sigma_2&\Delta\vdash M_2\from\sigma_1}
  \end{array}
\]
\dw{Doesn't need to be in the main text (completely standard).} \lo{I agree. If necessary, this can go to the appendix.}

We say that $M$ is a \emph{$\Sigma$-term of type $\sigma$} if $\Delta\vdash M\from\sigma$. A $\Sigma$-term is a \emph{1st-order $\Sigma$-term} if the symbols in its construction are restricted to symbols $c\from\sigma_{\FO}\in\Sigma$
and variables ${x\from\iota}\in\Delta$.

\begin{remark}
  \label{rem:isimple}
   It follows from the definitions that
   each term $\Delta\vdash M\from \overline{\iota}\to\iota$
   can only contain variables of type $\iota_i$ and constants of non-relational 1st-order type (and contains no logical symbols, a similar approach is adopted in \cite{CHRW13}).
\end{remark}

We define a \emph{$\Sigma$-formula} $F$ by
\[
F ::=M\mid F\lor F\mid F\land F\mid \lnot F
\]
where $M$ is any $\Sigma$-term of type $\boolsort$. 
\dw{Is this used?}\lo{Yes; e.g.~when defining atom.}
For a $\Sigma$-term or $\Sigma$-formula $M$ and $\Sigma$-terms $N_1,\ldots,N_n$ and variables
$x_1,\ldots,x_n$ that satisfy $\Delta\vdash N_i\from\Delta(x_i)$, the 
\emph{substitution} $M[N_1/x_1,\ldots,N_n/x_n]$ is defined in the standard way.

\subsubsection{Semantics}
There are two classic semantics for higher-order logic:
\emph{standard} and \emph{Henkin semantics} \cite{DBLP:journals/jsyml/Henkin50}.
In this paper, we will not be concerned with the latter, but the notion of frame is useful.
Assume, for each $\iota\in\btypes$, an associated set $D_\iota$.
Formally, a \defn{frame} $\Hf$ assigns to each type $\sigma$ a nonempty set $\sinti\Hf\sigma$ such that
\begin{enumerate}
\item $\sinti\Hf \iota\defeq D_\iota$ for each $\iota\in\btypes$
\item $\sinti\Hf o\defeq\bool \defeq \{0,1\}$
\item For each $\sigma_1\to\sigma_2$,
$\sinti\Hf{\sigma_1\to\sigma_2}\subseteq[\sinti\Hf{\sigma_1}\to\sinti\Hf{\sigma_2}]$
\end{enumerate}
where $[U \to V]$ is the set of functions from (sets) $U$ to $V$.

\begin{remark}
\label{rem:weak-ca}
Unlike \cite{DBLP:conf/lics/OngW19}, we do not distinguish pre-frame and frame.
Because $\lambda$-abstractions are not part of the HoCHC syntax here, the (weak) comprehension axiom in \cite[p.~3]{DBLP:conf/lics/OngW19} does not apply. 
\end{remark}

%

\begin{example}[Standard frame] 
  We define the \emph{standard frame} $\Sf$ recursively by
  $\sinti{\Sf}
  o\defeq\bool$;
  $\sinti{\Sf}\iota\defeq D_\iota$
  for $\iota\in\btypes$; and
  \begin{align*}
    \sinti {\Sf}{\tau\to\sigma}&\defeq [\sinti {\Sf}\tau\to\sinti {\Sf}\sigma]
  \end{align*}
\end{example}

Let $\Sigma$ be a signature, and $\Hf$ be a frame.
A $(\Sigma,\Hf)$-\defn{structure} $\As$ assigns to each $c\from\sigma\in\Sigma$ an element
$c^\As\in\sinti\Hf\sigma$ and for convenience we set $\sinti\As\sigma\defeq\sinti\Hf\sigma$ 
for types $\sigma$.
A $(\Delta,\Hf)$-\defn{valuation} $\alpha$ is a
function such that for every $x\from\tau\in\Delta$,
$\alpha(x)\in\sinti\Hf\tau$.
For a $(\Delta,\Hf)$-valuation $\alpha$, variable $x$ and $r\in\sinti\Hf{\Delta(x)}$,
$\alpha[x\mapsto r]$ is defined in the usual way.

Let $\As$ be a $(\Sigma,\Hf)$-structure and let $\alpha$ be a $(\Delta, \Hf)$-valuation.
The \emph{denotation} $\sinti{\As} M(\alpha)$ of a $\Sigma$-term $M$ with respect to $\As$ and $\alpha$ is defined recursively by
\begin{align*}
\sinti{\As} x(\alpha)&\defeq \alpha(x)
\quad
\sinti{\As} c(\alpha) \defeq \assigned{c}{\As}
\\
\sinti{\As}{M_1 \, M_2}(\alpha)&\defeq \sinti{\As}{M_1}(\alpha)\big(\sinti{\As}{M_2}(\alpha)\big)
\end{align*}
For each term $\Delta\vdash M\from\sigma$, we have $\sinti{\As} M(\alpha)\in\sinti{\As}\sigma$.
(We will write $\sinti{\As} M(\alpha)$ as $\As^{\Sigma}\sem{M}(\alpha)$ when we need to be explicit about the signature of the $\Sigma$-terms $M$.)

\begin{example}[LIA]
	\label{ex:folstr}
	In this paper, many examples will use the signature of \defn{linear integer arithmetic}\footnote{with the usual types $0,1\from\iota$; ${+}, {-}\from\iota\to\iota\to\iota$ and $\triangleleft\from\iota\to\iota\to o$ for $\triangleleft\;\in\set{<,\leq,=,\neq,\geq,>}$; and we use the common abbreviation $n$ for $\underbrace{1+\cdots+1}_n$, where $1\leq n\in\nat$} (LIA) {(aka Presburger arithmetic)} $\Sigma_{\LIA} \defeq \{0,1,+,-,<,\leq,=,\neq,\geq,>\}$ and its standard model $\As_{\LIA}$.
\end{example}
\lo{NOTATION SUMMARY (useful for readers): Assume signature $\Sigma$, and given
\begin{itemize}
\item frame $\Hf$ (standard frame $\Sf$), which assigns meanings to types
\item $(\Sigma,\Hf)$-structure $\As$ (also $\Bs$), which assigns meanings to constant symbols (and types)
\item $(\Delta,\Hf)$-valuation $\alpha$ (where $\Delta$ is type environment), which assigns meanings to variables
\end{itemize}
the denotation of terms $\sinti{\As}{-}$ (and formulas $\sinti{\Bs}{-}$) is then determined: for each term $\Delta\vdash M\from\sigma$, we have $\sinti{\As} M(\alpha)\in\sinti{\As}\sigma$.}

\subsection{Higher-order constrained Horn clauses (HoCHC)}
\label{ch:defHoCHC}


%
%

%

We explicitly distinguish symbols of the \emph{background} (bg) theory from those---in the \emph{foreground} (fg)---which are constrained by clauses.
This distinction enables a certain semantic separation required by a model construction (\cref{def:oconfusionf}), which is crucial to our decidability result (\cref{thm:order-main-sat-result}). \footnote{Using notations in \cref{def:oconfusionf} and \cref{lem:order-finite}, take ${\le} \in \sigb$. If ${\assigned{\le}{\As}} \in \Hf\sem{\Z\to\Z\to\boolsort}$ then $\Hf\sem{\Z\to\boolsort}$ must be infinite, contradicting \cref{lem:order-finite}.}



\begin{assumption}\label{ass:sigma}
  Henceforth we fix a 1st-order signature $\sigb$,
  and a $(\sigb,\Sf)$-structure $\As$, and a finite signature $\sigf$ \emph{disjoint} from $\sigb$ with only predicate symbols (of a relational type), typically $X, Y, P$ and $R$ and their variants.
  We will write such a pair of signatures as $\sigs = (\sigb, \sigf)$.
\end{assumption}
Intuitively, $\sigb$ and $\As$ correspond to the language and interpretation of the background theory, e.g.\ $\Sigma_{\LIA}$ together with its standard model $\As_{\LIA}$.
In particular, \emph{we (only) consider background theories with a single model}.

Next, we introduce higher-order constrained Horn clauses and their
satisfiability problem \cite{popl18}.

\begin{definition}\label{def:atoms}
By \emph{atom}, we mean background atom or foreground atom.
  \begin{thmlist}
  \item A \emph{background atom} is a 1st-order $\sigb$-term of type $\boolsort$.
  \item A \emph{foreground atom} is a $\sigf$-term of type $\boolsort$.
  \end{thmlist}
\end{definition}

Note that a foreground atom has one of the following forms:
\begin{inparaenum}[(i)]
\item $R\,\overline M$ where $R\in \sigf$, or
\item $x\,\overline M$.
\end{inparaenum}

We use $\phi$ and $A$ (and variants thereof) to refer to background atoms and (general) atoms, respectively.
\tcb{Is $A$ used anywhere except the next line? If so, is it worth mentioning this here?}

\begin{definition}[HoCHC]\label{def:hochc}
  \begin{thmlist}
  \item A \defn{goal clause} (typically $G$) is a disjunction $\neg A_1\lor\cdots\lor\neg A_n$, where each $A_i$ is an atom.
  We write $\bot$ to mean the empty (goal) clause.

  \item If $G$ is a goal clause, $R\in \sigf$ and the variables in $\overline x$ are distinct, then $G\lor R\,\overline x$ is a \defn{definite clause}.
  \item A \defn{higher-order constrained Horn clause (HoCHC)} is a goal or definite clause.
  \end{thmlist}
\end{definition}

Throughout the document, we will often write a clause $\neg A_1\lor\cdots\lor\neg A_n\lor R\,\overline x$ as $R\,\overline x\impliedby A_1\land \cdots\land A_n$. \dw{Why do we need the disjunctive notation at all?} \lo{I think \cref{def:hochc} is fine, and we need to define the $\impliedby$ notation as it is used in examples.}

Next we give an example of HoCHCs from \cref{sec:intro}, explicitly listing the types involved and illustrating the structures.

\tcb{We should copy the updated version from the introduction (or just get rid of this)} \lo{I think it's  good to give an example of HoCHC. There is no need to repeat the clauses though.}
\begin{example}[A system of HoCHCs]
  \label{ex:HoCHC}
  Let $\sigb=\Sigma_{\LIA}\cup\{=_S\}$ and
  $\sigf=\{\mathrm{Iter}\from S \to \Z \to (S \to \Z \to o) \to o,
  \mathrm{Inc?}\from S \to S \to \Z \to o,
  \mathrm{Tw}\from S \to o \}$ and let $\Delta$ be a type environment satisfying
  $\Delta(m)=\Delta(n)=\Delta(k)=\Z$ and
  $\Delta(x)=\Delta(y)=\Delta(y')=S$ and
  $\Delta(p)=S\to\Z\to o$. The system consists of the HoCHCs in \cref{eg:tweet}, and the preceding three that define $\mathrm{Iter}$.
\end{example}

A \emph{$\sigs$-formula} is a formula where each term is either a $\sigf$-term or a 1st-order $\sigb$-term.
Let $\Hf$ be a frame that agrees with the standard frame $\Sf$ on the base types $\btypes$.
Let $\Bs$ be a $(\sigf,\Hf)$-structure and let $\alpha$ be a $(\Delta, \Hf)$-valuation.
The definition of the \emph{denotation} $\sinti{\Bs} F(\alpha)$ of a $\sigs$-formula $F$ with respect
to $\Bs$ and $\alpha$ is defined recursively 
by
  \begin{align*}
  \sinti{\Bs} M(\alpha) &\defeq
  \left\{
  \begin{array}{ll}
  \As^{\sigb}\sem{M}(\alpha_1) & \hbox{if $M$ a 1st-order {$\sigb$-term}}\\
  \Bs^{\sigf}\sem{M}(\alpha) & \hbox{if $M$ a {$\sigf$-term}}\\
  \end{array}
  \right.
  \\
  \sinti{\Bs}{F \land G}(\alpha)&\defeq \min(\sinti{\Bs}{F}(\alpha),\sinti{\Bs}{G}(\alpha))
  \\
  \sinti{\Bs}{F\lor G}(\alpha) &\defeq\max(\sinti{\Bs}{F}(\alpha),\sinti{\Bs}{G}(\alpha))
  \\
  \sinti{\Bs}{\neg F}(\alpha) &\defeq 1-\sinti{\Bs} F(\alpha)
  \end{align*}
where $\alpha_1$ is taken to be some $(\Delta,\Sf)$-valuation that agrees with $\alpha$ on the elements of $\Delta$ of type $\iota$. The choice of such $\alpha_1$ does not matter because it is only used to interpret 1st-order $\sigb$-formulas, which contain no variables from $\Delta$ that do not have type $\iota$ for some $\iota\in\btypes$.

For $\sigs$-formulas $F$, we write $\Bs, \alpha\models F$ if $\sinti\Bs F(\alpha)=1$, and $\Bs\models F$ if $\Bs,\alpha'\models F$ for all $\alpha'$.
We extend $\models$ in the usual way to sets of formulas.

\tcb{since $\As$ is fixed by \cref{ass:sigma}, we should probably drop it here, particularly since we just write "satisfiable" everywhere outside this section} \lo{Agreed, and dropped.}
\begin{definition}
  \label{def:sat}
  Let $\Set$ be a set of HoCHCs, and suppose $\Hf$ is a frame \changed[tcb]{which agrees with $\Sf$ on $\btypes$}.
  \begin{thmlist}
  \item \changed[tcb]{$\Set$ is $(\As,\Hf)$-\defn{satisfiable}  if there
    exists a $(\sigf,\Hf)$-structure $\Bs$ such that $\Bs\models\Set$.}
  \item $\Set$ is $\As$-\defn{satisfiable} (also called \emph{$\As$-standard-satisfiable})
    if it is $(\As,\Sf)$-satisfiable.
  \end{thmlist}
\end{definition}
Whilst the notion of $(\As,\Hf)$-satisfiability may seem obscure, it is sometimes easier to construct $(\sigf,\Hf)$-structures (cf.~\cref{lem:order-emtonore}); and for certain $\Hf$, $\As$-satisfiability implies $(\As,\Hf)$-satisfiability (cf.~\cref{lem:order-smtoem}).

\dw{Perhaps it is worth stating that semantic invariance does not quite apply, but $\As$-unsatisfiability implies $(\As,\Hf)$-unsatisfiability.} \dw{Comment added.}

\begin{definition}\label{def:program}
  A \defn{program} is a  finite set of definite clauses. 
\end{definition}

\begin{remark}
\label{rem:equivalent}
\begin{asparaenum}[(i)]
\item Under the definition of satisfiability, a program can be seen as a conjunction of clauses, universally quantified over variables in $\Delta$.
\item It is often convenient to write logically equivalent formulas such as $X\, x\impliedby \exists y. Y\, x\, y \lor Z\, x$ instead of $\{X\, x \lor \lnot Y\, x\, y,\ X\, z \lor \lnot Z\, z\}$.
We may even write the bodies of such formulas as existentially quantified formulas, over variables that do not appear in the head.
\item If $\sigb$ contains a predicate interpreted by $\As$ as equality (as with $\Sigma_{\LIA}$ and $\As_{\LIA}$), then we may write terms of integer type inside foreground atoms. For example $X\,(x+y+5)$ is equivalent to $X\,z\land (z=x+y+5)$ 
\item Every satisfiable set of clauses $\Gamma$ has, in each frame, a \defn{canonical model} $\Bs$ which arises by saturating under all immediate consequences \cite[Thm.~23]{DBLP:conf/lics/OngW19}.
In the higher-order setting, this model may not be least wrt inclusion, but for any goal clause $G$ we have: $\Gamma\cup \{G\}$ satisfiable iff $\Bs\models G$.
\end{asparaenum}
\end{remark}

\subsection{Resolution proof system}
\label{sec:resolution}
\dw{Didn't we discuss to reduce the discussion (in particular not state the proof system)?}
\tcb{I think we agreed to get rid of the example, but it's useful to have the rules here since they're used in \cref{lem:order-emtonore}}
We use a simple resolution proof system \cite{DBLP:conf/lics/OngW19} consisting of only two rules:
  \begin{enumerate*}
  \item a higher-order version of the usual resolution rule \cite{R65}
    between a goal clause and a definite clause (thus yielding a goal clause) and
  \item a rule to refute certain goal clauses which are not satisfied
    by the model of the background theory (similar to \cite{BGW94}).
  \end{enumerate*}

\medskip
\inferbintocl{Resolution}{\neg R\,\overline M\lor G}{G'\lor R\,\overline x}{G\lor \big(G'[\overline M/\overline x]\big)}


\inferuntocl{Refutation}{\neg x_1\,\overline M_1\lor\cdots\lor \neg x_m\,\overline M_m\lor\neg\phi_1\lor\cdots\lor\neg\phi_n}{\bot}

\medskip

\noindent 
With the latter rule applicable only there exists a valuation $\alpha$ such that $\As,\alpha\models\phi_1\land\cdots\land\phi_n$.
Since variables are implicitly universally quantified, we may assume $x_1\cdots x_m$ are interpreted as $(\bar{y}\mapsto \textsf{false})$.
The rules must be applied modulo  renaming of (free) variables;
we write $\Set'\Res\Set'\cup\{G\}$ if $G$ can be thus derived from the clauses in $\Set'$ using the above rules and $\Res^*$ for the reflexive, transitive closure of $\Res$.
\begin{restatable}[Soundness and Completeness \cite{DBLP:conf/lics/OngW19}]{theorem}{completeness}
  \label{thm:completeness}
  Let $\Set$ be a set of HoCHCs.
  Then $\Set$ is $\As$-unsatisfiable if, and only if, $\Set\Res^*\{\bot\}\cup\Set'$ for some $\Set'$.
\end{restatable}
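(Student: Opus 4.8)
The plan is to prove the two implications separately; soundness is routine and completeness is the substantial part.

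\emph{Soundness} (``$\Set \Res^* \{\bot\}\cup\Set'$ implies $\Set$ is $\As$-unsatisfiable''). By induction on derivation length I would show that if a $(\sigf,\Sf)$-structure $\Bs$ satisfies a clause set $\Set'$ and $\Set'\Res\Set'\cup\{G\}$, then $\Bs\models G$ too. For \emph{Resolution} yielding $G\lor G'[\overline M/\overline x]$ from $\neg R\,\overline M\lor G$ and $G'\lor R\,\overline x$: fix a valuation $\alpha$; if $\Bs,\alpha\models G$ we are done, otherwise $\Bs,\alpha\models R\,\overline M$, so with $\beta=\alpha[\overline x\mapsto\sinti{\Bs}{\overline M}(\alpha)]$ we get $\Bs,\beta\not\models R\,\overline x$, hence $\Bs,\beta\models G'$, and a higher-order substitution lemma ($\sinti{\Bs}{G'[\overline M/\overline x]}(\alpha)=\sinti{\Bs}{G'}(\beta)$) gives $\Bs,\alpha\models G'[\overline M/\overline x]$. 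For \emph{Refutation}, applicable to $G=\neg x_1\,\overline M_1\lor\cdots\lor\neg x_m\,\overline M_m\lor\neg\phi_1\lor\cdots\lor\neg\phi_n$ only when some $\alpha$ satisfies $\As,\alpha\models\phi_1\land\cdots\land\phi_n$: since background atoms mention only $\iota$-typed variables, I can alter $\alpha$ on the relational variables $x_i$ to the constant $\false$ value without disturbing the $\phi_j$, falsifying every $x_i\,\overline M_i$ and hence $G$ --- so a satisfiable set cannot derive $\bot$.

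\emph{Completeness.} I would argue the contrapositive: assuming $\Set$ admits no refutation, exhibit a model. Split $\Set=P\cup\Set_G$ into definite clauses $P$ and goal clauses $\Set_G$. The immediate-consequence operator $T_P$ on the complete lattice of $(\sigf,\Sf)$-structures ordered pointwise by inclusion of predicate denotations is monotone (definite-clause bodies are conjunctions of \emph{positive} atoms), so by Knaster--Tarski it has a least fixpoint $\Bs:=\lfp(T_P)$, obtained by iterating $T_P$ from the everywhere-empty structure, transfinitely if need be. This $\Bs$ --- the canonical model of \cref{rem:equivalent} --- satisfies $P$, and the remaining task is to show $\Bs\models G$ for every $G\in\Set_G$. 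Suppose not: $\Bs,\alpha\not\models G$ for some $G=\neg A_1\lor\cdots\lor\neg A_n$ and valuation $\alpha$, so $\Bs,\alpha\models A_j$ for all $j$. Atoms $A_j$ of the form $x\,\overline M$ with $x$ a variable are harmless --- the Refutation premise pattern permits them, so they are simply carried along --- while every background atom among the $A_j$ is then simultaneously true in $\As$ under the $\iota$-part of $\alpha$. Hence if $G$ has no foreground atom headed by a symbol of $\sigf$, the Refutation rule applies to $G$ directly, contradicting the assumption. Otherwise $G$ contains atoms $R\,\overline M$ with $R\in\sigf$, each true in $\Bs$ under $\alpha$, hence first true at some stage of the iteration, produced from a renamed-apart defining clause $R\,\overline x\impliedby B_1\land\cdots\land B_p$ of $P$ via a valuation whose body atoms hold at strictly earlier stages. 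I would resolve $\neg R\,\overline M$ against that clause (substituting $\overline M$ for $\overline x$) and maintain the invariant ``the current goal clause is falsified by $\Bs$ under some valuation'', with a well-founded multiset-of-stages measure that strictly decreases at each step; well-foundedness of the multiset order then forces termination after \emph{finitely} many Resolution steps at a goal clause containing only variable-headed and background atoms, to which Refutation applies --- a refutation of $\Set$, contradiction. Hence $\Bs\models\Set$.

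I expect the main obstacle to be the simulation in the last argument rather than the overall structure. Two points need care. First, because the argument terms $\overline M$ of a resolved atom may themselves be of higher type, one must check that the simulated Resolution steps stay well-typed, that renaming-apart side conditions are met, and that the substitution lemma holds at all types --- here the restricted, $\lambda$-abstraction-free HoCHC syntax helps. Second, at each step one must choose the defining clause and the witnessing values for its body-only variables so that the ``falsified by $\Bs$'' invariant genuinely propagates, and must confirm that the (possibly transfinite) multiset measure really does decrease even when resolution introduces several new atoms. Getting an induction hypothesis strong enough to pass through partial applications is the real work --- precisely what the development behind \cref{thm:completeness} in \cite{DBLP:conf/lics/OngW19} supplies.
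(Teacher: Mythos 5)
The paper does not actually prove this theorem: it is imported wholesale from \cite{DBLP:conf/lics/OngW19}, and the only argument present in the source is a (commented-out) soundness sketch that matches your soundness half almost line for line. That half of your proposal is fine: the case analysis on the two rules, the substitution lemma, and the trick of sending the relational variables $x_i$ to the everywhere-false function are exactly the standard argument, and the restricted $\lambda$-free syntax makes the substitution lemma unproblematic at all types.

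The completeness half, however, has a genuine gap, and it sits exactly where you lean on Knaster--Tarski. Under the \emph{standard} semantics used in \cref{def:sat}, the immediate-consequence operator $T_P$ is \emph{not} monotone on the lattice of $(\sigf,\Sf)$-structures ordered pointwise: foreground symbols may occur inside the \emph{arguments} of other atoms (e.g.\ $\mathrm{Iter}\,y\,n\,(\mathrm{Inc?}\,x)$), and since $\Sf$ interprets higher types by full function spaces, the outer predicate's current interpretation need not be monotone in that argument, so enlarging one predicate can falsify a body atom. This is precisely the well-known failure of the least-model property for HoCHC in standard semantics; the paper itself flags it in \cref{rem:equivalent}, where the canonical model is obtained by ``saturating under all immediate consequences'' and is explicitly \emph{not} least w.r.t.\ inclusion. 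So the object $\lfp(T_P)$ your argument is built on need not exist. A second, related problem is your stage-based simulation: even if you repair the construction by transfinite saturation, an atom can become true \emph{first} at a limit stage without any single earlier stage justifying it (its arguments may mention foreground symbols whose limit interpretations are needed), so the claim ``produced from a defining clause whose body atoms hold at strictly earlier stages'' and the ensuing multiset-measure decrease break down. The proof in \cite{DBLP:conf/lics/OngW19} avoids both issues by a different route: refutational completeness is established relative to a semantics in which finite approximation is available (monotone/continuous-style canonical models, their Thm.~23), and then transferred to $\As$-standard-satisfiability by a semantic-invariance theorem showing that satisfiability of HoCHC coincides across these semantics. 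Without that detour (or an equivalent), your sketch does not go through.
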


It follows that a set of HoCHCs is $\As$-satisfiable if, and only if, it cannot be refuted by the proof system.

Consequently, the resolution proof system gives rise to a semi-decision procedure for the (standard) $\As$-unsatisfiability problem
provided the consistency\footnote{i.e.\ whether there exists a valuation $\alpha$ such that $\As,\alpha\models\phi_1\land\cdots\land\phi_n$} of conjunctions of atoms in the background theory is semi-decidable.

\section{Higher-order limit Datalog}
\label{sec:ldz}
In this section, we describe the limit restriction on HoCHC programs. We discuss the first-order fragment with this restriction, showing in \cref{sec:decidable-fold} that its satisfiability problem  is decidable. Then, in \cref{sec:undec-hold}, we show that this does not hold for higher-order problems, motivating the restrictions described in the rest of this paper.

To begin, we need some properties of the background theory, so we extend \cref{ass:sigma} by
\begin{assumption}\label{ass:ordering}
Henceforth fix some set $\WW$, a 1st-order signature $\Sigma_\WW$ and a $(\Sigma_\WW, \Sf)$-structure $\As_\WW$ such that the first-order theory of $(\Sigma_\WW,\As_\WW)$ is decidable, ${\le}\in{\Sigma_\WW}$, $\assigned{\le}{\As_\WW}$ is a preorder on $\WW$, and for each upset $X$ (i.e.~a subset of $\WW$ such that if $x\in X$ and $x\assigned{\le}{\As_\WW}y$ then $y \in X$), there is a $\Sigma_\WW$-formula $\phi(x)$ which expresses membership of $X$.

 \tcb{If $\assigned{\le}{\As_\WW}$ is not a preorder (and is not used anywhere except limit clauses), the satisfiability of the resulting program under $\As_\WW$ will be unchanged by changing $\assigned{\le}{\As_\WW}$ to its reflexive transitive closure (making it a preorder).}

Moreover fix some finite set $S$.
We strengthen \cref{ass:sigma} by asserting that:
  \begin{itemize}
  \item
  $\sigb := \Sigma_{\WW} \cup \{ {\eqS} : S \to S \to o \} \cup \{ s : S \mid s \in S\}$
  \item $\assigned{c}{\As}:= \assigned{c}{\As_{\WW}}$ if ${c : \sigma} \in \Sigma_{\WW}$; $\assigned{s}{\As} := s$ if $s \in S$; and $\assigned{(\eqS)}{\As}$ is the standard equality between elements of $S$. \qed
    \end{itemize}
\end{assumption}
Note that the constraints on $\WW$ imply that there are countably many upsets.
Examples of such structures include the integers with $\As_{\LIA}$ (the upsets are either $\Z$, $\emptyset$ or $\{x:x\ge k\}$ for some $k\in\Z$, each of which can easily be described by a formula) and any countable well-quasi-ordering (WQO) with a decidable background theory (for example, tuples of naturals under component-wise ordering also with the theory of linear integer arithmetic).
Also note that any predicate on $S$ can be expressed in terms of $\eqS$, so our examples may freely make use of other predicates.

Recall that a \emph{well-quasi-ordering} (WQO) \cite{SchmitzS2017} is a quasi-order $(W, \leq)$ such that every infinite sequence $w_1, w_2, \cdots$ contains an increasing pair: $w_i \leq w_j$ for some $i < j$.
To see that all upsets of a countable WQO $(\WW, \leq)$ are expressible, note that any upset $X\subseteq \WW$ has a finite number of minimal elements $m_1,\cdots,m_n$ (say), hence $X$ can be described as $\{w\in\WW : \bigvee_{i=1}^n w\ge m_i\}$. \changed[tcb]{Unfortunately, this does mean that all elements of $\WW$ must be constants in background theory, which makes it harder to obtain decidability (for example, the subword order is a WQO, but with constants, even the $\exists$-theory becomes undecidable \cite{DBLP:conf/lics/HalfonSZ17,DBLP:conf/fossacs/KuskeZ19})}.



\begin{remark} \label{rem:directiondoesntmatter}
	Note that the converse of a preorder is another preorder, and upsets under one are downsets (the complements of upsets) under the other.
  This means that if \cref{ass:ordering} holds for a relation, it also holds for its converse.
  We will make use of this by using upwards closed predicates in the definition below and in the proofs for technical convenience, despite the examples in \cref{sec:intro} using downwards closed predicates.
\end{remark}
\begin{definition} \label{def:hold} 
\begin{asparaenum}[(i)]
\item An \defn{(upwards) limit Datalog problem} $\Gamma$ is a finite set of HoCHC clauses over a signature $\sigs = (\sigb, \sigf)$,
\changed[tcb]{compatible with \cref{ass:sigma,ass:ordering}}, such that for each $X:\rho\in \sigf$, $\rho$ contains at most one argument of type $\WW$,
and if $\rho = \tau_1\to \ldots \to \tau_n\to \WW \to \tau_{n+1}\to\ldots \tau_{n+m}\to \boolsort$ then $\Gamma$ contains a \emph{limit clause}:
\[
X \, \overline{z} \, x \, \overline{z'} \; \impliedby \;
y \le x \land X \, \overline{z} \, y \, \overline{z'}
\]
writing $\overline{z}$ and $\overline{z'}$ for $z_1 \, \ldots \, z_n$ and $z_{n+1} \, \ldots \, z_{n+m}$ respectively.
\item A \defn{first-order limit Datalog problem} $\Gamma$ is a limit Datalog problem where for each $X:\rho\in \sigf$, it is the case that $\rho\in\sigma_{\FO}$,
and no atom that occurs in $\Gamma$ is headed by a variable.
\item The \defn{satisfiability problem} for limit Datalog asks: given a limit Datalog problem $\Gamma$, is it $\As$-satisfiable?.
\end{asparaenum}
\end{definition}

A key idea of limit Datalog is that predicates with $\WW$-typed arguments must be interpreted as sets that are closed upward with respect to that argument.  Consequently, a proposition $X\,\overline{z}\,y\,\overline{z'}$ asserts only that $X$ holds of ``at least $y$'' (i.e., $X$ is a min-predicate in the sense of \cite{DBLP:conf/ijcai/KaminskiGKMH17}).


\paragraph*{First-order limit \dz}
\cite{DBLP:conf/ijcai/KaminskiGKMH17} describe \firstdef{first-order limit \dz} which is first-order limit Datalog over linear integer arithmetic. Examples of this (\cref{eg:fo-tweet,eg:fo-paths}) are given in \cref{sec:intro}.

\begin{remark}
\cite{DBLP:conf/ijcai/KaminskiGKMH17} also allow predicates defining finite sets of integers;
and both min- and max-predicates;
and multiplication by constants, and by integers from fixed finite sets. They show that a limit \dz\ problem with these features can be transformed into one without them.
\end{remark}

%
%
%
%




First-order limit \dz{} is motivated by aggregation in declarative data analysis, which is typified by its requirements for recursion and linear integer arithmetic.
In declarative data analysis, the emphasis is on giving a specification of the required output rather than instructions on how to achieve it.
Such an analysis is enabled by a declarative language, and experience suggests that support for high-level programming over collection types (e.g. list comprehensions, map, reduce) is particularly beneficial \cite{ACCEHS10}. \tcb{We certainly can't support lists of integers (we can't even support pairs).}
Consequently, a higher-order foundation, such as HoCHC, may be particularly appropriate.

Of course higher-order programming is most important for larger codebases where it can be reused many times,
but \cref{eg:paths,eg:tweet} show that already the (first-order) examples given in \cite{DBLP:conf/ijcai/KaminskiGKMH17} have a shared structure that can be factored out using a higher-order combinator.





\subsection{Decidability at order 1}\label{sec:decidable-fold}

A key result of \cite{DBLP:conf/ijcai/KaminskiGKMH17} is that the decision problem for the first-order language is decidable.
We give an alternative proof of this theorem extended to first-order limit Datalog (allowing for structures other than linear integer arithmetic) which is helpful when understanding similar proofs in the sections that follow.

\begin{theorem}\label{thm:fold}
The satisfiability problem for first-order (upwards) limit Datalog is decidable.
\end{theorem}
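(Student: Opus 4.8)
The plan is to combine two semi-decision procedures --- one certifying satisfiability, the other unsatisfiability --- into a decision procedure; this is the order-$1$ special case of the strategy used later for the higher-order language, and here it is much simpler because no nontrivial higher types occur. The crucial observation is that, by \cref{ass:ordering}, every upset of $\WW$ is definable by a $\Sigma_\WW$-formula, and there are only countably many $\Sigma_\WW$-formulas, so $\WW$ has only countably many upsets. Fix a first-order limit Datalog problem $\Gamma$ over $\sigs = (\sigb,\sigf)$. Each $X : \sigma_1 \to \cdots \to \sigma_k \to o$ in $\sigf$ has each $\sigma_i$ equal to $S$ or to $\WW$, with at most one $\sigma_j = \WW$; so, after fixing the finitely many $S$-typed arguments of $X$ to a tuple $\overline s$, the interpretation of $X$ in a standard-frame structure restricts to a subset of $\WW$ in its (unique, if any) $\WW$-argument. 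Call a $(\sigf,\Sf)$-structure \emph{presentable} if every such restriction is definable by a $\Sigma_\WW$-formula (for $X$ with no $\WW$-argument, read this as: $X$ is determined by a finite family of truth values indexed by $S^k$, recorded by a quantifier-free $\eqS$-formula); since $S$ is finite, a presentable structure is given by a finite tuple of formulas, so the presentable structures form an r.e.\ set. Now every model $\Bs$ of $\Gamma$ in $\Sf$ is presentable: the limit clause for each $X$ with a $\WW$-argument forces the relevant restrictions of $X^\Bs$ to be upsets of $\WW$, and upsets are definable by assumption. Hence $\Gamma$ is $\As$-satisfiable if and only if some presentable structure models $\Gamma$.

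Next I would verify that model checking a presentable structure is decidable. Given $\Bs$ via its defining formulas, replace every foreground atom $R\,\overline M$ occurring in $\Gamma$ by the corresponding instance of its definition, case-splitting on the values of the $S$-typed arguments of $R$ using $\eqS$ and the constants $s \in S$; since $\Gamma$ is first-order and no atom is headed by a variable, $\overline M$ is a tuple of $1$st-order $\sigb$-terms (variables of type $S$ or $\WW$, and --- via \cref{rem:equivalent} --- $\sigb$-terms of type $\WW$), so the result is a $\sigb$-formula. Universally closing each clause over its variables and replacing each quantifier over an $S$-typed variable by the finite conjunction/disjunction over the elements of $S$, we obtain a single first-order $\Sigma_\WW$-sentence that is true in $\As_\WW$ if and only if $\Bs \models \Gamma$; its truth is decidable because the first-order theory of $(\Sigma_\WW,\As_\WW)$ is decidable. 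Enumerating presentable structures and model checking each therefore yields a procedure that halts exactly when $\Gamma$ is $\As$-satisfiable.

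For the converse direction I would invoke \cref{thm:completeness}: $\Gamma$ is $\As$-unsatisfiable if and only if $\Gamma \Res^* \{\bot\} \cup \Gamma'$ for some $\Gamma'$. The only side condition, in the Refutation rule, is the satisfiability in $\As$ of a conjunction of background atoms, which is decidable for the reasons above (eliminate the finitely many $S$-typed variables, then use decidability of the first-order theory of $\WW$); so resolution derivations can be effectively enumerated, giving a semi-decision procedure for $\As$-unsatisfiability. Running it in parallel with the enumeration of presentable structures decides the problem.

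The main obstacle is more a matter of care than of depth: it lies in the model-checking reduction, where one must check that substituting a $\Sigma_\WW$-definition of an upset into a foreground atom never reintroduces a higher type or a non-first-order quantifier --- which is precisely what the first-order restriction (all $\sigf$-types in $\sigma_{\FO}$, no variable-headed atoms) guarantees --- and that the finitely many resulting constraints, once the finite sort $S$ has been expanded away, lie inside the decidable first-order fragment of $\WW$. Everything else is routine.
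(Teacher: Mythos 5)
Your proposal is correct and follows essentially the same route as the paper's own proof: enumerate the (recursively enumerable) candidate models given by $\Sigma_\WW$-formula presentations of the upward-closed restrictions of each predicate at each tuple of $S$-elements, decide model checking by grounding the $S$-typed variables and substituting the defining formulas so that the question reduces to the decidable first-order theory of $\WW$, and pair this with the resolution-based semi-decision procedure for unsatisfiability from the completeness theorem. Your extra care about predicates with no $\WW$-argument and about the side condition of the Refutation rule only makes explicit what the paper leaves implicit.
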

\begin{proof}
Take a first-order limit Datalog problem $\Gamma$.
It follows from  \cref{def:hold} that for any $(\sigf,\Sf)$-structure $\Bs$ such that $\Bs\models \Gamma$, predicate $X:S^n \to \WW \to S^m\to \boolsort \in \sigf$, and tuples of constants  $\overline{s}, \overline{s'}\subseteq S$, the set \[
U = \{w \in \WW \mid \sinti\Bs{X \, \overline{s} \, x \, \overline{s'}}([x \mapsto w])=1\}
\]
is upwards closed. By \cref{ass:ordering}, there exists a 1st-order formula $\phi_{X,\overline{s}, \overline{s'}}(x)$ such that
\[
U = \{w \in \WW \mid \sinti{\As_\WW}{\phi_{X,\overline{s}, \overline{s'}}(x)}([x\mapsto w])\}.
\]

As there are finitely many predicates, finitely many tuples of elements of $S$ and countably many such formulas $\phi$, we have an r.e.~set of candidate models.
Given a \changed[sr]{$(\sigf,\Sf)$-structure} $\Bs$ of this form, we may ground all instances of variables from $S$, then substitute formulas $\phi_{X,\overline{s}, \overline{s'}}$ as appropriate, removing all instances of predicate symbols. Since the 1st-order theory of $(\WW,\As_\WW)$ is decidable, we can decide if $\Bs\models\Gamma$.


If $\Gamma$ is satisfiable, we can find such a structure by enumeration.
If not, then resolution (\cref{thm:completeness}) can prove that.
\end{proof}

\begin{corollary}
	The satisfiability problem for first-order downwards limit Datalog is decidable.
\end{corollary}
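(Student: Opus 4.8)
The plan is to obtain this immediately from \cref{thm:fold} by passing to the converse preorder. A \emph{downwards} limit Datalog problem is exactly as in \cref{def:hold} except that the limit clause attached to a predicate $X$ with a $\WW$-typed argument has the form $X\,\overline z\,x\,\overline{z'} \impliedby x \le y \land X\,\overline z\,y\,\overline{z'}$ (forcing $X$ to be \emph{downward} closed in that argument, as in the examples of \cref{sec:intro}). The observation to exploit is that, read over the reversed structure, such a problem becomes an ordinary (upwards) first-order limit Datalog problem, to which \cref{thm:fold} directly applies.

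Concretely, I would proceed as follows. (1) Extend $\Sigma_\WW$ by a symbol $\ge$ (if not already present with the right meaning) and extend $\As_\WW$ by interpreting $\ge$ as the converse of $\assigned{\le}{\As_\WW}$, leaving $\le$ and everything else in place; since $\ge$ is then $\Sigma_\WW$-definable by the formula $y \le x$, this is a definitional extension, so the first-order theory remains decidable, and the interpretation of $\ge$ is again a preorder. (2) Check that this extended structure, now with $\ge$ in the role of the distinguished preorder, meets \cref{ass:ordering}: by \cref{rem:directiondoesntmatter} the upsets of $\ge$ are precisely the downsets of $\assigned{\le}{\As_\WW}$, and every downset is the complement of an upset, so negating the $\Sigma_\WW$-formula that \cref{ass:ordering} supplies for that upset yields a formula defining the downset. (3) Given a first-order downwards limit Datalog problem $\Gamma$, form $\Gamma'$ by replacing, in each downward limit clause $X\,\overline z\,x\,\overline{z'} \impliedby x \le y \land X\,\overline z\,y\,\overline{z'}$, the atom $x \le y$ with $y \ge x$, and leaving all other clauses untouched; since the extended $\As$ satisfies $\forall x\,y.\,(x \le y \leftrightarrow y \ge x)$, a $(\sigf,\Sf)$-structure models $\Gamma$ iff it models $\Gamma'$, so $\Gamma$ is $\As$-satisfiable iff $\Gamma'$ is. (4) Note that $\Gamma'$ is now, syntactically, a first-order (upwards) limit Datalog problem over the reversed structure — its limit clauses have exactly the shape required by \cref{def:hold} with $\ge$ in the role of $\le$, and the first-order-type and no-variable-head restrictions are plainly preserved by the rewriting — and conclude by \cref{thm:fold}.

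I do not expect any genuine obstacle: the corollary is essentially bookkeeping on top of \cref{thm:fold} and \cref{rem:directiondoesntmatter}. The only step that merits a moment's care is (2), namely that \cref{ass:ordering} is stable under reversing the preorder — decidability of the first-order theory, because the converse relation is definable; and definability of every downset, because downsets are exactly the complements of upsets — which is precisely what \cref{rem:directiondoesntmatter} records.
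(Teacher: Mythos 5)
Your argument is correct, and it is essentially the paper's own: the paper proves the corollary by observing that the proof of \cref{thm:fold} relies only on first-order definability of the relevant closed sets, and that every downward closed set is definable as the negation of the formula for its complementary upset, which is exactly the content of \cref{rem:directiondoesntmatter} that you invoke. You merely package this as a black-box reduction to \cref{thm:fold} over the converse preorder (via a definitional extension with $\ge$) rather than re-running the proof with downsets, a harmless difference in presentation.
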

\begin{proof}
Although the above proof covers upwards limit Datalog, it only relies on the fact that upwards closed sets are expressible as 1st-order formulas.
Since the complement of every downwards closed set $D$ is an upwards closed set $U$, $D$ is described by the negation of the formula describing $U$.
Thus the proof also holds for downwards limit Datalog.
\end{proof}

\subsection{Undecidability in general}\label{sec:undec-hold}

Unlike the first-order case, higher-order limit Datalog in general is undecidable\footnote{The proof given here covers integers with linear integer arithmetic. A variant works for naturals, but higher-order limit Datalog is not undecidable for all structures $(\WW,\As_\WW)$.}, which can be proved by demonstrating that multiplication, hence Diophantine equations, is definable. 

The idea is to use a  pair of terms of type $\Z \to \boolsort$ to represent an integer.
Fix a higher-order limit \dz{} program $\Gamma$ and let $\Bs$ be its canonical model.
For an integer $k$, we write $\sem{(M,N)} \equiv k$ just if $\Bs\sem{M} = \{x : x \ge k\}$ and $\Bs\sem{N} = \{x : x \ge -k\}$. 
This ensures that, for any $n \in \Z$, $\Bs\sem{M\, n \wedge N\, (-n)} = 1$ iff $n = k$.
Then we say that a partial function $f : \Z^m \to \Z$ is \firstdef{definable in $\Gamma$} just if there exist two closed terms $M_1$ and $M_2$ of type
\[
  \underbrace{(\Z \to \boolsort) \to \cdots \to (\Z \to \boolsort)}_{2m\text{-times}} \to (\Z \to \boolsort)
\]
such that: if for each $i \in \{1, \ldots, m\}$, $\sem{(P_i,P_i')} \equiv k_i$ then
\[\begin{array}{ll}
& \sem{(M_1\, P_1\, P_1' \cdots{} P_m\, P_m', M_2\, P_1\, P_1' \cdots{} P_m\, P_m')} \\
\equiv & f(k_1,\,\ldots,\,k_m).
\end{array}
\]
\begin{example}[Addition]
	\label{eg:addition}
Consider the following program which defines addition and the constant 5. 
\begin{align*}
  \mathrm{Add}_1, \mathrm{Add}_2 &: \sigma \to \sigma \to \sigma \to \sigma \to \Z \to o\\
  \mathrm{I}_{51}, \mathrm{I}_{52} &: \sigma \quad \hbox{where }\sigma = \Z \to o
\end{align*}
\[
\begin{array}{l}
  \mathrm{Add}_1\, f_1\,f_2\,g_1\,g_2\,x \hspace{-10em}
  \\ \qquad\ \ \impliedby
  f_1\,y\land f_2\,(-y) \land g_1\,z \land g_2\,(-z)\land  x\ge y+z\\
  \mathrm{Add}_2\, f_1\,f_2\,g_1\,g_2\,x
  \\ \qquad\ \ \impliedby
  f_1\,y\land f_2\,(-y) \land g_1\,z \land g_2\,(-z)\land x\ge -(y+z)\\
  \mathrm{I}_{51}\, x
  \impliedby
  x\ge 5 \\
  \mathrm{I}_{52}\, x
  \impliedby
  x\ge -5
\end{array}
\]
\changed[sr]{
In the canonical model of this program,
\[
\mathrm{Add}_1\, \mathrm{I}_{51}\, \mathrm{I}_{52}\, \mathrm{I}_{51}\, \mathrm{I}_{52}\, x \wedge \mathrm{Add}_2\, \mathrm{I}_{51}\, \mathrm{I}_{52}\, \mathrm{I}_{51}\, \mathrm{I}_{52}\, (-x)
\]
would hold exactly when $x = 10$.
}
This means that the pair of partially applied functions $\mathrm{Add}_1\, \mathrm{I}_{51}\, \mathrm{I}_{52}\, \mathrm{I}_{51}\, \mathrm{I}_{52}$ and $\mathrm{Add}_2\, \mathrm{I}_{51}\, \mathrm{I}_{52}\, \mathrm{I}_{51}\, \mathrm{I}_{52}$ can be used as arguments to other functions; for example
\[
\mathrm{Add}_1\, (\mathrm{Add}_1\, \mathrm{I}_{51}\, \mathrm{I}_{52}\, \mathrm{I}_{51}\, \mathrm{I}_{52}) \, (\mathrm{Add}_2\, \mathrm{I}_{51}\, \mathrm{I}_{52}\, \mathrm{I}_{51}\, \mathrm{I}_{52}) \, \mathrm{I}_{51}\, \mathrm{I}_{52}\, x
\]
would hold for $x\ge 15$.
\end{example}

In App. \ref{apx:sec:ldz}, we give another example of how functions may be composed, and recursion can work, by defining multiplication.
With this we can define a goal clause corresponding to any Diophantine equation, in such a way that the program as a whole is satisfiable iff the equation has a solution.
Consequently:
\begin{theorem}[Undecidability]\label{thm:undecidability}
The satisfiability problem for higher-order limit \dz\ is undecidable.
\end{theorem}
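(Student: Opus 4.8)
The plan is to reduce Hilbert's Tenth Problem to the satisfiability problem for higher-order limit \dz{}, building on the machinery already set up in the excerpt. The representation is fixed: an integer $k$ is encoded by a pair of closed terms $(M,N)$ of type $\Z\to o$ with $\Bs\sem{M}=\{x:x\ge k\}$ and $\Bs\sem{N}=\{x:x\ge -k\}$ in the canonical model $\Bs$, so that $\Bs\sem{M\,n\wedge N\,(-n)}=1$ iff $n=k$. The two shaded limit clauses for each predicate of type $\cdots\to\Z\to o$ force exactly the required upward closure in the $\Z$-argument, which is what makes this representation consistent with being a limit \dz{} program. \Cref{eg:addition} already shows that addition is definable in this sense and, crucially, that the partially applied terms $\mathrm{Add}_i\,\overline P$ are themselves usable as arguments of type $\Z\to o$, so encoded integers compose.

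First I would define, by the same pattern, the constant $0$ and $1$ (trivial, as with $\mathrm{I}_{51},\mathrm{I}_{52}$), and then multiplication. The natural route is to define $\mathrm{Mult}$ by recursion on one argument: to compute $k_1\cdot k_2$, iterate "add $k_2$" $k_1$-many times, using the addition combinator as the step. Because we only have upward-closed predicates and no primitive notion of "the value is exactly $k$", the recursion has to be phrased carefully --- one uses a helper predicate of type (roughly) $(\Z\to o)^{2}\to(\Z\to o)^{2}\to(\Z\to o)^{2}\to(\Z\to o)^{2}\to\Z\to o$ whose intended meaning in the canonical model is "the running total is at least $x$, where we have so far added $k_2$ to itself the number of times dictated by the first encoded argument", decrementing that count via a test-and-decrement step and terminating at $0$. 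This is exactly the content deferred to \App~\ref{apx:sec:ldz}; I would carry it out there. One then checks (in the canonical model, using \cref{rem:equivalent}(iv): $\Gamma\cup\{G\}$ satisfiable iff $\Bs\models G$) that the defining clauses pin down $\mathrm{Mult}$ to compute products, and hence that every polynomial $p(x_1,\dots,x_m)$ with integer coefficients is definable as a pair of closed terms $(M_1^p,M_2^p)$ of the stated type by composing $\mathrm{Add}$, $\mathrm{Mult}$ and the constant combinators.

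Next, given a Diophantine equation $p(x_1,\dots,x_m)=q(x_1,\dots,x_m)$, I would build a program $\Gamma_{p,q}$ consisting of all the defining clauses above together with a predicate $\mathrm{Sol}:(\Z\to o)^{2m}\to o$ expressing, for encoded inputs $(P_i,P_i')\equiv k_i$, that $p(\overline k)=q(\overline k)$ --- concretely $\mathrm{Sol}\,\overline f\impliedby (M_1^p\,\overline f)\,y\wedge(M_2^p\,\overline f)\,(-y)\wedge(M_1^q\,\overline f)\,y\wedge(M_2^q\,\overline f)\,(-y)$ forces the common value $y$ to be simultaneously $p(\overline k)$ and $q(\overline k)$ --- and a goal clause $\mathit{false}\impliedby \mathrm{Sol}\,\overline f$. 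Since encoded integers must be among the closed terms and the foreground predicate $\mathrm{Sol}$ ranges over all of them, one argues: if the equation has an integer solution $\overline k$, then instantiating $\overline f$ by the encoding terms for $\overline k$ falsifies the goal in the canonical model, so $\Gamma_{p,q}$ is unsatisfiable; conversely if $\Gamma_{p,q}$ is unsatisfiable then by \cref{thm:completeness} resolution derives $\bot$, which (tracing the refutation and the definability facts) yields an actual solution. Hence $\Gamma_{p,q}$ is satisfiable iff $p=q$ has \emph{no} integer solution, and decidability of satisfiability would decide Hilbert's Tenth Problem over $\Z$, a contradiction --- so satisfiability for higher-order limit \dz{} is undecidable.

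The main obstacle I anticipate is getting the recursive definition of multiplication right within the limit discipline: every auxiliary predicate with a $\Z$-argument must be upward-closed in it and must carry the mandatory limit clause, and one has to check that this closure does not accidentally collapse the intended "exact value" semantics of the encoding or make the recursion diverge to $\Z$ (i.e. to the always-true predicate). Concretely, one must verify that in the canonical (least-consequence) model each intended encoding term genuinely denotes an upset of the form $\{x:x\ge k\}$ rather than all of $\Z$, which requires a small induction on the recursion depth showing the "test-and-decrement" step terminates and the base case fixes a finite bound. Everything else --- composing polynomials, the final reduction, and the soundness/completeness appeal --- is routine given the groundwork already laid in \cref{eg:addition} and \cref{thm:completeness}.
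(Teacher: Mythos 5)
Your overall strategy is exactly the paper's: represent an integer $k$ by a pair of closed terms of type $\Z\to o$ denoting $\{x:x\ge k\}$ and $\{x:x\ge -k\}$, define addition as in \cref{eg:addition}, define multiplication by a recursion that increments/decrements one encoded argument (this is precisely the $\mathrm{Mul}_i$/$\mathrm{Inc}_i$/$\mathrm{Dec}_i$ construction of \cref{eg:multiplication}), and then reduce solvability of Diophantine equations, which is what the paper does. So the route and the key definability lemma coincide; the difference is only in how the final goal clause is phrased, and that is where your argument has a genuine hole.

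Your goal clause $\mathit{false}\impliedby \mathrm{Sol}\,\overline f$ quantifies the higher-order variables $\overline f$ over the \emph{entire} standard function space, not over integer encodings, and the limit clauses constrain only foreground predicates, not variables. The forward direction (solution $\Rightarrow$ unsatisfiable) is fine, but the converse fails. Concretely, take the unsolvable equation $2x=1$, so $M^p_1\,\overline f=\mathrm{Add}_1\,f_1\,f_2\,f_1\,f_2$ and $M^q_i$ are the constant-$1$ encodings. Instantiate $f_1=f_2=\top$ (the always-true element of $[\Z\to\mathbb{B}]$): in the canonical model $\mathrm{Add}_1\,\top\,\top\,\top\,\top$ and $\mathrm{Add}_2\,\top\,\top\,\top\,\top$ denote all of $\Z$, because the premises $f_1\,y\land f_2\,(-y)$ are satisfied by arbitrary $y$. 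Hence the body of your $\mathrm{Sol}$ clause is witnessed by $y=1$, $\mathrm{Sol}\,\top\,\top$ holds in the canonical model, and $\Gamma_{p,q}$ is unsatisfiable even though the equation has no solution; your claim that a resolution refutation can be ``traced'' back to a solution cannot rescue this, since the refutation exploits exactly such degenerate instances (and the refutation rule discharges variable-headed atoms outright). The paper sidesteps this by never putting higher-order variables in the goal: it adds the combinators $\mathrm{Fn}$ and $\mathrm{Gt}$, whose derived family $\mathrm{GT}(x)$ turns a first-order integer variable $x$ into its encoding pair inside the clause, so the goal for an equation quantifies only over integers and every witness of its body is an actual solution. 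Your proof needs this device (or an equivalent guarantee that only genuine encodings can satisfy the body) to close the ``unsatisfiable $\Rightarrow$ solvable'' direction; without it the reduction, as stated, does not establish undecidability.
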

\begin{proof}
Since solvability of Diophantine equations is undecidable \cite{MathematicalLogicForMathematicians}, so is the problem of determining if a higher-order limit \dz\ problem is satisfiable.
\end{proof}

\section{Initial limit Datalog}
\label{sec:order-initiality}

In this section, we prove \cref{thm:order-main-sat-result}, which says that a particular fragment of higher-order limit Datalog is decidable. The proof follows the same strategy as that of \cref{thm:fold}. The key difference occurs when we enumerate candidate models; even though we can restrict the first-order predicates to an enumerable set, there are still uncountably many inhabitants of higher-order types under standard semantics.

To work around this, first note that there are finitely many predicate symbols. If these were the only higher-order terms, we would be fine since
\cref{thm:completeness} can be seen as saying that satisfiability does not depend on the behaviour of predicates on elements of higher-order function spaces that don't correspond to terms. 
\tcb{I'd like to know if Dominik agrees with this phrasing}
\dw{Roughly: yes. Indeed the document (which was originally planned as a journal submission) makes this very precise. From what is written in the present paper I don't know whether this sentence makes sense to all readers. Furthermore, this only applies to non-background terms.}
However, terms can contain arbitrarily deeply nested subterms, as seen in \cref{eg:addition} (and in \cref{eg:multiplication} which is used in the proof of undecidability). This means there can be a countable infinity of terms with distinct interpretations, leading to an uncountable infinity of interpretations for predicates over those terms. We can prevent this kind of nesting by restricting the types of predicates in the following way.

We insist that among the arguments to a predicate {from $\sigf$}, at most one is of type $\WW$, and every argument that occurs to the left of the $\WW$-typed argument (if there is one) must be of a smaller order than this function of $\WW$.
For example, we would admit predicates of type $S \to \WW \to \boolsort$ and $(\WW \to o) \to \WW \to S \to (\WW \to \boolsort) \to \boolsort$, but not those of type $\WW \to \WW \to \boolsort$ nor $(\WW \to \boolsort) \to \WW \to \boolsort$.


\begin{definition}
\label{def:order-initial}
\begin{compactenum}[(i)]
\item An \defn{initial} type is a relational type $\sigma_1\to\cdots\to\sigma_n\to\boolsort$ where $n \geq 0$ satisfying
\begin{compactenum}[(O1)]
\item at most one of $\sigma_1,\cdots,\sigma_n$ is $\WW$, and
\item if $\sigma_j=\WW$ then for all $i<j$,
\(
\order(\sigma_i)<\order(\sigma_j \to \sigma_{j+1}\to\cdots\to\sigma_n\to\boolsort)
\), and
\item each $\sigma_j$ is $S$, or $\WW$, or initial.
\end{compactenum}
\item Let $\rho = \overline \sigma \to \boolsort$ be an initial type. We say that $\rho$ is an \defn{active} type (typically $\tyint$) if some $\sigma_i$ is $\WW$;
otherwise it is an \defn{inactive} type (typically $\tynint$).
\item An \defn{initial limit Datalog problem} is a limit Datalog problem where for each $X: \rho \in \sigf$, $\rho$ is initial.
\end{compactenum}
\end{definition}

\begin{example}
	All the types in \cref{eg:tweet,eg:paths,eg:tweet-follows,eg:summation} are initial; but neither  $\mathrm{Add}_1$ nor $\mathrm{Add}_2$ in \cref{eg:addition}
	have an initial type.
\end{example}

\tcb{
\begin{remark}
Condition (O3), which is included for technical convenience, is not needed in practice (although removing it does not make the expressive). If there is a predicate whose type violates (O3), there aren't any terms that will fit as an argument; hence it can't have any effect on decidability.
Thus we suppress (O3) in the discussion in \cref{sec:intro}.

There is some subtlety here - if there were predicates $X:(\WW\to\WW\to o)\to o$ and $Y:((\WW\to\WW\to o)\to o)\to o$ then $Y\; X$ would be a term that needs considering, so we appeal to the definition of satisfiability via resolution. Using a definition of initial without (O3) for the remainder of this paragraph, since in an initial limit Datalog problem there are no clauses where the head is a predicate of non-initial type, neither proof rule cares about such types any more than if they had type $o$.
 The easiest way to prove this formally would be to adjust the definition of an entwined frame (\cref{def:oconfusionf}) so that non-entwined structures are interpreted as the singleton set $\{\top\}$.
\end{remark}
}

\begin{theorem}[Decidability]
\label{thm:order-main-sat-result}
Given \cref{ass:sigma,ass:ordering}, there is an algorithm that decides whether a given initial limit Datalog problem is $\As$-satisfiable.
\end{theorem}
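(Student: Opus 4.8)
The plan is to follow the proof of \cref{thm:fold}: exhibit a recursively enumerable family of candidate models together with a decision procedure for ``$\Bs\models\Gamma$'', and run a search through this family in parallel with the resolution-based semi-decision procedure for $\As$-unsatisfiability supplied by \cref{thm:completeness}. The obstacle absent at order $1$ is that under the standard frame the higher-order types have uncountably many inhabitants; the fix is to replace standard models by \emph{entwined structures}. By \cref{rem:directiondoesntmatter} we may assume throughout that $\Gamma$ is an \emph{upwards} initial limit Datalog problem, and we let $n$ be the largest order of the type of a predicate in $\sigf$. First I would make precise the bootstrapping construction of \cref{def:oconfusionf}: by recursion on $m=0,1,\dots,n$ one defines an \emph{order-$m$ entwined structure} $\Bs_m$ together with a frame $\oconfusionf{\Bs_{m-1}}_m$ assigning a set to every initial type; having fixed $\Bs_{m-1}$ (the interpretations of all foreground predicates of order $<m$), for $\rho=\tau\to\sigma$ of order $<m$ one takes $\oconfusionf{\Bs_{m-1}}_m\sem{\rho}$ to be the full function space $\fspace{\oconfusionf{\Bs_{m-1}}_m\sem{\tau}}{\oconfusionf{\Bs_{m-1}}_m\sem{\sigma}}$ when that is finite, and otherwise the least family of set-theoretic functions closed under the applications needed to support the interpretations already assigned by $\Bs_{m-1}$; then $\Bs_m$ extends $\Bs_{m-1}$ by assigning to each order-$m$ active predicate an arbitrary element of its (now-defined) interpretation that is monotone in the $\WW$-typed argument. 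An entwined structure for $\Gamma$ is such a $\Bs_n$.

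The crux is property (P2), i.e.\ \cref{lem:order-finite}: for every initial type $\rho$ of order $<m$, the set $\oconfusionf{\Bs_{m-1}}_m\sem{\rho}$ is finite. This is exactly where initiality is used. By (O1)--(O2), every argument lying to the left of the $\WW$-argument in an active type has order strictly below that of the residual type, so induction on order gives finiteness of those argument domains; by (O3), each argument of an initial type is $S$, $\WW$, or again initial and thus, at lower order, finite. Combining this with the facts that $\WW$ has only countably many upsets (\cref{ass:ordering}) and that only finitely many of them can arise as denotations of the finitely many relevant terms, one propagates finiteness up the order hierarchy. The structural point that keeps the ``least family of functions'' from exploding is the one flagged after (P1)--(P4): any subterm of order $>m$ occurring inside an active-typed term of order $\le m$ is itself buried inside a subterm $L$ of strictly lower type $\sigma$, and $\oconfusionf{\Bs_{m-1}}_m\sem{\sigma}$ is already finite, so no further inhabitants need be added to support term denotations.

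Granting (P2), the remaining properties go through. Property (P1) --- the family of entwined structures is r.e.\ --- follows because an entwined structure is determined by finitely many pieces of data, each ranging over a recursively presentable set: the lower-order predicate interpretations are finite objects whose members are named by $\Sigma_\WW$-formulas (decidability of the first-order theory of $\WW$ lets us enumerate and normalise these), and the order-$n$ active predicates are monotone, hence finitely describable relative to the finite lower-order data. Property (P3) --- decidability of $\Bs\models\Gamma$ for given entwined $\Bs$ --- is proved as in \cref{thm:fold}: ground the finitely many $S$-typed variables, replace each foreground atom (which, by the bound on orders, involves only predicates whose interpretation is finite, or monotone and hence $\Sigma_\WW$-definable) by the corresponding $\Sigma_\WW$-formula, and decide the resulting Boolean combination of constraints in the decidable theory of $\WW$; making this rigorous uses the translation lemmas \cref{lem:order-emtonore} and \cref{lem:order-smtoem} relating an entwined structure to the $(\As,\Hf)$-structure on which \cref{def:sat} evaluates it. Property (P4) --- some entwined structure models $\Gamma$ iff $\Gamma$ is $\As$-satisfiable --- has an easy direction (\cref{lem:order-emtonore}: an entwined model is a model, so $\Gamma$ is not refutable, hence $\As$-satisfiable by \cref{thm:completeness}) and a hard direction (\cref{lem:order-smtoem}): from an arbitrary standard model of $\Gamma$ one projects out an entwined structure that still satisfies $\Gamma$, working up the orders by restricting, at each order, the standard interpretation of a type to the functions reachable from denotations of subterms of $\Gamma$ together with the data already fixed, and replacing each active predicate by the least upset containing its reachable values --- then checking that this restriction is closed under the applications that actually occur (here (O1)--(O3) ensure predicates are only ever applied to arguments in the already-built finite domains) and that every clause, in particular every limit clause, remains satisfied.

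Finally, run in parallel: (a) the enumeration of entwined structures from (P1), model-checking each by (P3), which halts with ``satisfiable'' exactly when some entwined structure models $\Gamma$, i.e.\ by (P4) exactly when $\Gamma$ is $\As$-satisfiable; and (b) the resolution procedure, which by \cref{thm:completeness} halts exactly when $\Gamma$ is $\As$-unsatisfiable. Exactly one of (a) and (b) terminates, so the combined search decides $\As$-satisfiability. I expect the main difficulty to be the conjunction of (P2) with the hard half of (P4): designing the recursion so that the lower-order frames are provably finite yet still carry every term denotation, and showing that an arbitrary (possibly uncountable) model can be collapsed into such a frame without breaking any clause --- in both cases the delicate point is making the order accounting in (O1)--(O3) mesh with the induction on order. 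Everything else is a routine upgrade of the order-$1$ argument.
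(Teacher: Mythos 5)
Your architecture is the paper's own: entwined structures built by induction on order, finiteness of lower-order initial types from (O1)--(O3), an r.e.\ enumeration, decidable model checking, and a parallel search against the refutationally complete resolution system. The verdict therefore hinges on the two lemmas you defer, and there is a genuine gap at the hard half of (P4), i.e.\ \cref{lem:order-smtoem}. Your recipe --- restrict the standard interpretation of each type to the functions ``reachable'' from denotations of subterms, replace each active predicate by ``the least upset containing its reachable values'', then ``check that every clause remains satisfied'' --- is not yet a construction: an element of the entwined frame at higher type is a function on \emph{entwined} domains, not on standard ones, so ``restricting'' a standard predicate is not even well-defined, and the deferred check is precisely the difficult point. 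The paper's missing ingredient is a logical relation $\lesssim$ between the entwined frame and the standard frame ($=$ at base types, $\le$ at $\boolsort$, lifted logically), with each order-$n$ predicate interpreted as a greatest lower bound, $\assigned{Y}{\Bs_n}\,\overline f := \bigwedge_{\overline f \lesssim \overline g} \assigned{Y}{\Bs}\,\overline g$. Well-definedness (monotonicity in the $\WW$-argument) then comes from the limit clauses holding in the standard model; goal clauses transfer by the fundamental lemma for the relation (\cref{lem:termmonapp}); and definite clauses transfer because falsity of the head at an entwined valuation produces $\lesssim$-related standard arguments falsifying the head, hence the body, which pulls back to the entwined structure by positivity (\cref{cor:lrposterm}). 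Nothing in your sketch plays the role of this relation, and a na\"ive ``reachable values'' collapse gives no control over how the truth values of clause bodies change.

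Two smaller points. For (P3) your step ``ground the finitely many $S$-typed variables'' is not enough: you must ground \emph{all} variables of non-$\WW$ type, including relational ones (atoms $x\,\overline M$ are allowed), which is why the paper runs the construction one level past the maximal order $l$ of $\sigf$, so that in $\Hf_{l+1}$ even the top-order active types denote finite sets; stopping at the maximal order $n=l$, as you propose, leaves such variables ranging over an infinite monotone function space. Moreover, for atoms with a $\WW$-variable nested inside an argument, the paper eliminates variables by a case split over a finite partition of $\WW$ induced by the finitely many values of the (finite) argument types, exploiting monotonicity and upset-definability (\cref{lem:order-decidablemodel}); a single ``replace the atom by a $\Sigma_\WW$-formula'' only handles the outermost $\WW$-variable, and your citation of \cref{lem:order-emtonore} and \cref{lem:order-smtoem} for this step is misplaced. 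These points are repairable, but as written the proposal omits the ideas that make the two load-bearing lemmas go through.
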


The initial type restriction does not prevent nested terms, but it does prevent problematic ones by making the subterm relationship compatible with the type-theoretic order of the terms involved.
If an order-$n$ term $N$ of an active type contains an order-$m$ subterm $M$ where $m> n$, then $M$ is a subterm of some $L$ (another subterm of $N$) of type $\sigma$ (say) whose order is less than $n$.
(This is because $N$ must have the form $X \, \overline{L}$ where $X \in \sigf$ is an active type, and each ${L_i}$ has order less than $n$.)
We will see later that we can take the interpretation of this type $\sigma$ to be a finite set, and hence we don't need to
know all possible values of $M$ to know all possible ways in which we can interpret $N$.


This ensures that we can enumerate candidate models up to their behaviour on constructible elements.  It allows (a) for interpretations to be defined inductively: the interpretation of all order-$n$ predicates is given before any of order-$(n+1)$ and (b) the behaviour of a predicate on a non-$\WW$ argument need only be specified on \emph{finitely many} definable elements (\cref{lem:order-finite}).

Consider a predicate symbol $X : (\WW \to o) \to o \in \sigf$.
Without restriction, there may be an infinity of definable elements of type $\WW \to o$ and hence uncountably many choices of interpretation of $X$.
However, a $\sigf$-term of type $\WW \to o$ can only be constructed by applying a predicate symbol $Y$ to some arguments $N_1, \cdots{}, N_k$.
It follows that $Y$ has a type of shape $\sigma_1 \to \cdots{} \to \sigma_k \to \WW \to o$.
By the initial type restriction, each $\sigma_i$ is necessarily $S$ and hence finite.
If we have already fixed the interpretation of each such $Y$ (each being of lower order than $X$), then there are only finitely many definable elements at type $\WW \to o$.
Hence, there are only finitely many definable relations at type $(\WW \to o) \to o$.

Of course, when first fixing the interpretation of $Y$ there can be infinitely many choices; but thanks to the limit Datalog restriction, only countably many can satisfy the limit clause which requires that any such interpretation is upward-closed in its $\WW$ argument.
It is straightforward to see that the choices are, moreover, r.e. (\cref{lem:order-re}).


This leads to the notion of an interpretation that is built up inductively by order, in which the domains of the higher-order predicates (i.e. interpretation of types) are not determined until the interpretations of lower-order predicates have been fixed.
The process of choosing interpretations for the types (i.e. the frame) and the process of choosing interpretations for the predicate symbols are entwined.

\paragraph*{Assumptions} Recall disjoint signatures $\sigb$ and $\sigf$ and 1st-order structure $\As$ from \cref{ass:ordering,ass:sigma}.
Henceforth fix an initial limit Datalog problem $\Gamma$ and take 
\[
l := \max\{ \order(\rho) \mid X : \rho \in \sigf\}.
\]

\begin{definition}
	\label{def:expansion-of-Bs}
	Let $\Xi_1$ and $\Xi_2$ be (possibly higher-order) signatures such that $\Xi_1 \subseteq \Xi_2$;
	and $\Hf_1$ and $\Hf_2$ be frames.

	Suppose $\Bs_1$ is a $(\Xi_1, \Hf_1)$-structure.
	We say that a $(\Xi_2, \Hf_2)$-structure $\Bs_2$ is a $(\Xi_2, \Hf_2)$-\defn{expansion} of $\Bs_1$ just if $c^{\Bs_2} = c^{\Bs_1}$ for all $c \in \Xi_1$.
\end{definition}

We first define, given sets $U_1, \cdots, U_n$, a relation on relational functions, ${\le_{\boolsort, n}} \subseteq [U_n \to \cdots \to U_1 \to \mathbb{B}]^2$, by
\begin{align*}
f \le_{\boolsort, 0} g &\defeq (f = 0 \textrm{ or } g = 1)\\
f \le_{\boolsort, n+1} g &\defeq \forall x \in U_{n+1} \, . \, f (x) \le_{\boolsort, n} g (x)
\end{align*}
Define $\top_n \in [U_n \to \cdots \to U_1 \to \mathbb{B}]$ as $\forall \overline x \, . \, \top_n \, \overline x = 1$.
Henceforth we elide the subscript $n$ from $\le_{\boolsort, n}$ and $\top_n$.

\begin{definition}
\label{def:oconfusionf}
Let $n \geq 1$ and $\Xi \subseteq \sigf$.
Given a $(\Xi,\,\Hf)$-structure $\Bs$, define the \defn{\oconfused order-$n$ frame derived from $\Bs$}, written $\oconfusionf{\Bs}_n$, by case analysis of $\sigma$ as follows.
\begin{compactenum}[(i)]
\item $\sigma$ is initial and $\order(\sigma) \le n-2$, or $\sigma$ is (inactive, or $S$, or $\WW$, or $\boolsort$) and $\order(\sigma)\le n-1$:
\[
\oconfusionf{\Bs}_n\sem{\sigma} := \Hf\sem{\sigma}
\]
\item $\sigma$ active and $\order(\sigma)=n-1$: 
\begin{align*}
\oconfusionf{\Bs}_n\sem{\WW\to\tynint} &:=
    \big\{\top \in \fspace{\WW}{\oconfusionf{\Bs}_n\sem{\tynint}}\big\}\ \cup\\
    &\hspace{-20pt}\big\{\assigned{X}{\Bs} \, \overline{s} \, \mid \, X: \overline{\tau} \to \WW \to\tynint \in {\Xi},
    \ s_i \in \Hf\sem{\tau_i}\big\}  \\
\oconfusionf{\Bs}_n\sem{\sigma_1\to\tyint} &:= \fspace{\oconfusionf{\Bs}_n\sem{\sigma_1}}{\oconfusionf{\Bs}_n\sem{\tyint}} 
\end{align*}

\item $\sigma$ is initial and $\order(\sigma)= n$:
\begin{align*}
\hspace{-20pt}\oconfusionf{\Bs}_n\sem{\WW\to\tynint}
&:=\\
&\hspace{-30pt}\big\{f\in\fspace{\WW}{\oconfusionf{\Bs}_n\sem{\tynint}} \mid \forall z\assigned{\le}{\As} z' .f(z)\le_\boolsort f(z') \big\}\\
\hspace{-20pt}\oconfusionf{\Bs}_n\sem{\sigma_1\to\sigma_2} &:= \fspace{\oconfusionf{\Bs}_n\sem{\sigma_1}}{\oconfusionf{\Bs}_n\sem{\sigma_2}} \quad (\sigma_1 \not= \WW)
\end{align*}

\item $\sigma$ is not initial, or $\order(\sigma) > n$:
\[
\oconfusionf{\Bs}_n\sem{\sigma_1\to\sigma_2}  := \fspace{\oconfusionf{\Bs}_n\sem{\sigma_1}}{\oconfusionf{\Bs}_n\sem{\sigma_2}}
\]
\end{compactenum}
\end{definition}



The preceding definition is used in the context of entwined structures (\cref{def:cconfused-family}).

In that context, we explain the cases:
By \cref{lem:order-finite}, sorts covered by case (i) can be treated as finite, so are easy to deal with.
Case (ii) is the most interesting - it is where we make use of the structure $\Bs$. Here we set the interpretation of order-$(n-1)$ active types $\WW\to\nu$ to be the minimum ensuring that $\Bs$ is still a $\oconfusionf{\Bs}_n$-structure. Top ($\top$) is needed for technical reasons. 
Case (iii) keeps things countable by discarding interpretations that don't satisfy the limit clauses. This is exactly like the proof of \cref{thm:fold}. This defines the space from which we will pick interpretations for order-$n$ predicate symbols (say $\Bs'$), and $\oconfusionf{\Bs'}_{n+1}$ will fix the space to become finite.
Case (iv) of \cref{def:oconfusionf}
is only there to ensure that $\oconfusionf{\Bs}_n$ is technically a frame. Such types are not used anywhere.

For $i \geq 1$, let \changed[lo]{$\Sigma_i\subseteq\sigf$} consist of the predicate symbols of \changed[lo]{$\sigf$} with types of order at most $i$.

\begin{definition}
\label{def:cconfused-family}
\begin{compactenum}[(i)]
\item A family of structures $\{\Bs_n\}_{n \in \omega}$, indexed by (order) $n$, is said to be \defn{\oconfused} just if $\Bs_0$ is the unique $(\emptyset,\Sf)$-structure, and
each $\Bs_{n+1}$ is a $(\Sigma_{n+1},\oconfusionf{\Bs_{n}}_{n+1})$-expansion of $\Bs_n$.

\item An \defn{\oconfused structure} is a member of some \oconfused family.
An \firstdef{\oconfused model} of $\Gamma$ is an \oconfused structure {$\Bs_{l+1}$} such that $\Bs_{l+1}\models\Gamma$.
\lo{N.B. $l$ is defined in \cref{ass:order-initial-limit-dz}.}
\end{compactenum}
\end{definition}


\begin{example}\label{eg:xyz} 
Using the background theory LIA (so $\WW=\mathbb{Z}$),
take, for example, the term $X\, a\, (Y\, b)\, (Z\, 5\, X)$ for some $a,b$ such that $\Delta(a),\Delta(b)= S$ and
  \begin{align*}
  Y &: S\to S\to\boolsort\\
  X &: \rho = S\to (S\to\boolsort) \to\boolsort\to\WW\to (\WW\to\boolsort) \to\boolsort\\
  Z &: \WW\to \rho \to\boolsort
  \end{align*}
  Now $Z$ has a complicated type, but $Z\, 5\, X$ must be either true or false, so we can select behaviours for $X$ ignorant of $Z$ (and the choices for $Z$ can depend on this without introducing a problematic cycle).
  \changed[sr]{This example is elaborated in \cref{sec:entwined-example} of the appendix.}
  \tcb{Should this be more closely tied to the earlier description mentioning predicates $X$ and $Y$ with slightly different types?}
\end{example}


In the following lemmas, let {$\{\Bs_n\}_{n \in \omega}$} be an \oconfused family, and set $\Hf_n := \oconfusionf{\Bs_{n-1}}_n$.
Note that $\Sigma_{l+1} = \sigf$.

\begin{restatable}{lemma}{orderfinite}
    \label{lem:order-finite}
Let $\sigma$ be an initial type. If $n>\order(\sigma)$, or $n=\order(\sigma)$ and $\sigma$ is an inactive type, then $\Hf_n\sem{\sigma}$ is finite.
\end{restatable}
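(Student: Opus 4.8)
The proof will proceed by induction on $n$, following the inductive structure of the entwined family and the case analysis of \Cref{def:oconfusionf}. The base case $n=1$ is immediate: the only initial types $\sigma$ with $\order(\sigma) < 1$ are $\boolsort$ (and $S$, $\WW$ if one counts them — but here we care about relational types), with $\Hf_1\sem{\boolsort} = \bool$ finite; and if $\sigma$ is inactive of order $0$ then $\sigma = \boolsort$ again. So assume $n \geq 2$ and that the claim holds for all smaller indices. I would now fix an initial type $\sigma$ with either $n > \order(\sigma)$, or $n = \order(\sigma)$ and $\sigma$ inactive, and show $\Hf_n\sem{\sigma} = \oconfusionf{\Bs_{n-1}}_n\sem{\sigma}$ is finite by tracking which clause of \Cref{def:oconfusionf} applies.

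The key case split is as follows. If $\order(\sigma) \leq n-2$, then clause (i) applies and $\Hf_n\sem{\sigma} = \sinti{\Hf_{n-1}}{\sigma}$ (where $\Hf_{n-1} = \oconfusionf{\Bs_{n-2}}_{n-1}$), which is finite by the induction hypothesis applied at index $n-1$, since $\order(\sigma) \leq n-2 < n-1$. This covers the bulk of the "strictly smaller order" situation. The remaining cases are $\order(\sigma) = n-1$ and $\order(\sigma) = n$ with $\sigma$ inactive. For $\order(\sigma) = n-1$: if $\sigma = \boolsort$ or is inactive (hence of the form $\sigma_1 \to \cdots \to \sigma_k \to \boolsort$ with each $\sigma_i \in \{S\} \cup \{\text{initial}\}$ but none equal to $\WW$, by (O1)), clause (i) gives $\Hf_n\sem{\sigma} = \sinti{\Hf_{n-1}}{\sigma}$; then I unfold the full function space $\fspace{\sinti{\Hf_{n-1}}{\sigma_1}}{\cdots}$ and observe that each $\sigma_i$ has order $< n-1$ (this needs (O3) plus the fact that an inactive type's argument orders are all strictly below the order of the type), so each $\sinti{\Hf_{n-1}}{\sigma_i}$ is finite by the induction hypothesis, and a function space between finite sets is finite. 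If $\sigma$ is active of order $n-1$, clause (ii) applies, and I argue finiteness directly: $\oconfusionf{\Bs_{n-1}}_n\sem{\WW\to\tynint}$ is the set $\{\top\} \cup \{\assigned{X}{\Bs_{n-1}}\,\overline{s} : X \in \Xi, s_i \in \Hf\sem{\tau_i}\}$, which is finite because there are finitely many predicate symbols $X$ in $\sigf$ and, since each $\tau_i$ has order $< \order(\WW\to\tynint) \leq n-1$ by (O2), each $\Hf\sem{\tau_i}$ is finite by the induction hypothesis; then the outer function-space construction $\oconfusionf{\Bs}_n\sem{\sigma_1\to\tyint}$ over the (finite) $S$-typed arguments preserves finiteness — but note we are asked about $\order(\sigma) = n$ inactive, not active, in the theorem's hypothesis, so strictly I only need the active order-$(n-1)$ case as an ingredient for the order-$n$ inactive case, not as a conclusion.

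For the final case, $\order(\sigma) = n$ with $\sigma$ inactive: by (O1) no $\sigma_i$ is $\WW$, so clause (iii)'s function-space sub-case (or clause (i)/(iv) depending on the exact order bookkeeping — I would double-check whether an inactive type of order $n$ falls under (iii)) expands $\Hf_n\sem{\sigma}$ as a function space $\fspace{\Hf_n\sem{\sigma_1}}{\fspace{\cdots}{\Hf_n\sem{\boolsort}}}$; each $\sigma_i$ is $S$ or initial of order $\leq n-1$, and by (O3) initial, so $\Hf_n\sem{\sigma_i}$ is finite — this is where the subtlety lies: for $\sigma_i$ initial of order exactly $n-1$ and active, I invoke the clause-(ii) finiteness argument above; for $\sigma_i$ of lower order or inactive, I invoke a smaller instance of the theorem. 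Thus $\Hf_n\sem{\sigma}$ is a finite nest of function spaces over finite sets, hence finite. The main obstacle I anticipate is the careful bookkeeping of the order thresholds in \Cref{def:oconfusionf} (the $n-2$ vs $n-1$ vs $n$ boundaries, and which clause fires for inactive types of order $n$) and ensuring the induction is genuinely well-founded — in particular, handling the active-order-$(n-1)$ arguments that appear inside an order-$n$ inactive type without circularity. I would resolve this by making the induction hypothesis quantify over all initial types $\sigma$ and all $n$ with ($n > \order(\sigma)$) or ($n = \order(\sigma)$ and $\sigma$ inactive), ordered primarily by $n$, and checking each clause only ever refers back to strictly smaller $n$ or to the clause-(ii) set which is handled inline.
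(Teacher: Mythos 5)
Your proposal is correct and follows essentially the same route as the paper's own proof: a lexicographic induction (primarily on $n$, secondarily on the structure of $\sigma$) through the cases of Definition~\ref{def:oconfusionf}, where clause (i) is discharged by the induction hypothesis at index $n-1$, the clause (ii) set is finite because $\sigf$ is finite and the argument interpretations $\Hf_{n-1}\sem{\tau_i}$ are finite by (O2) and the IH, and inactive types are full function spaces over finite sets. Two trivial slips do not affect this: active types of order $n-1$ are in fact within the lemma's scope (they satisfy $n>\order(\sigma)$), which you prove inline anyway, and your $n\ge 2$ split omits the immediate case of inactive order-$1$ types at $n=1$, which your general function-space argument covers verbatim.
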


\begin{restatable}{lemma}{orderonre}\label{lem:order-lvnre}
    Let $\sigma$ be an initial active type. If $n=\order(\sigma)$ then $\bnsem{\sigma}$ is r.e.
\end{restatable}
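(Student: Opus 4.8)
The plan is to compute $\bnsem{\sigma}=\oconfusionf{\Bs_{n-1}}_n\sem{\sigma}$ explicitly by unfolding \cref{def:oconfusionf}, and then show the result is r.e. Since $\sigma$ is an initial active type of order $n$, write $\sigma=\sigma_1\to\cdots\to\sigma_{j-1}\to\WW\to\tynint$, where $\sigma_j=\WW$ is the unique $\WW$-typed argument (by (O1)) and $\tynint=\sigma_{j+1}\to\cdots\to\sigma_k\to\boolsort$ is inactive. By (O3) each $\sigma_i$ with $i<j$ is $S$ or an initial type (it is not $\WW$), and by (O2) $\order(\sigma_i)<\order(\WW\to\tynint)$; as moreover every leading $1+\order(\sigma_i)\le\order(\WW\to\tynint)$, we get $\order(\WW\to\tynint)=\order(\sigma)=n$, and the same computation shows every suffix $\sigma_i\to\cdots\to\WW\to\tynint$ has order $n$ and is initial. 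In particular $\tynint$ is an initial, inactive type of order $\le n$, so by \cref{lem:order-finite} (and $\Hf_n\sem{S}=S$) each $\Hf_n\sem{\sigma_i}$ with $i<j$, as well as $F:=\Hf_n\sem{\tynint}$, is finite.

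It follows that the unfolding of $\bnsem{\sigma}$ is governed throughout by case (iii) of \cref{def:oconfusionf}; by currying, this gives
\begin{align*}
\bnsem{\sigma}&\cong\bigl[\Hf_n\sem{\sigma_1}\times\cdots\times\Hf_n\sem{\sigma_{j-1}}\to M\bigr],\\
M&:=\bigl\{f\in\fspace{\WW}{F}\mid \forall z\,\assigned{\le}{\As}\,z'.\ f(z)\le_\boolsort f(z')\bigr\}.
\end{align*}
As the domain $\Hf_n\sem{\sigma_1}\times\cdots\times\Hf_n\sem{\sigma_{j-1}}$ is finite, $\bnsem{\sigma}$ is (isomorphic to) a finite power of $M$ --- in particular $M$ itself when $j=1$; since a finite power of an r.e.\ set is r.e., it suffices to show that $M$ is r.e.

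The heart of the proof is this last claim. Observe that $F$ is a finite poset under $\le_\boolsort$, and that a $\le_\boolsort$-monotone map $f\colon\WW\to F$ is completely determined by the $F$-indexed family of upsets $U_v:=\{w\in\WW\mid v\le_\boolsort f(w)\}$, because $f(w)=\max\{v\in F\mid w\in U_v\}$ --- a maximum always attained, being $f(w)$ itself. By \cref{ass:ordering} each $U_v$ is defined by some $\Sigma_\WW$-formula $\phi_v(x)$. I would enumerate $M$ by enumerating all $F$-indexed tuples $(\phi_v)_{v\in F}$ of $\Sigma_\WW$-formulas and keeping exactly those that pass the following checks --- each decidable since the first-order theory of $(\Sigma_\WW,\As_\WW)$ is decidable and $F$ is finite:
\begin{compactenum}[(a)]
\item $\As_\WW\models\forall x.\,\phi_{v'}(x)\to\phi_v(x)$ for all $v\le_\boolsort v'$ in $F$;
\item $\As_\WW\models\lnot\exists x.\bigl(\bigwedge_{v\in T}\phi_v(x)\wedge\bigwedge_{v\notin T}\lnot\phi_v(x)\bigr)$ for every down-closed $T\subseteq F$ with no greatest element (in particular $T=\emptyset$).
\end{compactenum}
Condition (a) makes $\{v\mid w\in U_v\}$ a down-set for every $w$, and (b) forces it to be non-empty with a greatest element, so that $f(w):=\max\{v\mid\As_\WW\models\phi_v(x)[x\mapsto w]\}$ is well-defined; a short verification then shows this $f$ lies in $M$ and has $\{w\mid v\le_\boolsort f(w)\}$ defined by $\phi_v$. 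Conversely every $f\in M$ arises from the tuple of defining formulas of its upsets. Hence the surviving tuples enumerate exactly (representations of) the elements of $M$, so $M$ is r.e., and the lemma follows.

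I expect the main obstacle to be the final paragraph: stating side conditions (a) and (b) precisely --- in particular handling $F$ as an arbitrary finite poset, not a lattice, so that $\max\{v\mid w\in U_v\}$ need not be a join and one must argue it is always attained --- and verifying that each surviving tuple genuinely encodes a $\le_\boolsort$-monotone function with the prescribed upsets. The type-theoretic unfolding and the closure of r.e.\ sets under finite powers are routine given \cref{lem:order-finite}.
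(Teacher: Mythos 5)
Your strategy is essentially the paper's own: use \cref{lem:order-finite} to make every argument position other than the $\WW$-position finite, reduce to the set $M$ of monotone maps from $\WW$ into the finite poset $F=\bnsem{\tynint}$, and enumerate $M$ through $\Sigma_\WW$-formulas supplied by \cref{ass:ordering}. The only differences are cosmetic: the paper peels off the leading arguments by induction on arity rather than appealing to closure of r.e.\ sets under finite powers, and it encodes a monotone $f$ by one formula per tuple of arguments of $\tynint$ (the upset $\{z\mid f(z)(\overline s)=1\}$), whereas you encode it by the $F$-indexed family of upsets $\{w\mid v\le_\boolsort f(w)\}$; your reading of the unfolding via case (iii) of \cref{def:oconfusionf} matches the paper's intended notion of order.

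There is, however, a genuine flaw in the verification of your filter: conditions (a) and (b) do not imply that the decoded $f$ lies in $M$. Given (a) and (b), one indeed has that each set defined by $\phi_v$ equals $\{w\mid v\le_\boolsort f(w)\}$ for the decoded $f$, but monotonicity of $f$ is then \emph{equivalent} to each $\phi_v$ defining an upset of $\WW$, and nothing in (a) or (b) enforces that. Concretely, take $\WW=\Z$ with LIA, $\sigma=\Z\to\boolsort$, so $F=\{0\le_\boolsort 1\}$, and let $\phi_0(x)$ be $x\ge x$ and $\phi_1(x)$ be $x=0$. Then (a) holds ($\phi_1\to\phi_0$ is valid), and (b) holds (the only down-closed $T$ without a greatest element is $\emptyset$, and $\phi_0$ is valid), yet the decoded $f$ maps $0\mapsto 1$ and every other integer to $0$, which is not monotone. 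So your surviving tuples enumerate a strict superset of $M$, which does not establish that $M$ is r.e. The repair is one more decidable check, of exactly the kind you already use: (c) $\As_\WW\models\forall x\,\forall y.\,(\phi_v(x)\wedge x\le y)\to\phi_v(y)$ for each $v\in F$. With (a)--(c) the decoded $f$ is monotone and every $f\in M$ arises from a passing tuple, so the corrected filter enumerates exactly $M$; the remainder of your argument is sound.
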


\begin{restatable}{lemma}{orderre}
    \label{lem:order-re}
    The set of \oconfused families of structures is r.e.
\end{restatable}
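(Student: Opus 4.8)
The plan is to construct a Turing machine that enumerates entwined families $\{\Bs_n\}_{n\in\omega}$, and to argue that such a family is finitely describable because it eventually stabilises. The key observation is that $\sigf$ is finite and $l := \max\{\order(\rho)\mid X:\rho\in\sigf\}$ is finite, so $\Sigma_{l+1} = \sigf$; hence for all $n > l$ the structure $\Bs_n$ is a $(\sigf,\oconfusionf{\Bs_{n-1}}_{n+1})$-expansion of $\Bs_n$ that assigns no new symbols (since $\Sigma_{n+1}\setminus\Sigma_n=\emptyset$ for $n\ge l$), so $c^{\Bs_{n+1}}=c^{\Bs_n}$ for all $c\in\sigf$. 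In other words the whole family is determined by the finite data $(\Bs_1,\dots,\Bs_{l+1})$, and in fact by $\Bs_{l+1}$ together with the knowledge of which symbols were introduced at each order. So it suffices to show that the set of such finite tuples $(\Bs_1,\dots,\Bs_{l+1})$ arising from entwined families is r.e.

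I would proceed by induction on $n$, enumerating possible choices for each $\Bs_n$ given $\Bs_{n-1}$. Start with $\Bs_0$, the unique $(\emptyset,\Sf)$-structure, which requires no choices. Given $\Bs_{n-1}$, the frame $\Hf_n := \oconfusionf{\Bs_{n-1}}_n$ is completely determined by $\Bs_{n-1}$ via \cref{def:oconfusionf} (and it is computable: cases (i) and (iv) refer to the standard frame or to $\Hf_{n-1}$, which by induction we have in hand; case (ii) builds a finite set from the finitely many $\assigned{X}{\Bs_{n-1}}\overline s$ together with $\top$; case (iii) carves out the monotone functions from a finite function space). To obtain $\Bs_n$ we must choose, for each predicate symbol $X:\rho\in\Sigma_n\setminus\Sigma_{n-1}$ (i.e.\ each $X$ whose type has order exactly $n$), an element $\assigned{X}{\Bs_n}\in\Hf_n\sem{\rho}$. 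If $\rho$ is inactive then by \cref{lem:order-finite} $\Hf_n\sem{\rho}$ is finite, so there are only finitely many choices, trivially enumerable. If $\rho$ is active then by \cref{lem:order-lvnre} $\Hf_n\sem{\rho}$ is r.e., so we can enumerate the choices. Interleaving (dovetailing) over the finitely many predicate symbols of order $n$ and over $n = 1,\dots,l+1$ yields an enumeration of all finite tuples $(\Bs_1,\dots,\Bs_{l+1})$ that extend to an entwined family, and hence (by the stabilisation remark above, extending each tuple canonically to $n>l+1$) an enumeration of all entwined families.

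The main obstacle, and the point to be careful about, is making precise the finite representation of an entwined structure: a priori $\Hf_n\sem{\rho}$ for active $\rho$ has infinitely many elements, so $\assigned{X}{\Bs_n}$ is an infinite object and one must check it admits a finite description uniform enough to be enumerated. This is exactly what \cref{lem:order-lvnre} provides — via the encoding of a monotone $f:\fspace{\WW}{\Hf_n\sem{\tynint}}$ (or a curried such function) by a map into $\WW\cup\{\infty\}$-valued tables indexed by the finite sets $\Hf_n\sem{\sigma_i}$, together with definability of upsets (\cref{ass:ordering}) to turn these into actual $\Sigma_\WW$-formulas. So the proof is essentially an orchestration: decidability/computability of each $\Hf_n$ from $\Bs_{n-1}$, finiteness (\cref{lem:order-finite}) for inactive predicate choices, r.e.-ness (\cref{lem:order-lvnre}) for active predicate choices, finiteness of $\sigf$ and of $l$ for termination of the induction, and dovetailing to combine the countably many enumerations into one.
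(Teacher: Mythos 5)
Your proposal is correct and follows essentially the same route as the paper's proof: observe that the family stabilises at $\Bs_{l+1}$ (so only finitely many stages matter), note $\Bs_0$ is unique, and at each stage enumerate the choices for the newly interpreted predicates using the finiteness of $\Hf_n\sem{\rho}$ for inactive types (\cref{lem:order-finite}) and the r.e.-ness for active types (\cref{lem:order-lvnre}), combining the enumerations by dovetailing. The extra care you take about finite representations of active-type interpretations is exactly what \cref{lem:order-lvnre} supplies, so no gap remains.
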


For each resolution proof rule, if $\Bs_{l+1}$ entails the premises of the rule, then it entails the conclusion. Since $\Bs_{l+1}\sem{\bot}=0$, there is no resolution proof of $\bot$.

\begin{restatable}{lemma}{orderemtonore}
\label{lem:order-emtonore}
If there is an \oconfused family such that $\Bs_{l+1}$ models $\Gamma$, then there is no resolution proof of $\bot$ from $\Gamma$.
\end{restatable}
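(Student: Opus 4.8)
The natural approach is to establish a soundness property for the resolution proof system relative to entwined structures, and then observe that $\bot$ cannot be derived. More precisely, I would prove: if $\{\Bs_n\}_{n\in\omega}$ is an entwined family, $\Bs_{l+1}\models\Gamma$, and $\Gamma\Res^*\Gamma'$, then $\Bs_{l+1}\models\Gamma'$. The proof is by induction on the length of the derivation, so it suffices to show that each single application of a proof rule preserves the property of being modelled by $\Bs_{l+1}$. Since $\Bs_{l+1}\sem{\bot}=0$ and hence $\Bs_{l+1}\not\models\{\bot\}\cup\Gamma'$ for any $\Gamma'$, it follows that $\Gamma\not\Res^*\{\bot\}\cup\Gamma'$, which is the claim. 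Note that, although Theorem~\ref{thm:completeness} gives soundness for \emph{standard} semantics, we cannot invoke it directly here because $\Bs_{l+1}$ lives in the frame $\Hf_{l+1}=\oconfusionf{\Bs_l}_{l+1}$, not in the standard frame; so a separate argument is needed, but it is essentially the same argument carried out in a general frame.

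The two rules to check are Resolution and Refutation. For \textbf{Resolution}, suppose $\neg R\,\overline M\lor G$ and $G'\lor R\,\overline x$ are both modelled by $\Bs_{l+1}$, and let $\alpha$ be any $(\Delta,\Hf_{l+1})$-valuation. If $\sinti{\Bs_{l+1}}{G}(\alpha)=1$ we are done, since the conclusion $G\lor G'[\overline M/\overline x]$ is a disjunction containing $G$. Otherwise, from $\Bs_{l+1}\models \neg R\,\overline M\lor G$ we get $\sinti{\Bs_{l+1}}{R\,\overline M}(\alpha)=0$, i.e.\ $\sinti{\Bs_{l+1}}{R\,\overline x}(\alpha[\overline x\mapsto \sinti{\Bs_{l+1}}{\overline M}(\alpha)])=0$; here one uses a routine substitution lemma for the denotation of $\sigs$-formulas, valid in any frame (the substituted terms $\overline M$ are well-typed of the appropriate argument types, so their denotations lie in $\Hf_{l+1}\sem{\Delta(x_i)}$, so the modified valuation is legitimate). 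Then $\Bs_{l+1}\models G'\lor R\,\overline x$ forces $\sinti{\Bs_{l+1}}{G'}(\alpha[\overline x\mapsto\cdots])=1$, and the substitution lemma again yields $\sinti{\Bs_{l+1}}{G'[\overline M/\overline x]}(\alpha)=1$, so the conclusion holds at $\alpha$. For \textbf{Refutation}, suppose $G'\defeq\neg x_1\,\overline M_1\lor\cdots\lor\neg x_m\,\overline M_m\lor\neg\phi_1\lor\cdots\lor\neg\phi_n$ is modelled by $\Bs_{l+1}$ while the side condition holds: there is a valuation $\alpha$ with $\As,\alpha\models\phi_1\land\cdots\land\phi_n$. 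Build the valuation $\alpha^\top$ agreeing with $\alpha$ on variables of base type $\iota$ and sending each relational variable $x$ to $\top_{\Delta(x)}\in\Hf_{l+1}\sem{\Delta(x)}$ (which lies in the frame since, by construction, $\oconfusionf{-}_{l+1}$ contains $\top$ at the relevant types). Since background atoms contain only base-type variables, $\As,\alpha^\top\models\phi_j$ for each $j$, so $\sinti{\Bs_{l+1}}{\phi_j}(\alpha^\top)=1$. Also $\sinti{\Bs_{l+1}}{x_i\,\overline M_i}(\alpha^\top)=\top(\cdots)=1$ for each $i$, since $\top$ applied to anything is $1$. Hence $\sinti{\Bs_{l+1}}{G'}(\alpha^\top)=0$, contradicting $\Bs_{l+1}\models G'$. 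So the Refutation rule is never applicable to a clause set modelled by $\Bs_{l+1}$, which in particular handles the step that would produce $\bot$.

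The main obstacle I anticipate is purely bookkeeping: verifying the substitution lemma and the well-definedness of the modified valuations \emph{in the frame $\Hf_{l+1}$} rather than in the standard frame, i.e.\ checking that all the denotations invoked actually land inside the (smaller) sets $\oconfusionf{\Bs_l}_{l+1}\sem{\sigma}$ prescribed by Definition~\ref{def:oconfusionf}. The Resolution case is where this matters most, because it substitutes terms $\overline M$ into a clause; one needs that $\sinti{\Bs_{l+1}}{M}(\alpha)\in\Hf_{l+1}\sem{\sigma}$ whenever $\Delta\vdash M\from\sigma$, which is exactly the general denotation-typing property $\sinti{\As}M(\alpha)\in\sinti{\As}\sigma$ recorded in the preliminaries, instantiated at the frame $\Hf_{l+1}$ and the structure obtained from $\As$ and $\Bs_{l+1}$. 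Once that is in place, the rest is the standard resolution soundness argument, merely relativised to an arbitrary frame, and the conclusion $\Bs_{l+1}\not\vdash\bot$ is immediate from $\Bs_{l+1}\sem{\bot}=0$.
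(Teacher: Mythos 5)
Your proposal is correct and follows essentially the same route as the paper's proof: a rule-by-rule soundness argument relative to the entwined structure, with the Resolution case handled via a substitution lemma in the frame $\Hf_{l+1}$ and the Refutation case handled by the valuation sending relational variables to $\top$ (the paper likewise relies on the presence of top elements in the entwined frame). Your extra remarks about not invoking Theorem~\ref{thm:completeness} directly and about checking denotations land in the non-standard frame are just the bookkeeping the paper leaves implicit.
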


\changed[sr]{
    On the other hand, the inductive construction gives enough freedom to choose appropriate interpretations for the predicate symbols whenever the clauses are satisfiable.  Any model can be reconstructed as an entwined structure that also satisfies the clauses, with the relationship between the two mediated by a logical relation.
}

\begin{restatable}{lemma}{ordersmtoem}
	\label{lem:order-smtoem}
If\; $\Gamma$ is satisfiable then there is an \oconfused structure that models $\Gamma$.
\end{restatable}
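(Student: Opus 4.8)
The plan is to take an arbitrary standard model $\Bs^\ast$ of $\Gamma$ (one exists since $\Gamma$ is satisfiable) and to build an entwined family $\{\Bs_n\}_{n\in\omega}$ with $\Bs_{l+1}\models\Gamma$ by recursion on the order $n$, carrying along a logical relation that links $\Bs^\ast$ to each $\Bs_n$ and allows truth of clauses to be transferred from $\Bs^\ast$ to $\Bs_{l+1}$. Two preliminary observations make the construction possible. First, because $\Bs^\ast\models\Gamma$ it satisfies every limit clause, so for each predicate symbol $X:\overline\tau\to\WW\to\nu$ of active type the interpretation $\assigned{X}{\Bs^\ast}$ is upward closed in its $\WW$-argument; this monotonicity is exactly what is needed to land inside the monotone-function components of the entwined frames (and it also means the limit clauses hold automatically in any entwined structure, so only the non-limit clauses need attention). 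Second, by \cref{lem:order-finite} the entwined frame assigns a finite set to every initial type of order $<n$ (and to inactive types of order $=n$) once the lower-order part of the family is fixed, which is what keeps the relevant argument spaces finite.

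For each type $\sigma$ I would define a relation $R_\sigma$ between $\Sf\sem\sigma$ and the entwined-frame set $\oconfusionf{\Bs_{n-1}}_n\sem\sigma$ (for the appropriate $n$) by recursion on $\sigma$: the identity on $\boolsort$, $S$ and $\WW$; and at an arrow type $\sigma_1\to\sigma_2$, $(f,g)\in R_{\sigma_1\to\sigma_2}$ iff $(a,b)\in R_{\sigma_1}$ implies $(f(a),g(b))\in R_{\sigma_2}$. Interleaved with the order-recursion below, I would prove by induction on $\sigma$ the package of structural properties that make $R$ usable: (a) $R_\sigma$ is a partial function $\Sf\sem\sigma\rightharpoonup\oconfusionf{\Bs_{n-1}}_n\sem\sigma$ that is \emph{surjective} (every element of the entwined set is related to some standard element); (b) $R_\sigma$ relates the standard top to the entwined top; and, given (a) and (b), (c) a fundamental lemma: for every $\sigf$-term $\Delta\vdash M:\sigma$ and every pair of pointwise $R$-related valuations $\alpha$ (over $\Sf$) and $\beta$ (over $\Hf_{l+1}$), $(\Bs^\ast\sem M(\alpha),\Bs_{l+1}\sem M(\beta))\in R_\sigma$ — extended to $\sigf$-formulas, where background atoms are interpreted identically on both sides via the fixed structure $\As$ and $R_\boolsort$ is equality.

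The recursive construction of the family is then: $\Bs_0$ is the unique $(\emptyset,\Sf)$-structure (vacuously related to $\Bs^\ast$); given $\Bs_{n-1}$ related to $\Bs^\ast$, form $\Hf_n=\oconfusionf{\Bs_{n-1}}_n$ and, for each predicate symbol $X:\rho\in\sigf$ of order exactly $n$, choose an element $\assigned{X}{\Bs_n}\in\Hf_n\sem\rho$ with $(\assigned{X}{\Bs^\ast},\assigned{X}{\Bs_n})\in R_\rho$, retaining all previously chosen interpretations. The real content is the \emph{definability claim} that such a witness exists inside $\Hf_n\sem\rho$, and this is where the initial restriction is essential. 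If $\rho$ is active, then $\rho=\overline\tau\to\WW\to\nu$ where every $\order(\tau_i)<\order(\WW\to\nu)$ and $\nu$ is inactive, so by \cref{lem:order-finite} all of $\Hf_n\sem{\tau_i}$ and $\Hf_n\sem\nu$ are finite; the required witness is obtained by currying $\assigned{X}{\Bs^\ast}$ over the finitely many (definable, by property (a) at lower orders) entwined arguments and replacing each resulting monotone tail $\WW\to\Sf\sem\nu$ by its restriction to a monotone map $\WW\to\Hf_n\sem\nu$, which is precisely an inhabitant of the monotone-function component in case (iii) of \cref{def:oconfusionf}. If $\rho$ is inactive the same currying works with no $\WW$-tail. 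Dually, surjectivity of $R$ onto the term-model component $\{\top\}\cup\{\assigned{X}{\Bs_{n-1}}\,\overline s\}$ of an active order-$(n-1)$ type (case (ii) of \cref{def:oconfusionf}) is immediate from property (b) together with the invariant $(\assigned{X}{\Bs^\ast},\assigned{X}{\Bs_{n-1}})\in R$ applied to the identity on $S$.

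Finally, since $\Sigma_{l+1}=\sigf$, the structure $\Bs_{l+1}$ interprets all of $\sigf$, is entwined by construction (\cref{def:cconfused-family}), and satisfies $(\assigned{X}{\Bs^\ast},\assigned{X}{\Bs_{l+1}})\in R_\rho$ for every $X:\rho\in\sigf$. Given a clause $C\in\Gamma$ and any $\Hf_{l+1}$-valuation $\beta$, surjectivity (property (a)) yields an $\Sf$-valuation $\alpha$ pointwise $R$-related to $\beta$; since $\Bs^\ast\models C$ we have $\Bs^\ast\sem C(\alpha)=1$, so by the fundamental lemma $\Bs_{l+1}\sem C(\beta)=1$. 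Hence $\Bs_{l+1}\models\Gamma$. I expect the main obstacle to be exactly the simultaneous proof of the definability claim and of property (a): matching the case analysis of \cref{def:oconfusionf} (inherited finite part at low orders, term-model part at active order-$(n-1)$ types, monotone-function part at order-$n$ types) against the type-recursion defining $R$, and verifying at each case that the entwined frame is neither too small (which needs the monotonicity extracted from the limit clauses and the finiteness from \cref{lem:order-finite}) nor incompatible with $R$ being left-functional (which needs surjectivity one order down, hence the interleaving of the two inductions).
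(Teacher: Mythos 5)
Your overall architecture (an order-by-order construction of an \oconfused family, a logical relation tying each stage to a fixed standard model $\Bs^\ast$, and a transfer of satisfaction at the end) is the same as the paper's, but the specific relation you pick breaks the argument at precisely the point you yourself call ``the real content''. Taking $R_\boolsort$ to be \emph{equality} forces the entwined witness for $X$ to agree exactly with $\assigned{X}{\Bs^\ast}$ on all $R$-related arguments, and such a witness need not exist. At a relational argument position the entwined frame collapses the standard function space onto a small set --- e.g.\ case (ii) of \cref{def:oconfusionf} interprets a lower-order active type by $\{\top\}\cup\{\assigned{Y}{\Bs_{n-1}}\,\overline{s}\}$ --- so $R$ at such a type relates \emph{many} distinct standard elements to one and the same entwined element (namely every standard function that agrees with it on the $R$-definable arguments). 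An arbitrary model $\Bs^\ast$ of $\Gamma$ is free to distinguish two such arguments, $\assigned{X}{\Bs^\ast}(F)\neq\assigned{X}{\Bs^\ast}(F')$ with $F,F'$ both related to the same entwined $G$; then no $g$ in the entwined frame can satisfy $g(G)=\assigned{X}{\Bs^\ast}(F)$ and $g(G)=\assigned{X}{\Bs^\ast}(F')$ simultaneously, so your ``curry and restrict'' witness is ill-defined and the invariant $(\assigned{X}{\Bs^\ast},\assigned{X}{\Bs_n})\in R_\rho$ cannot be established. (A concrete instance is $Z:\WW\to\rho\to o$ from \cref{eg:xyz}: the $(\WW\to o)$-position inside $\rho$ is interpreted entwinedly as $\{\top\}$, so all standard predicates of type $\rho$ that agree on the constant-true function are identified by $R$, while a model may interpret $Z$ so as to separate them.) The initial restriction only buys finiteness of the entwined argument spaces (\cref{lem:order-finite}); it does not make $R$ injective on the standard side, which is what your definability claim silently needs.

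The paper resolves exactly this difficulty by making the relation asymmetric, $b\arel_o b'\defeq b\le b'$ at $\boolsort$ (identity only at the base sorts), and by defining the new interpretations as infima, $\assigned{Y}{\Bs_n}\,\overline{f}:=\bigwedge_{\overline{f}\lr\overline{g}}\assigned{Y}{\Bs^\ast}\,\overline{g}$, which is well-defined no matter how many standard tuples lie above a given entwined tuple; monotonicity in the $\WW$-argument then comes from the limit clauses, as in your sketch. The price of the asymmetry is paid in the satisfaction step, which can no longer be a single equality transfer: for goal clauses one picks the related standard valuation sending relational variables to $\top$ and uses \cref{cor:lrposterm} (positive subformulas can only lose truth on the entwined side, so negated atoms can only gain it), while for definite clauses one additionally exploits the infimum construction itself --- if the head is false in $\Bs_{l+1}$ at $\alpha$, some tuple $\overline{g}\glr\alpha(\overline{x})$ witnesses falsity of the head in $\Bs^\ast$, and relatedness of the body finishes the argument. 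If you insist on an equality-based relation you would have to restrict attention to a canonical model and prove it never separates standard arguments identified by $R$, which is a different and unproven claim; as written, your construction does not go through.
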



Observe that, for any \oconfused family, $\Hf_{l+1}\sem{\rho}$ is finite whenever $X:\rho\in\sigf$.
We can ask whether a particular $\Bs_{l+1}\models \Gamma$ and
this is decidable because it is equivalent to a formula in the first-order theory of $\Sigma_\WW$.

\begin{restatable}{lemma}{orderdecidablemodel}\label{lem:order-decidablemodel}
	Given an \oconfused structure $\Bs_{l+1}$, determining if it satisfies a goal or definite clause $G$ is decidable.
\end{restatable}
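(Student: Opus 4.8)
The plan is to show that the truth value of a goal or definite clause $G$ under an \oconfused structure $\Bs_{l+1}$ can be expressed as a sentence in the first-order theory of $(\Sigma_\WW,\As_\WW)$, which is decidable by \cref{ass:ordering}. First I would recall (from \cref{lem:order-finite}, together with $\Sigma_{l+1}=\sigf$) that for every predicate symbol $X:\rho\in\sigf$, the domain $\Hf_{l+1}\sem{\rho}$ is finite; more generally, for every type $\sigma$ occurring as an argument type of some $X:\rho\in\sigf$ with $\sigma\neq\WW$, the set $\Hf_{l+1}\sem{\sigma}$ is finite. Indeed every such $\sigma$ is either $S$ (finite), or an initial type of order $\le l$ (so \cref{lem:order-finite} applies since $l+1>\order(\sigma)$), or $\WW$ itself. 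The only infinite domain we must reason about is $\Hf_{l+1}\sem{\WW}=\WW$, and occurrences of $\WW$ are controlled: by the initial restriction each predicate has at most one $\WW$-typed argument.

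The key step is to translate $G$ into a $\Sigma_\WW$-sentence. A clause of limit Datalog has the shape $\neg A_1\lor\cdots\lor\neg A_k(\lor R\,\overline x)$, implicitly universally quantified over the variables in $\Delta$; equivalently it is the universal closure of $\bigwedge_i A_i \to R\,\overline x$ (or $\bigwedge_i A_i\to\bot$ for a goal clause). Each variable of type $S$ or of a finite higher type ranges over a finite set, so I would replace every such universal quantifier by a finite conjunction over the elements of that finite domain — this is effective because these domains are not only finite but computable: $S$ is given explicitly, and each $\Hf_{l+1}\sem{\rho}$ for $X:\rho\in\sigf$, being built by \cref{def:oconfusionf} from smaller finite pieces, can be enumerated. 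Variables of type $\WW$ remain, giving genuine first-order quantifiers over $\WW$. After this grounding, every atom is either (a) a background atom $\phi$, already a $\Sigma_\WW$-formula (possibly with free $\WW$-variables), or (b) a foreground atom $X\,\overline N$ (or $x\,\overline N$, but in a first-order-free clause the head is a predicate symbol and, by \cref{def:hold}(ii) in the first-order case, no atom is headed by a variable; in general higher-order initial case variable-headed atoms are handled because the relevant variable ranges over a finite domain too) where each $N_j$ is a closed term of a finite type except possibly for one argument which is a $\WW$-term $t$ built from $\Sigma_\WW$-function symbols and $\WW$-variables. For such a foreground atom I would evaluate $\Bs_{l+1}$ on the fixed finite arguments to obtain an element $f\in\Hf_{l+1}\sem{\WW\to\boolsort}$; since $f$ is monotone and (if it is of the form $\assigned{X}{\Bs}\overline s$ or a member of the relevant $\oconfusionf{\cdot}$-set) upward closed, the set $\{w\in\WW:f(w)=1\}$ is an upset, hence by \cref{ass:ordering} there is a $\Sigma_\WW$-formula $\psi_f(x)$ defining it; I replace the atom $X\,\overline N$ by $\psi_f(t)$. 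When no $\WW$-argument is present the atom evaluates directly to a truth constant. Conjoining/disjuncting/negating as dictated by the clause structure and prefixing the surviving $\WW$-quantifiers yields a $\Sigma_\WW$-sentence that holds iff $\Bs_{l+1}\models G$.

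The main obstacle, and the place requiring care, is making the evaluation of $\Bs_{l+1}$ on the finite-type arguments genuinely effective: one must argue that the finite sets $\Hf_{l+1}\sem{\sigma}$ produced by the entwined construction come with a computable representation of their elements (as finite functions, ultimately built from $\assigned{X}{\Bs}$'s and the $\top$'s), and that the denotation function $\sinti{\Bs_{l+1}}{-}$ is computable on closed $\sigf$-terms with a single distinguished $\WW$-hole. This is essentially bookkeeping: by induction on order, an element of $\Hf_{l+1}\sem{\sigma}$ is a finite table over finitely many already-represented elements, and the term $N$ can be evaluated compositionally, the single residual $\WW$-argument being symbolic. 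Once effectiveness of evaluation and of extracting the defining formula $\psi_f$ (from \cref{ass:ordering}, which guarantees existence; one needs that a description can be computed from $f$, which holds because $f$ is presented by finitely many data and, in the WQO case, by its finitely many minimal elements) is in hand, the reduction to the decidable first-order theory of $(\Sigma_\WW,\As_\WW)$ is immediate, and the lemma follows. I would remark that, as in the proof of \cref{thm:fold}, the monotonicity/upward-closure enforced by case (iii) of \cref{def:oconfusionf} and by the limit clauses is exactly what makes every relevant $\WW$-indexed family an upset, so that \cref{ass:ordering} is applicable.
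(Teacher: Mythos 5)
There is a genuine gap in your translation step. After you ground the variables of finite type, you claim that every foreground atom has the form $X\,\overline N$ in which all arguments are closed except possibly a single $\WW$-term $t$ sitting in the (unique) $\WW$-typed argument position, so that the atom can be replaced by a single upset formula $\psi_f(t)$. The initial restriction does not give you this: it constrains the \emph{types} of the argument positions of $X$, not the variables occurring in the terms that fill them. A clause may contain an atom such as $X\,(Y\,y)\,x$ with $x,y:\WW$, where the argument $Y\,y$ has finite (inactive, initial) type but is \emph{not} closed because it contains the $\WW$-variable $y$; since $\WW$ is infinite, $y$ cannot be grounded away, and partially evaluating $\Bs_{l+1}$ on the ``fixed finite arguments'' is simply not possible. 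This nested situation is exactly the crux of the paper's argument (and the reason its \cref{eg:xyz}-style terms are discussed at all), and your proposal never addresses it; the single-upset replacement only covers the base case where the sole remaining variable occurs directly in the $\WW$-position.

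The paper closes this gap by an induction on the number of free variables per atom. In an atom $M$ still containing variables one locates a subterm $N = X\,\overline M\,x\,\overline N$ whose only variable is $x:\WW$; because $X$ is initial, the type $\overline\tau\to\boolsort$ of $N$ has finite denotation in $\Hf_{l+1}$, so as $x$ ranges over $\WW$ the value of $N$ takes only finitely many values, and for each tuple $\overline t$ over the finite domains there is (by \cref{ass:ordering}) a background formula $\phi_{\overline t}(x)$ defining the upset on which $N$ applied to $\overline t$ holds. Partitioning $\WW$ by the Boolean combinations of these formulas, the atom $M$ is replaced by a finite disjunction of guarded formulas $\bigwedge_{\overline t\in S}\phi_{\overline t}(x)\land\bigwedge_{\overline t\notin S}\lnot\phi_{\overline t}(x)\land M[N_S]$, where $N_S$ denotes the constant value of $N$ on that region; each disjunct has one fewer free variable, so the induction terminates with a pure $\Sigma_\WW$-formula, whereupon decidability of the background theory finishes the proof. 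Your preprocessing (grounding finite-type variables via constants for the finitely many elements of $\Hf_{l+1}\sem{\rho}$, cf.\ \cref{lem:order-finite}) and your appeal to upset definability are in line with the paper, and your concern about effective representation of the entwined domains is legitimate but secondary; the missing variable-elimination induction is the essential step you would need to add.
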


\paragraph*{Proof of \cref{thm:order-main-sat-result}}
If there is a refutation of $\Gamma$ by resolution, then we know $\Gamma$ is $\As$-unsatisfiable. By \cref{lem:order-emtonore}, there is no \oconfused structure $\Bs_{l+1}$ such that $\Bs_{l+1}\models\Gamma$.

If there is no resolution proof of $\bot$, then there is some model for $\Gamma$ in standard semantics. This model can be converted into an \oconfused model by \cref{lem:order-smtoem}. Hence enumerating \oconfused structures---possible because they are r.e. (\cref{lem:order-re}) and determining if $\Bs_{l+1}\models\Gamma$ is decidable (\cref{lem:order-decidablemodel})---will find a model.

Therefore, we may interleave a search for resolution proofs of $\bot$ with a search for \oconfused models resulting in a decision procedure for the initial limit Datalog decision problem.
\qed

\begin{example}
For a concrete example of an entwined structure, see Appendix \cref{sec:entwined-example}.
\end{example}

\newcommand{\FOLD}{first-order limit \dz{}}
\newcommand{\HOILD}{higher-order initial limit \dz{}}

\begin{remark}[Higher-order initial limit \dz{}]
It can be shown that higher-order initial limit \dz{} is strictly more expressive than first-order limit \dz{}. By this, we mean that there are queries about databases (aka structures on finite sets) that can be expressed with \HOILD{} but not \FOLD{}.
This follows from a result in \cite{DBLP:journals/tplp/CharalambidisNR19} which shows that the data complexity of $k$-order Datalog lies in $(k-1)$-EXPTIME. Since this only uses finite sets, the programs involved are valid \HOILD{} programs. \cite{DBLP:conf/ijcai/KaminskiGKMH17} shows that \FOLD{} has more reasonable time bounds (coNP-complete in database size) hence it must be less expressive.
\end{remark}

\section{Examples}
\label{sec:examples}

In this section, we give examples of how first-order limit Datalog problems can be used with the background theory of tuples of naturals (we use currying to avoid explicitly specifying projection functions), and other WQOs.

\subsection{Theory of tuples of naturals}

In the context of limit Datalog, the theory of tuples of naturals with componentwise ordering is much more powerful than that of integers as demonstrated by the examples below, the latter of which could not be accomplished using limit \dz.

The following set of clauses express multiplication:
\begin{align*}
	F &: S\to \mathbb{N}\to\mathbb{N}\to\boolsort\\
	F\,s\,x\,y &\impliedby  x\ge 0\land y\ge a \land D( s, a, b)\\
	F\,s\,x\,y &\impliedby  y+1\ge n\land F\,r\,n \land x\ge r + b \land D(s,a,b)\\
	G\, x &\impliedby  F\,s\,x\,0
\end{align*}
Here $D$ is a database predicate and we assume that for each $s\in S$ there is a unique pair of natural numbers $a,b$ such that $(s,a,b)\in D$.
In the canonical model of this set of clauses, the interpretation of $G$ is the set $\{x\in\mathbb{N}:\exists a,b,s. (s,a,b)\in D \land x\ge ab\}$.

This does not lead to undecidability like \cref{eg:multiplication} because this only expresses multiplication of constants, not variables.

This may be extended to express exponentiation $\alpha^\beta$ (demonstrated below) and further to other hyperoperations.
\lo{What are hyperoperations?}\tcb{\url{https://en.wikipedia.org/wiki/Hyperoperation} , a sequence that begins "addition, multiplication, exponentiation" related to the Ackermann function}
\begin{align*}
	F &:  \mathbb{N}\to\mathbb{N}\to\mathbb{N}\to\boolsort\\
	F\,x\,y\,z &\impliedby  x\ge 0\land y\ge \alpha\land z\ge 0\\
	F\,x\,y\,z &\impliedby  x\ge 1\land y\ge 0\land z\ge\beta\\
	F\,x\,y\,z &\impliedby  z+1\ge n\land F\,d\,0\,n\land y+1\ge m\land {}\\
	&\qquad F\,r\,m\,z \land x\ge r + d\\
	G\, x &\impliedby  F\,x\,0\,0
\end{align*}

\subsection{Lossy counter machines}

\newcommand{\inccmd}[3]{#1:\ #2 \mathrel{:=} #2 + 1;\ \mathsf{goto}\ #3}
\newcommand{\deccmd}[4]{#1:\ \mathsf{if}\ #2 = 0\ \mathsf{then\ goto}\ #3\ \mathsf{else}\ c_i \mathrel{:=} c_i - 1;\ \mathsf{goto}\ #4}

A \emph{(classic) lossy $n$-counter machine ($n$-LCM)}, due to \cite{MAYR2003337}, consists of: a finite set of states $Q$, an initial state $q_0 \in Q$, a final state $q_f \in Q$, $n$ counters $c_1,\ldots,c_n$, and a finite set of instructions, each of one of the two shapes A or B: 
\begin{description}
\item[A.] $(\inccmd{q}{c_i}{q'})$
\item[B.] $(\deccmd{q}{c_i}{q'}{q''})$
\end{description}

A \emph{configuration} $s$ of such a machine is an $(n+1)$-tuple of shape $(q,m_1,\ldots,m_n)$ where $q \in Q$ and each $m_i \in \mathbb{N}$ being the current value of counter $c_i$.  

\newcommand{\lossy}{\xrightarrow{l}}

A \emph{transition} of such a machine consists of spontaneous loss, followed by the execution of an instruction, followed by spontaneous loss:
\[
  s_1 \Rightarrow s_2 \quad\text{iff}\quad \exists s_1',\,s_2'.\ s_1 \lossy s_1' \to s_2' \lossy s_2
\]
The execution of an instruction $(p,m_1,\ldots,m_i,\ldots,m_n) \to (p',m_1,\ldots,m_i',\ldots,m_n)$ is defined iff:
\begin{itemize}
   \item there is an instruction of shape A and $p=q$, $m_i' = m_i + 1$ and $p'=q'$
   \item or, there is an instruction of shape B and $p=q$, $m_i = 0$, $m_i' = 0$ and $p'=q'$
   \item or, there is an instruction of shape B and $p=q$, $m_i > 0$, $m_i' = m_i - 1$ and $p'=q''$.
\end{itemize}
The spontaneous (classic) loss $(q,m_1,\ldots,m_n) \lossy (q,m_1',\ldots,m_n')$ is defined iff $\forall i \in [1,n].\ m_i' \leq m_i$.
Let us write $\Rightarrow^*$ for the reflexive, transitive closure of the transition relation.

The \emph{reachability problem} for $n$-LCM is to decide the following: given a configuration $s$, does $(q_0,0,\ldots,0) \Rightarrow^* s$?  It is known that the reachability problem is decidable as a special case of \cite{bouajjani-mayr-stacs99}.  Here we give an alternative approach using initial limit Datalog over tuples of natural numbers.  

To decide the problem, it suffices to construct a set of definite clauses $C$ over the foreground signature 
\[
   \{R_q : \mathbb{N} \to \cdots{} \to \mathbb{N} \to o \mid q \in Q\}
\] 
(each $R_q$ is of arity $n$) with canonical model $\mathcal{B}$, in such a way that, for each state $q$, we have $\mathcal{B} \models R_q\ m_1\ \cdots m_n$ iff $(q_0,0,\ldots,0) \Rightarrow^* (q,m_1,\ldots,m_n)$.
We define $C$ as follows, abbreviating $x_1 \cdots{} x_n$ and $y_1 \cdots{} y_n$ by $\vec{x}$ and $\vec{y}$ respectively.
\begin{compactitem}
   \item The clause $R_{q_0}\,\vec{x} \leftarrow \bigwedge_{i \in [1,n]} x_i = 0$ is in $C$.
   \item For each state $q \in Q$, the following limit clause is in $C$:
   \[
      R_{q}\ \vec{x} \leftarrow R_q\ \vec{y} \wedge \bigwedge_{i \in [1,n]} x_i \leq y_i
   \]
   \item For each instruction of shape A, the following clause:
     \[
     	R_{q'}\ \vec{x} \leftarrow
     	R_{q}\ \vec{y} \wedge x_i = y_i + 1 \wedge \bigwedge_{j \in [1,n] \setminus \{i\}} x_j = y_j
     \]
   \item For each instruction of shape B, the two clauses:
   \begin{align*}
         R_{q'}\ \vec{x} &\leftarrow
         R_{q}\ \vec{y} \wedge y_i = 0 \wedge \bigwedge_{j \in [1,n]} x_j = y_j \\
         R_{q''}\ \vec{x} &\leftarrow
         R_{q}\ \vec{y} \wedge y_i > 0 \wedge x_i = y_i - 1 \!\!\bigwedge_{j \in [1,n] \setminus \{i\}} \!\!x_j = y_j
   \end{align*}
\end{compactitem}

\paragraph*{Lossy channel systems and other WSTSs} Lossy counter machines are an example of a well structured transition system (WSTS) \cite{DBLP:journals/tcs/FinkelS01}. Other examples of these, such as lossy channel systems (LCSs), also have decidable reachability problems, but these do not immediately fall under our theorem because the relevant first-order theories are not decidable (in the case of LCSs  the relevant theory is that of strings with concatenation with constants and the subword ordering). In some cases these results can be proved by inspecting details of exactly where in our proof the decidability property is required. 

Part of our result is subsumed by the decidability of the coverability problem for WSTSs - specifically the first-order fragment where clauses only have a single foreground atom in the body and the background theory is a WQO.

\subsection{Languages ordered by the subword order}

\newcommand\sqsubsetdot{\mathbin{{\sqsubset}\mkern-6mu{\cdot}}}

The subword relation is a simple and important example of a WQO.
\cite{DBLP:conf/fossacs/KuskeZ19} study the decidability of first-order theories (and fragments thereof) of languages 
with the subword order.
Recall that a language $L \subseteq \Delta^\ast$ is \emph{bounded} if $L \subseteq w_1^\ast \, \cdots \, w_n^\ast$ for some $n \geq 0$, and $w_1, \cdots, w_n \in \Delta^\ast$. 
Consider structures of the form $(L, \sqsubseteq, (w)_{w \in L})$ for some $L \subseteq \Sigma^\ast$ where $\sqsubseteq$ is the subword relation, and we can use every word from $L$ as a constant.
\lo{The idea is to argue in terms of vectors $(x_1, \cdots, x_n) \in \mathbb{N}^n$ for which $w_1^{x_1} \cdots w_n^{x_n} \in L$.}

\begin{theorem}[Kuske and Zetzsche \cite{DBLP:conf/fossacs/KuskeZ19}]
\label{thm:kuske and zetzsche}
Let $L \subseteq \Delta^\ast$ be bounded and context free.
Then the first-order theory of $(L, \sqsubseteq, (w)_{w \in L})$ is decidable.
\end{theorem}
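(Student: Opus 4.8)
The plan is to exhibit a first-order interpretation of $(L, \sqsubseteq, (w)_{w \in L})$ in Presburger arithmetic $(\mathbb{N}, +, 0, 1, \le)$, whose decidability is classical; since decidability transfers along first-order interpretations, this suffices. Fix a bounded expression $L \subseteq w_1^\ast \cdots w_n^\ast$ with $w_1, \dots, w_n \in \Delta^\ast$, and associate to each vector $\bar x = (x_1, \dots, x_n) \in \mathbb{N}^n$ the word $u_{\bar x} := w_1^{x_1} \cdots w_n^{x_n}$. The domain of the interpretation will be $\Phi_L := \{\bar x \in \mathbb{N}^n : u_{\bar x} \in L\}$, with equality interpreted by a definable equivalence (since $\bar x \mapsto u_{\bar x}$ need not be injective), $\sqsubseteq$ by a definable relation on exponent vectors, and each constant $w = u_{\bar c}$ by the definable set of exponent vectors representing $w$.

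First I would invoke the Ginsburg--Spanier theorem: for a bounded context-free $L$, the iteration set $\Phi_L$ is semilinear, hence defined by a Presburger formula $\phi_L(\bar x)$. Next, I would show that the relation $E(\bar x, \bar y)$ holding iff $u_{\bar x} = u_{\bar y}$ is Presburger-definable; this is a matter of combinatorics on words, as the set of pairs of exponent vectors yielding the same word is governed by the finitely many algebraic relations among $w_1, \dots, w_n$ and is again semilinear. The constants are then handled trivially: each $w \in L$ equals $u_{\bar c}$ for a fixed computable $\bar c$, and is represented by the definable set $\{\bar x : E(\bar x, \bar c)\}$.

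The crux is to show that the subword relation pulls back to a semilinear relation, i.e.\ that $\Theta := \{(\bar x, \bar y) : u_{\bar x} \sqsubseteq u_{\bar y}\}$ is Presburger-definable. The difficulty is that an embedding of $u_{\bar x}$ into $u_{\bar y}$ need not respect the block structure: the letters of a block $w_i^{x_i}$ of the source may be scattered across several consecutive blocks of the target. The key step is a normal-form lemma: whenever an embedding exists, one exists that is \emph{almost block-monotone}, in the sense that each source block is matched inside a contiguous run of target blocks, the amount of ``overlap'' at run boundaries is bounded by a constant depending only on $w_1, \dots, w_n$, and within the bulk of each run the matching is periodic. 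Quantifying over the (bounded) boundary data, and encoding the bulk matchings as linear (in)equalities between the $x_i$ and the $y_j$, then yields the required Presburger formula $\theta(\bar x, \bar y)$. Establishing this normal form --- bounding how far an optimal embedding can deviate from block alignment --- is where all the combinatorial effort lies, and I expect it to be the main obstacle; it is essentially the route taken by Kuske and Zetzsche.

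Finally I would assemble the interpretation: domain $\phi_L$, equality interpreted by $E$, order by $\theta$, and each constant by its representing set; the congruence conditions (that $\theta$ and the constant sets are $E$-invariant) are automatic, since they depend only on the underlying words, not on the choice of exponent representative. Decidability of Presburger arithmetic then yields decidability of the first-order theory of $(L, \sqsubseteq, (w)_{w \in L})$, as claimed.
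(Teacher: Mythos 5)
The paper does not prove this theorem itself: it is imported from Kuske and Zetzsche, with the proof indicated only as ``interpreting the structure in Presburger arithmetic'' by arguing in terms of the exponent vectors $(x_1,\dots,x_n)\in\mathbb{N}^n$ with $w_1^{x_1}\cdots w_n^{x_n}\in L$. Your plan reconstructs exactly that interpretation (Ginsburg--Spanier semilinearity for the domain, a Presburger-definable pull-back of $\sqsubseteq$ and of equality), so it takes essentially the same approach as the paper; the normal-form lemma you flag as the crux is precisely the combinatorial content of the cited work, not something the present paper establishes.
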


The theorem in fact holds for (a larger signature, and) a more expressive logic, first-order logic extended by a modulo counting quantifier \cite{DBLP:conf/fossacs/KuskeZ19}.
The proof is by interpreting the structure in Presburger arithmetic, $(\mathbb{N}, +)$, which is known to be decidable in this logic.

Since $(L, \sqsubseteq)$ is a countable WQO, it follows from \cref{thm:kuske and zetzsche} and \cref{thm:order-main-sat-result} that the associated initial limit Datalog problem is decidable.

\subsection{Basic process algebras and pushdown systems}

An important class of countable WQO  are \emph{context-free processes} (or \emph{basic process algebra}) and the more general collection of \emph{pushdown systems}, with respect to the subword ordering \cite{DBLP:journals/tcs/FinkelS01}; 
moreover they are \emph{automatic structures} (folklore but see e.g.~\cite{Lin2010,Brarany2007}) and so have decidable first-order theories (\cite{DBLP:journals/tcs/Hodgson82,DBLP:conf/lcc/KhoussainovN94} and various others).
It follows that they satisfy \cref{ass:ordering}.

\lo{Develop this!}
\lo{Possibly relevant: \cite{DBLP:conf/concur/ChadhaV07}.}

\section{Related work and further directions}
\label{sec:related}

\subparagraph{Decidable classes of constrained Horn clauses}
Cox, McAloon and Tretkoff \cite{DBLP:journals/amai/CoxMT92} have shown a catalogue of sub-recursive complexity results for various fragments of CHC obtained by restricting the syntax (in particular, the placement of variables) and the mechanism by which parameters are passed.  
Our work, however, takes Kaminski, Cuenca Grau, Kostylev and Motik's limit restriction \cite{DBLP:conf/ijcai/KaminskiGKMH17} as the starting point.

The limit restriction was introduced as a way of taming the undecidability of \dz{} \cite{DEGV01} that was compatible with the desire to express problems in declarative data analysis.
Moreover, it is shown in \cite{DBLP:conf/ijcai/KaminskiGKMH17}~that, under reasonable assumptions, the data complexity of the entailment in the logic is PTIME.
Our work extends limit \dz{} to higher-orders.
Higher-order extensions of Datalog are interesting in their own right: \cite{DBLP:journals/tplp/CharalambidisNR19} have shown that, on ordered databases, order-$k$ Datalog captures ($k-1$)-EXPTIME.

\subparagraph{Decidability beyond first order}
There is a lot of interest in the decidability of theories that go beyond first-order logic.
A very well studied case is that of monadic second-order theories (see e.g. \cite{Gurevich85}).
Of these, perhaps the best known is Rabin's celebrated result on the decidability of the theory of two successor functions \cite{rabin1969decidability}, from which the decidability of several other monadic second-order theories can be deduced.

For applications in e.g. higher-order program verification, however, it is important to retain higher-type relations of all arities and to admit background theories.
A recent work with similar requirements is that of \cite{MMSV18} who, motivated by applications in program synthesis, have introduced the logic \textsf{EQSMT}.
Formulas of this logic have a $\exists^*\forall^*$ prefix supporting second-order quantification at certain types.
They show that satisfiability of \textsf{EQSMT} formulas is decidable whenever satisfiability for the relevant fragments of the background theories is decidable.

\subparagraph{Higher-order constrained Horn clauses}
Our work takes place in the setting of HoCHC \cite{popl18}.  Even when the background theory is decidable, satisfiability of HoCHC is typically undecidable (already, first-order constrained Horn is typically undecidable \cite{DEGV01}).  However \cite[\S~VIII]{DBLP:conf/lics/OngW19} identified the so-called Bernays-Sch\"onfinkel-Ramsey fragment of HoCHC, modulo a restricted form of linear integer arithmetic, has a decidable satisfiability problem by showing {equi-satisfiability} to clauses w.r.t.\ a finite number of background theories with finite domains.
\changed[lo]{(HoCHC satisfiability is decidable for trivial background theories (e.g.~those of finite domains).)}

An alternative higher-order logic supporting integer arithmetic is $\textrm{HoFL}_Z$ of \cite{HFLZ}.
Whilst we do not know of any work on decidable fragments of $\textrm{HoFL}_Z$, \changed[sr]{we expect that} a version of our results on initial limit \dz{} could be transposed into that setting.

\lo{TODO. Discuss recent by Kobayashi et al.~involving circular proofs.}


\paragraph*{Future directions}

One question that remains open is: for which sets of types is the higher-order limit \dz\ problem decidable when predicates are restricted to those types?
There are alternatives, broadly similar to \cref{def:order-initial}, which are neither a superset nor a subset of the set of initial types, for which the same proof strategy works \changed[tcb]{(and we conjecture that such results can be proved as corollaries to \cref{thm:order-main-sat-result}, by inserting dummy variables)}.


Except for the lower bounds due to being a superset of higher-order Datalog, we have not considered runtime complexity of this problem. If the algorithm derived from the decidability proof were used, calculating its runtime would be an exercise in the construction of large numbers. Since many practical uses would have shapes that could be converted to 1st-order programs, there is some hope for tractable performance on useful subsets of initial limit \dz{}.






\paragraph*{Conclusion}

We have presented \emph{initial limit Datalog}, the first higher-order extension of constrained Horn clauses (over a non-trivial background theory) for which the satisfiability problem is decidable. 
Moreover the decision procedure extends to a variety of background theories, including linear integer arithmetic, and any countable well-quasi-order with a decidable first-order theory.
Our decidability proof uses a new kind of term model, called \emph{entwined structures}, which are recursively enumerable, and model checking is decidable.


\bibliographystyle{IEEEtran}
\bibliography{bibfile}

\newpage

\ifreview

\end{document}

\else

\appendices






\section{Supplementary materials for \cref{sec:prelims}}
\label{apx:sec:prelims}

\subsection{Logical relations}
\label{sec:logical-relations}
\begin{definition}\label{def:logicalrelations}
  Let $\Hf$ and $\Hf'$ be frames. A family of relations ${\arel_\sigma}\subseteq\Hf\tden\sigma\times\Hf'\tden\sigma$ are \emph{logical} if for all types $\tau\to\sigma$, $r\in\Hf\tden{\tau\to\sigma}$ and $r'\in\Hf'\tden{\tau\to\sigma}$, $r\arel_{\tau\to\sigma} r'$ iff for all $s\in\Hf\tden\tau,s'\in\Hf'\tden\tau$, 
  if $s\arel_\tau s'$ then $r(s)\arel_\sigma r'(s')$.
\end{definition}
Extending the $\arel_\tau$ in the usual pointwise fashion to frames and valuations we obtain:

\begin{restatable}{lemma}{termmonapp}
  \label{lem:termmonapp}
  Let $\Bs$ be a $(\sigf,\Hf)$-structure and $\Bs'$ be a $(\sigf,\Hf')$-structure, $\alpha$ be a $(\Delta,\Hf)$-valuation, $\alpha'$ be a $(\Delta,\Hf')$-valuation and $M$ be a term.

  If $\Bs\arel\Bs'$ and $\alpha\arel\alpha'$ then $\sinti\Bs M(\alpha)\arel\sinti{\Bs'} M(\alpha')$.
\end{restatable}
\begin{proof}
    We prove the claim by induction on the structure of $M$.
  
    The cases for variables and symbols from the foreground signature follow immediately from the assumptions.
  
    If $M$ is an application $M_1\,M_2$ then by the inductive
    hypothesis $\sinti\Bs{M_1}(\alpha)\arel\sinti{\Bs'}{M_1}(\alpha')$ and
    $\sinti\Bs{M_2}(\alpha)\arel\sinti{\Bs'}{M_2}(\alpha')$. Therefore, by definition of ${\arel}$,
    \begin{align*}
      \sinti\Bs{M}(\alpha)&=\sinti\Bs{M_1}(\alpha)(\sinti\Bs{M_2}(\alpha))\\
                          &\arel \sinti{\Bs'}{M_1}(\alpha')(\sinti{\Bs'}{M_2}(\alpha'))
                            =\sinti{\Bs'}{M}(\alpha')
    \end{align*}
  \end{proof}

\begin{restatable}{corollary}{lrposterm}
	\label{cor:lrposterm}
	Let $\Bs$,  $\Bs'$, $\alpha$ and $\alpha'$ be as before, and let $M$ be a $\sigs$-formula which does not contain a subterm of the form $\lnot F$.
	If $\arel_\boolsort$ is $\le$ and $\arel_\iota$ is $=$ for each $\iota\in\btypes$ and $\Bs\arel\Bs'$ and $\alpha\arel\alpha'$ then $\sinti\Bs M(\alpha)\arel\sinti{\Bs'} M(\alpha')$.
\end{restatable}
\begin{proof}
  Since $\lr_\iota$ is equality, $\alpha'$ must agree with $\alpha$ on variables of type $\iota$ hence $\Bs\sem{M}(\alpha)=\Bs'\sem{M}$ when $M$ is a 1st order $\sigb$ term,  hence $\Bs\sem{M}(\alpha)\lr_o\Bs'\sem{M}$. By \cref{lem:termmonapp}, this also holds when $M$ is a $\sigf$-term. Otherwise, $M=M_1\land M_2$ or $M=M_1\land M_2$ and the property holds by induction on the structure of $M$.
  \end{proof}

\tcb{TODO: Consider making $\arel$ a concrete family of relations.}

\section{Supplementary materials for \cref{sec:ldz}}
\label{apx:sec:ldz}
\begin{example}[Multiplication, Diophantine equations]
	\label{eg:multiplication}
\begin{figure}
{\small
\begin{align*}
	\mathrm{Mul}_1, \mathrm{Mul}_2 &: (\Z \to o) \to (\Z \to o) \to (\Z \to o) \to (\Z \to o) \to \Z \to o\\
	\mathrm{Dec}_1, \mathrm{Dec}_2 &: (\Z \to o) \to (\Z \to o) \to \Z \to o\\
	\mathrm{Inc}_1, \mathrm{Inc}_2 &: (\Z \to o) \to (\Z \to o) \to \Z \to o\\
	\mathrm{Fn} &: \Z \to (\Z \to o)\to o\\
	\mathrm{Gt} &: (\natty\to o)\to (\natty\to o)\to \Z \to o\\
	\mathrm{Gt'} &: (\natty\to o)\to (\natty\to o)\to \natty\to\natty\to \Z \to o\\
	\end{align*}
\begin{align*}
\mathrm{Dec}_1\; f_1\; f_2\; x &\impliedby f_1\; y\land  f_2\; (-y)\land  x\ge y-1\\
\mathrm{Dec}_2\; f_1\; f_2\; x &\impliedby f_1\; y\land  f_2\; (-y)\land  x\ge -(y-1)\\
\mathrm{Inc}_1\; f_1\; f_2\; x &\impliedby f_1\; y\land  f_2\; (-y)\land  x\ge y+1\\
\mathrm{Inc}_2\; f_1\; f_2\; x &\impliedby f_1\; y\land  f_2\; (-y)\land  x\ge -(y+1)\\
\mathrm{Mul}_1\; f_1\; f_2\; g_1\; g_2\; x &\impliedby y=0\land  f_1\; y\land  f_2\; (-y)  x\ge 0\\
\mathrm{Mul}_1\; f_1\; f_2\; g_1\; g_2\; x &\impliedby g_1\; z\land  g_2\; (-z)\ \land \\
         &\mathrm{Mul}_1\; (\mathrm{Dec}_1\; f_1\; f_2)\; (\mathrm{Dec}_2\; f_1\; f_2)\; g_1\; g_2\; w\ \land \\
         &\mathrm{Mul}_2\; (\mathrm{Dec}_1\; f_1\; f_2)\; (\mathrm{Dec}_2\; f_1\; f_2)\; g_1\; g_2\; (-w)\ \land \\
         &x\ge w+z\\
\mathrm{Mul}_1\; f_1\; f_2\; g_1\; g_2\; x &\impliedby g_1\; z\land  g_2\; (-z)\ \land \\
         &\mathrm{Mul}_1\; (\mathrm{Inc}_1\; f_1\; f_2)\; (\mathrm{Inc}_2\; f_1\; f_2)\; g_1\; g_2\; w\ \land \\
         &\mathrm{Mul}_2\; (\mathrm{Inc}_1\; f_1\; f_2)\; (\mathrm{Inc}_2\; f_1\; f_2)\; g_1\; g_2\; (-w)\ \land \\
         &x\ge w-z\\
\mathrm{Mul}_2\; f_1\; f_2\; g_1\; g_2\; x &\impliedby y=0\land  f_1\; y\land  f_2\; (-y) \land x\ge -0\\
\mathrm{Mul}_2\; f_1\; f_2\; g_1\; g_2\; x &\impliedby g_1\; z\land  g_2\; (-z)\ \land \\
         &\mathrm{Mul}_1\; (\mathrm{Dec}_1\; f_1\; f_2)\; (\mathrm{Dec}_2\; f_1\; f_2)\; g_1\; g_2\; w\ \land \\
         &\mathrm{Mul}_2\; (\mathrm{Dec}_1\; f_1\; f_2)\; (\mathrm{Dec}_2\; f_1\; f_2)\; g_1\; g_2\; (-w)\ \land \\
         &x\ge -(w+z)\\
\mathrm{Mul}_2\; f_1\; f_2\; g_1\; g_2\; x &\impliedby g_1\; z\land  g_2\; (-z)\ \land \\
         &\mathrm{Mul}_1\; (\mathrm{Inc}_1\; f_1\; f_2)\; (\mathrm{Inc}_2\; f_1\; f_2)\; g_1\; g_2\; w\ \land \\
         &\mathrm{Mul}_2\; (\mathrm{Inc}_1\; f_1\; f_2)\; (\mathrm{Inc}_2\; f_1\; f_2)\; g_1\; g_2\; (-w)\ \land \\
         &x\ge -(w-z)\\
\mathrm{Fn}\; x\; f &\impliedby f\; y \land x\ge y\\
\mathrm{Gt}\; h_1\; h_2\; x &\impliedby \mathrm{Gt'}\; h_1\;h_2\;\mathrm{C}_{01}\; \mathrm{C}_{01}\; y \land x\ge y\\
\mathrm{Gt'}\; h_1\; h_2\;f_1\;f_2\; x &\impliedby \mathrm{Gt'}\; h_1\;h_2\;(\mathrm{Inc}_1\;f_1\;f_2)\; (\mathrm{Inc}_2\;f_1\;f_2)\; y \land x\ge y\\
\mathrm{Gt'}\; h_1\; h_2\;f_1\;f_2\; x &\impliedby \mathrm{Gt'}\; h_1\;h_2\;(\mathrm{Dec}_1\;f_1\;f_2)\; (\mathrm{Dec}_2\;f_1\;f_2)\; y \land x\ge y\\
\mathrm{Gt'}\; h_1\; h_2\;f_1\;f_2\; x &\impliedby h_1\;f_1\land h_2\;f_2 \land f_1\; y\land f_2\; (-y) \land x\ge y\\
\end{align*}}
\caption{Coding Multiplication \label{fig:multiplication}}
\end{figure}

In \cref{fig:multiplication}, $\mathrm{Inc}_i$ and $\mathrm{Dec}_{i}$ increment or decrement an integer represented by a pair of functions.
Define the following families of formulas:
\[
x:\iota\vdash\mathrm{GT(x)} : \Z\to o
\]
where $\mathrm{GT}(x) := \mathrm{Gt}\; (\mathrm{Fn}\; x)\; (\mathrm{Fn}\; (-x))$, and
\[
x,y,z:\iota\vdash\mathrm{MUL}(x,y,z)
\]
where $\mathrm{MUL}(x,y,z)$ is defined to be
\[
\begin{array}{ll}
& \mathrm{Mul}_1\, (\mathrm{GT}(x))\, (\mathrm{GT}(-x))
\, (\mathrm{GT}(y))\, (\mathrm{GT}(-y))\, z\\
\land &
\mathrm{Mul}_2\, (\mathrm{GT}(x))\, (\mathrm{GT}(-x))\, (\mathrm{GT}(y))\, (\mathrm{GT}(-y))\, (-z)
\end{array}
\]
The family $\mathrm{GT}$ can be used to turn an integer $x$ of type $\Z$ into either function from the pair representing $x$. This allows us to obtain $\mathrm{MUL}$ such that in the canonical model, $\mathrm{MUL}\ x\ y\ z$ holds iff $x \times y=z$.

With this we can define a goal clause corresponding to any Diophantine equation. For example the equation $x^3=y+z$ corresponds to the goal clause \[
 \mathrm{MUL}(x,\,x,\,w) \land \mathrm{MUL}(x,\,w,\,y+z).
\]  
The limit \dz\ problem consisting of this clause together with the set of clauses above are satisfiable if, and only if, the Diophantine equation has a solution.
\end{example}

\section{Supplementary materials for \cref{sec:order-initiality}}

\begin{lemma}
    \label{lem:isframe}
    Given a {$(\sigf,\,\Hf)$}-structure $\Bs$, $\oconfusionf{\Bs}_n$ is a frame.
\end{lemma}
\begin{proof}
To see that $\oconfusionf{\Bs}_n$ meets the requirements on $\boolsort$, $\WW$ and $S$, note that $\Hf$ is a frame, and agrees with it on those types.

The other condition is that for types $\sigma$ of the form $\sigma_1\to\sigma_2$, we need to show that $\oconfusionf{\Bs}_n\sem{\sigma}$ is a subset of $\fspace{\oconfusionf{\Bs}_n\sem{\sigma_1}}{\oconfusionf{\Bs}_n\sem{\sigma_2}}$:

If $\order(\sigma) \le n-2$, or $\order(\sigma)\le n-1$ and $\sigma$ is not an active type, then $\order(\sigma_1)\le n-2$ and $\order(\sigma_2) \le n-2$, or $\order(\sigma_2)\le n-1$ and $\sigma_2$ is not an active type, so $\oconfusionf{\Bs}_n\sem{\tau} = \Hf\sem{\tau}$ where $\tau$ is $\sigma$, $\sigma_1$ or $\sigma_2$, hence the property holds because $\Hf$ is a frame.

If $\order(\sigma)=n-1$ (and $\sigma$ is an active type), then the non-trivial case is when $\sigma=\WW\to \tynint$. Here, we rely on the fact that $\order(\nu)\le n-1$ and $\nu$ is not an active type, hence $\oconfusionf{\Bs}_n\sem{\nu} =\Hf\sem{\nu}$.  For any ${X: \overline{\tau} \to \WW \to\tynint \in \Xwouts}$,  $\assigned{X}{\Bs}$ is an element of $\Hf\sem{\overline{\tau} \to \WW \to\tynint}$ hence $\assigned{X}{\Bs} \, \overline{s}$ is in $\Hf\sem{\WW\to\tynint}\subseteq{\fspace{\WW}{\Hf\sem{\nu}}}$. Therefore  $\oconfusionf{\Bs}_n\sem{\sigma}$ is a subset of $\fspace{\oconfusionf{\Bs}_n\sem{\sigma_1}}{\oconfusionf{\Bs}_n\sem{\sigma_2}}$.

Otherwise, $\oconfusionf{\Bs}_n\sem{\sigma}$ is explicitly constructed as a subset of $\fspace{\oconfusionf{\Bs}_n\sem{\sigma_1}}{\oconfusionf{\Bs}_n\sem{\sigma_2}}$

\end{proof}

\orderfinite*
\begin{proof}
    We prove by lexicographical induction on $n$ followed by the structure of $\sigma$.


    If $\order(\sigma)=0$ then $\Hf_n\sem{\sigma}=\{0,1\}$ or $S$, hence finite.

    Suppose $\order(\sigma)>0$.
    We may assume that either $\order(\sigma) = n-1$ and $\sigma$ is an active type, or $\order(\sigma)=n$ and $\sigma$ is an inactive type, for otherwise 
      $\Hf_n\sem{\sigma} := \Hf_{n-1}\sem{\sigma}$ which is finite by the induction hypothesis.

    If $\sigma=\WW\to \tynint$ then
    \[
    \Hf_n\sem{\sigma} :=\big\{\assigned{X}{\Bs_{n-1}} \, \overline s \mid
        X:\overline \tau \to\WW\to\tynint\in\Swouts,
                             s_i\in\Hf_{n-1}\sem{\tau_i}  \big\}
    \]
    which is finite because there are only finitely many $X\in\Swouts$, and $\Hf_{n-1}\sem{\tau_i}$ is finite (because $\order(\tau_i) < \order(\WW \to \tynint) = n-1$).

    Otherwise, $\sigma=\sigma_1\to\sigma_2$ with $\sigma_1 \not= \WW$.

    If $\sigma$ is an active type of order $n-1$, then $\order(\sigma_1) < n-1$, and so $\Hf_n\sem{\sigma_1} = \Hf_{n-1}\sem{\sigma_1}$ which is finite by IH.
    Now $\sigma_2$ is an active type of order at most $n-1$. By the IH ($\sigma_2$ is smaller than $\sigma$), $\Hf_n\sem{\sigma_2}$ is finite.

    If $\sigma$ is an inactive type of order $n$, then $\order(\sigma_1) \leq n-1$.
    The type $\sigma_1$ could be an active type, but it is smaller than $\sigma$, and so, by IH we have $\Hf_n\sem{\sigma_1}$ is finite.
    The type $\sigma_2$ is inactive and has order at most $n$, but as it is smaller than $\sigma$, $\Hf_n\sem{\sigma_2}$ is finite by IH.

    Therefore, in both cases, $\bnsem{\sigma}=\fspace{\bnsem{\sigma_1}}{\bnsem{\sigma_2}}$ is finite.


    \end{proof}
\newcommand{\Frm}{\textrm{Form}}
\orderonre*
\begin{proof}
    By induction on $\textrm{arity}(\sigma)$.
    If $\sigma=\WW\to\tynint$ and $\order(\sigma)=n$, then $\tynint = \sigma_1\to\cdots\to\sigma_k\to\boolsort$ and $\order(\sigma_i) \leq n-1$.
    It follows from Lemma~\ref{lem:order-finite} that each $\bnsem{\sigma_i}$ is finite.
    Thanks to \cref{ass:ordering} each monotone function $f : \fspace{\WW}{\Hf_n\sem{\tynint}}$ can be described by a map
    \[
    g :\fspace {\bnsem{\sigma_1}}{\cdots\to \fspace{\bnsem{\sigma_k}}{\Frm}\cdots }
    \]
    where $\Frm$ is the set of formulas over $\Sigma_\WW$ with one free variable $x$ and $f(z,\overline{s}) = \As\sem{g(\overline{s})}([x\mapsto z])$.
    The set of such $g$ is recursively enumerable ($\Frm$ is r.e. and $\Hf\sem{\sigma_i}$ are finite).

    Otherwise, $\sigma=\sigma_1 \to \tyint$ with $\order(\sigma_1) < n$, so $\Hf_n\sem{\sigma_1}$ is finite, hence $\fspace{\Hf_n\sem{\sigma_1}}{\Hf_n\sem{\tyint}}$ is r.e.~if $\Hf_n\sem{\tyint}$ is, which holds by induction on arity.
    \end{proof}

\orderre*
\begin{proof}
    For $i>l+1$, $\Bs_i$ is a structure extending $\Bs_{i-1}$ over the same signature,
    therefore $\Bs_i=\Bs_{l+1}$ by induction. Hence we only need to show that there are enumerably  many possible $\Bs_{l+1}$.

    There is exactly one possible $\Bs_0$.

    If we fix $\Bs_{n-1}$, then to construct the structure $\Bs_n$ we need to choose interpretations for
    predicates in $\Swouts$ which have order $n$ type, and predicates that are inactive type with order $n-1$.
    Since these choices must come from r.e.~or finite sets (by Lemmas~\ref{lem:order-lvnre} and~\ref{lem:order-finite}), there are enumerably many options for $\Bs_n$.
    Since an r.e.~collection of r.e.~sets is r.e., the number of families (up to order $l+1$) is r.e. by induction.
    \end{proof}

\orderemtonore*
\begin{proof}
For each resolution proof rule, if $\Bs_{l+1}$ entails the premises of the rule, then it entails the conclusion. Since $\Bs_{l+1}\sem{\bot}=0$, there is no resolution proof of $\bot$.

\sr{8/1: Changed occurrences of subscript $m$ to $l+1$.}

Resolution:

\tcb{These inferences rely on properties of all frames, and are not specific to \oconfused ones. This may be stated in \cite{DBLP:conf/lics/OngW19} or in the unpublished one.}
If $\Bs_{l+1} \models {\neg R\,\overline M\lor G}$ and $\Bs_{l+1}\models {G'\lor R\,\overline x}$, then for each $\alpha$ either $\Bs_{l+1}\sem{\;R\ \overline{M}}(\alpha)=1$ (and $\Bs_{l+1}\sem{G}(\alpha)=1$) or $\Bs_{l+1}\sem{R\ \overline{M}}(\alpha)=0$ (and we have $\Bs_{l+1}\sem{(G'[\overline M/\overline x]\big)}(\alpha)=\Bs_{l+1}\sem{G'}(\alpha[\overline M/\overline x])=1$). Therefore $\Bs_{l+1}\sem{G\lor \big(G'[\overline M/\overline x]\big)}(\alpha)=1$

Constraint refutation:

If there exists a valuation $\alpha$ such that $\As,\alpha\models\phi_1\land\cdots\land\phi_n$ then, setting $\alpha(x_i)=\top$ for each $i$, $\Bs_{l+1}\sem{\lnot (x_1\,\overline M_1)\lor\cdots\lor \lnot(x_m\,\overline M_m)\lor\neg\phi_1\lor\cdots\lor\neg\phi_n}(\alpha)=0$, so if $\Bs_{l+1} \models \lnot (x_1\,\overline M_1)\lor\cdots\lor \lnot(x_m\,\overline M_m)\lor\neg\phi_1\lor\cdots\lor\neg\phi_n$, we get $\Bs_{l+1} \models \bot$ vacuously.
\tcb{This does rely on the presence of top elements in each type.}
\end{proof}

\ordersmtoem*
\begin{proof}
	%
	%
	%
	
	The proof is by logical relations and \cref{cor:lrposterm}.
	
	Given a standard model $\Bs$, we construct a family of \oconfused structures by induction on $n$. As we construct an \oconfused family $\{\Bs_n\}_{n \in \omega}$,
	we define logical relations $\arel^n$ between $\Hf_n := \oconfusionf{\Bs_{n-1}}_n$ and $\Sf$ recursively as follows:
	\begin{align*}
		i\arel_{\iota}^n i' & \defeq i=i'\qquad \iota \in \btypes\\ 
		b\arel_o^n b' & \defeq b\leq b'\\ 
		r\arel_{\tau\to\sigma}^n r' & \defeq \forall s\in\Hf_n\tden\tau,s'\in\Sf\tden\tau \ldotp
		s\arel_\tau^n
		s'\rightarrow r(s)\arel_\sigma^n r'(s')
	\end{align*}
	Thus $\arel^n$ is, by construction, the unique logical relation which is $\le$ on $\boolsort$ and is $=$ on other base types.
	
	Let us suppose the $(\Sigma_{n-1},\Hf_{n-1})$-structure $\Bs_{n-1}$ is defined.
	We construct $\Bs_{n}$ as a $(\Sigma_{n},\Hf_{n})$-expansion of $\Bs_{n-1}$ by:
	\[
	\assigned{Y}{\Bs_n}f_1 \cdots f_r := \bigwedge_{\overline{f}\lr\overline{g}}\assigned{Y}{\Bs}g_1 \cdots g_r
	\]
	where $Y:\sigma_1\to\cdots\to\sigma_r\to\boolsort$ is an order-$n$ initial type, and each $f_i \in \Hf_n\sem{\sigma_i}$.
\paragraph*{Showing that $\Bs_n$ is well defined}
	We need to show that $\assigned{Y}{\Bs_n} \in \Hf_n \sem{\overline \sigma \to o}$.
	Suppose $Y$ is of an order-$n$ active type $Y : \overline{\tau} \to \WW \to \overline{\alpha} \to \boolsort$ where:
	\begin{itemize}
		\item each $\tau_i$ (and each $\alpha_j$) is initial or $S$ 
		\item for each $i$, $\order(\tau_i) < \order(\WW \to \overline{\alpha} \to \boolsort)$.
	\end{itemize}
	Since $\order(\WW \to (\overline\alpha \to \boolsort)) = n$ and:
	\[
	\begin{array}{ll}
		& \Hf_n\sem{\overline{\tau} \to \WW \to \overline{\alpha} \to \boolsort} \\
		= & \big[\Hf_n\sem{\tau_1} \to \cdots \to [\Hf_n\sem{\tau_k} \to \Hf_n\sem{\WW \to (\overline\alpha \to \boolsort)}]\big]
	\end{array}
	\]
	we need to show that, for each $s_i \in \Hf_n\sem{\tau_i}$, $\assigned{Y}{\Bs_n} \; \overline s$ is monotone in the first argument.
	I.e., assume $z \leq w$, we want to show $\assigned{Y}{\Bs_n} \; \overline s \; z \leq_o \assigned{Y}{\Bs_n} \; \overline s \; w$;
	or equivalently
	\begin{align}
		\forall f_1, \cdots, f_k \, . \,
		\assigned{Y}{\Bs_n} \; \overline s \; z \; f_1 \cdots f_k \leq \assigned{Y}{\Bs_n} \; \overline s \; w \; f_1 \cdots f_k.
		\label{eq:order-smtoen-monotone}
	\end{align}
	Now, since $\Bs$ is a model of $\Gamma$ which is assumed to be a limit Datalog problem, $\assigned{Y}{\Bs} \; \overline t \in \Sf\sem{\WW \to \overline{\alpha} \to \boolsort}$ is upward closed in the first argument for all $\overline t$ (and in particular for those satisfying $\overline s \lr \overline t$), i.e., for all $\overline g$ (and in particular for those satisfying $\overline f \lr \overline g$)
	\[
	\assigned{Y}{\Bs} \; \overline t \; z \; \overline g \leq \assigned{Y}{\Bs} \; \overline t \; w \; \overline g
	\]
	which implies~(\ref{eq:order-smtoen-monotone}), by considering the definition of $\assigned{Y}{\Bs_n}$.
	
	Next suppose $Y:\sigma_1\to\cdots\to\sigma_k\to\boolsort$ is an inactive type.
	Then
	\[
	\Hf_n\sem{\overline \sigma \to \boolsort} =
	\big[\Hf_n\sem{\sigma_1} \to \cdots \to [\Hf_n\sem{\sigma_k} \to \Hf_n\sem{\boolsort}]\big]
	\]
	(and each $\Hf_n\sem{\sigma_i}$ is finite, by \cref{lem:order-finite}).
	It follows from the definition that $\assigned{Y}{\Bs_n} \in \Hf_n\sem{\overline\sigma \to o}$, as desired.
	
	Note that $\Bs_{l+1}\lr \Bs$. The construction of $\assigned{Y}{\Bs_n}$ as a greatest lower bound matches the definition of logical relations and ensures that $\Bs_{l+1}$ is the greatest \oconfused model such that this holds.
	
	
	\paragraph*{We now show that $\Bs_{l+1}$ models $\Gamma$}
	
	Consider a goal clause $G\in\Gamma$. For any $(\Delta, \Hf_{l+1})$-valuation $\alpha$, there is a standard valuation $\alpha'$ such that $\alpha'\glr\alpha$
	(just take $\alpha'(x)=\alpha(x)$ on variables of type $\WW$ or $S$ and take $\alpha'(x)=\top$ otherwise).
	Since $\Bs$ is a model, $\Bs\sem{G}(\alpha')=1$, and by \cref{lem:termmonapp}
	, $\Bs_{l+1}\sem{G}(\alpha)=1$. Hence $G$ is satisfied by $\Bs_{l+1}$.
	
	Now consider a definite clause $X\,\overline{x} \vee G \in \Gamma$ and some {$(\Delta, \Hf_{l+1})$-valuation} $\alpha$.
	If $\Bs_{l+1}\sem{X\ \overline{x}}(\alpha)=0$ then we have $\assigned{X}{\Bs_{l+1}}\alpha(x_1)\cdots \alpha(x_k)=0$.
	By construction of $\Bs_{l+1}$, there exist $g_1,\cdots, g_k$ such that $\assigned{X}{\Bs} \, \overline g=0$ and each $g_i\glr\alpha(x_i) $.
	This allows us to construct $\alpha'\glr \alpha$
	$\Bs\sem{X\ \overline{x}}(\alpha')=0$.
	Again, since $\Bs$ is a model, $\Bs\sem{G}(\alpha)
	= 1$, and by  \cref{cor:lrposterm}
	, $\Bs_{l+1}\sem{G}(\alpha)= 1$.
	Therefore $\Bs_{l+1}$ satisfies $X\,\overline{x} \vee G$.
\end{proof}
\orderdecidablemodel*
\begin{proof}
	We prove this by converting the clause into a first-order formula over $\Sigma_\WW$.
	We obtain the formula by a 2 step transformation. These steps involve formulas of higher-order logic as defined in \cref{sec:prelims}, not just those with the structure of Horn clauses.

	\subparagraph{Preprocessing - replacing variables (not of type $\WW$) by constants}
	We begin by assuming that for each relational type $\rho$, for each element of $f\in \Bs_{l+1}\sem{\rho}$, $\sigf$ contains a predicate symbol $X_f$.
	We also assume that $\assigned{\Bs}{X_f} = f$.
	If this did not hold, it is trivial to construct a new $\sigf$ and the corresponding $\Bs_{l+1}$.
	This obviously does not affect satisfiability of a set of clauses which do not mention the newly added predicate symbols.
	
	For each variable $x$ in $G$ that is not of type $\WW$, $\Bs_{l+1}\sem{\Delta(x)}$ is finite.
	Therefore we may replace $G$ by a conjunction of clauses -- one for each possible interpretation of $x$ (we introduce constants corresponding to each possibility).
	
	Every foreground atom in the clause now has the shape $X\,\overline{M}$ where at most one $M_i$ is a variable (of type $\WW$), thanks to the initiality restriction on the type of $X$. All other $M_i$ are either constants or have this same shape.

	\subparagraph{Eliminating variables in foreground terms}
	We will now eliminate variables of type $\WW$ in foreground atoms. We do so by induction on the number of free variables in a foreground atom - at each step, we replace an atom by a disjunction of guarded atoms, each with one fewer free variable.
	
	A foreground atom has the form $X\,\overline{M}$ and not more than one of $M_i$ are variables (because all remaining variables are of type $\WW$ and $X$ has initial type). If exactly one $M_j$ is a variable $x$ and no other $M_i$ contain variables, then we may replace the atom $X\,\overline{M}$ by the background formula $\phi(x)$ corresponding to the upset $\{w\in\WW\ |\ \Bs\sem{X\,\overline{M}}([x\mapsto w])\}$. Such a formula exists by \cref{ass:ordering} (see also the proof of \cref{lem:order-lvnre}, which shows we can enumerate entwined structures while having access to the formulas corresponding to these upsets).
	
	For foreground atoms involving a more deeply nested variable (including those with more than one variable), any atom $M$ with at least one variable must contain some subterm $N$ of the shape $X\,\overline{M}\,x\,\overline{N}$ where $x$ is a variable and neither $\overline{M}$ nor $\overline{N}$ contain any variables. \tcb{What should I call $M[N]$? Do I need to define it better?} We take $M[-]$ to be the one-holed context such that $M[N]=M$. Take the type of $X$ to be $\overline{\sigma}\to\WW\to\overline{\sigma'}\to\overline{\tau}\to \boolsort$ where $\overline{\sigma'}$ are the types of $\overline{N}$. Note that the type of $N$ is $\overline{\tau}\to\boolsort$.
	
	Since $X$ has an initial type, we know that each $\Bs\sem{\tau_i}$ is finite. Therefore the set of tuples $T = \Pi_{i=1}^n\Bs\sem{\tau_i}$ is finite. For each tuple $\overline{t}\in T$, there is a formula $\phi_{\overline{t}}(x)$ (like $\phi$ above) defining the values $z$ such that $\Bs\sem{N\, \bar{y}}([x\mapsto z,\overline{y}\mapsto\overline{t}])=1$. If we now consider $S\subseteq T$, the function $f(z) = \Bs\sem{N}([x\mapsto z])$ (which returns a function in $\fspace{T}{\mathbb{B}}$) is constant on the regions where $\bigwedge_{\overline{t}\in S}\phi_{\overline{t}}(x)\land \bigwedge_{\overline{t}\in T\setminus S}\lnot\phi_{\overline{t}}(x)$ holds (if the formula holds at $x=z$ then $f(z)(\overline{t})$ can be determined by whether or not $\overline{t}\in S$, hence does not depend on the precise value of $z$). Denote this constant by $N_S$.
	
	Since $\WW$ can be partitioned by according to the powerset of such $\phi_{\overline{t}}$, we may replace $M$ by the formula $\bigvee_{S\subseteq T}(\bigwedge_{\overline{t}\in S}\phi_{\overline{t}}(x)\land \bigwedge_{\overline{t}\in T\setminus S}\lnot\phi_{\overline{t}}(x)\land M[N_S])$. Since $N_S$ contains no variables, each $M[N_S]$ contains one fewer variable than $M$, allowing us to inductively remove all variables from atoms until we are left with constant expressions that reduce to booleans and background atoms. Since the first-order theory of $\Sigma_\WW$ is decidable, we are done.\tcb{This needs a bit more work in parts, but the proof is there}
\end{proof}

\subsection{Example of an Entwined Structure}
\label{sec:entwined-example}

Let the background theory be the theory of equality on the finite set $S = \{\clubsuit, \diamondsuit, \spadesuit, \heartsuit\}$ in combination with the theory of linear integer arithmetic, in which $\Z$ is ordered by $\leq$.
We give an entwined structure that interprets the three predicate symbols $X$, $Y$ and $Z$ from Example~\ref{eg:xyz}.  
The signature of the foreground, $\sigf$, in this case consists of:
\begin{align*}
  X &: \rho\\
  Y &: S\to S\to\boolsort\\
  Z &: \WW \to \rho \to\boolsort
\end{align*}
where $\rho$ is shorthand for $S\to (S\to\boolsort) \to\boolsort\to \xi$ and $\xi$ for $\WW \to (\WW \to\boolsort) \to\boolsort$.
So the type of $Y$ is order 1, the type of $X$ order 2 and the type of $Z$ order 3.
Correspondingly, we have $\Sigma_1 = \{Y : S\to S\to\boolsort\}$, $\Sigma_2 = \{X:\rho,\,Y : S\to S\to\boolsort\}$ and then $\Sigma_3 = \sigf$.

We can build an entwined structure to interpret $\sigf$ in stages, according to the definition.
\begin{itemize}
  \item Define $\Bs_0$ as the unique $(\emptyset,\,\Sf)$-structure, i.e. that interprets the background theory standardly and interprets the empty signature vacuously.
  In particular, we have the following interpretations of the base types:
    \begin{align*}
      \Sf\tden{o} &= \mathbb{B} \quad (= \{0,1\}) \\
      \Sf\tden{S} &= \{\clubsuit, \diamondsuit, \spadesuit, \heartsuit\} \\
      \Sf\tden{\WW} &= \mathbb{Z}
    \end{align*}

  \item Then $\oconfusionf{\Bs_0}_1$ the order-1 entwined frame derived from $\Bs_0$ is determined by the definition.  According to clause (i), the base types are interpreted as in $\Sf$ and by clause (iii) we have, in particular, the following interpretations of first-order types:
  \begin{align*}
    \oconfusionf{\Bs_0}_1\tden{S \to o} &= [S \to \mathbb{B}] \\
    \oconfusionf{\Bs_0}_1\tden{S \to (S \to o)} &= [S \to [S \to \mathbb{B}]]
  \end{align*}
  and $\oconfusionf{\Bs_0}_1\tden{W \to o}$ as the set of monotone functions:
  \[
     \{f \in [\Z \to \mathbb{B}] \mid \forall z \leq z'.\, f(z) \leq f(z')\}
  \]
  Of course, all other types are interpreted too, but these are the order-1 types that will be important in assigning a meaning to $X$, $Y$ and $Z$.

  \item We can use $\oconfusionf{\Bs_0}_1$ to frame the entwined interpretation of the order-1 foreground symbols in $\Sigma_1$, via a $(\Sigma_1,\oconfusionf{\Bs_0}_1)$-expansion of $\Bs_0$.  The definition of expansion forces $\Bs_1$ to interpret the background theory in the same way as $\Bs_0$ (i.e. standardly) but we have free choice of interpretation of $Y \in \Sigma_1$ as any element of $[S \to [S \to \mathbb{B}]]$.  Let us pick (it is not important here since we are not interested in satisfying a particular set of clauses):
  \[
    \Bs_1\tden{Y} = 
      \begin{cases}
        1 & \text{if $\{s_1,\,s_2\} \subseteq \{\diamondsuit,\spadesuit\}$}\\
        0 & \text{otherwise}
      \end{cases}
  \]

  \item Then $\oconfusionf{\Bs_1}_2$, the order-2 entwined frame derived from $\Bs_1$ is determined according to the definition.  From clause (i) we have that the base types are interpreted as in $\oconfusionf{\Bs_0}_1$ and also the inactive initial types:
  \begin{align*}
    \oconfusionf{\Bs_1}_2\tden{S \to o} &= \oconfusionf{\Bs_0}_1\tden{S \to o} \\
    \oconfusionf{\Bs_1}_2\tden{S \to S \to o} &= \oconfusionf{\Bs_0}_1\tden{S \to S \to o} \\
  \end{align*}
  In this way, the interpretation $\Bs_1\tden{Y}$ still makes sense within this frame.
  According to clause (ii), the active initial type $\WW \to o$ is reinterpreted as follows:
  \[
    \oconfusionf{\Bs_1}_2\tden{\WW \to o} = \{\top\}
  \]
  since there are no terms of type $\WW \to o$ that can be obtained as partial applications of $X$, $Y$ or $Z$.
  By clause (iii), we have:
  \[
    \oconfusionf{\Bs_1}_2\tden{\rho} = [S \to [[S \to \mathbb{B}] \to [\mathbb{B} \to \oconfusionf{\Bs_1}_2\tden{\xi}]]]
  \]
  and $\oconfusionf{\Bs_1}_2\tden{\xi}$ is the set: 
  \[
    \{ f \in [\Z \to [\{\top\} \to \mathbb{B}]] \mid \forall z \leq z'.\, f(z) \leq f(z')\}
  \]

  \item When we define $\Bs_2$ as a $(\Sigma_2,\oconfusionf{\Bs_1}_2)$-expansion of $\Bs_1$ we are forced by the notion of expansion to take the interpretation of the background as in $\Bs_1$ and also:
  \[
    \Bs_2\tden{Y} = \Bs_1\tden{Y}
  \]
  The definition of entwined frame ensures that the interpretation also makes sense in $\oconfusionf{\Bs_1}_2\tden{S \to S \to o}$.
  On the other hand, we are free to choose any element of the infinite set $\oconfusionf{\Bs_1}_2\tden{\rho}$ with which to interpret the other element of $\Sigma_2$, namely $X$.  We pick:
  \[
    \Bs_2\tden{X}(s)(f)(b)(w)(g) = 
      \begin{cases}
        1 & \text{if $b$ and $f(s)$ and $w>5$}\\
        0 & \text{otherwise}
      \end{cases}
  \]
  Here we can see concretely the intuition explained in Example~\ref{eg:xyz}: even though $Z\,5\,X$ is a potential (third) argument to $X$ and so must be accounted for when describing how to interpret the type of $X$ (so that the application is defined), we can understand the (finitely many) values that are possible for $Z\,5\,X$ without knowing how to interpret $Z$.

  \item Then $\oconfusionf{\Bs_2}_3$ is determined as follows.  By (i) all base types, the inactive initial types of order-1 and, now, also the active initial type $\WW \to o$ are interpreted as in the previous entwined frame.  Next, by (ii) we have:
  \[
    \oconfusionf{\Bs_2}_3\tden{\rho} = [S \to [[S \to \mathbb{B}] \to [\mathbb{B} \to \oconfusionf{\Bs_2}_3\tden{\xi}]]]
  \]
  and $\oconfusionf{\Bs_2}_3\tden{\xi}$ is the set: 
  \[
    \{\top\} \cup \{\Bs_2\tden{X}(s)(f)(b) \mid s \in S \wedge f \in \fspace{S}{\boolsort} \wedge b \in \mathbb{B}\}
  \]
  which is equal to $\{\top,\bot, (w\mapsto (g\mapsto w>5))\}$.

  By clause (iii) we interpret the order-3 type $\WW \to \rho \to o$ as:
  \[
    \{f \in [\Z \to [\oconfusionf{\Bs_2}_3\tden{\rho} \to \mathbb{B}]] \mid \forall z \leq z'.\, f(z) \leq f(z') \}
  \]

  \item Now we can define a $(\Sigma_3,\oconfusionf{\Bs_2}_3)$-expansion of $\Bs_2$.  We are forced to take $\Bs_3\tden{Y} = \Bs_2\tden{Y}$ and $\Bs_3\tden{X} = \Bs_3\tden{X}$, but this is possible because the definition of $\oconfusionf{\Bs_2}_3\tden{\rho}$ ensures that $\Bs_2\tden{X}$ remains an element.  We are free to choose an appropriate way to interpret the third-order symbol $Z$, let us take:
  \[
    \Bs_3\tden{Z}(w)(f) = 
      \begin{cases}
        1 & \text{if $f(\heartsuit)(\Bs_3\tden{Y}(\spadesuit))(1)(w)(\top)$}\\
        0 & \text{otherwise}
      \end{cases}
  \]
\end{itemize}

\newpage


\iffalse
\section*{Todo List}
\subsection*{Preliminaries}
\begin{list}{-}{}
    \item \sout{Get rid of existentials and $\lambda$-abstractions}
    \item \changed[lo]{\sout{Introduce resolution and mention refutational completeness}}
    \item Check \cref{rem:isimple} (I think we need to explicitly say that negations a)
    \item \changed[lo]{Clarify frame and Comprehension Axiom (restricted to positive existentials) -- reviewers will likely query this.}
\end{list}
\subsection*{Main proof}
\begin{list}{-}{}
    \item Change main definition
    \item Give more detail about decidability of things being models - I think this is more complicated than I thought and requires another lemma
\end{list}

\subsection*{Conjectures}
\changed[lo]{
\begin{list}{-}{}
    \item Extend current result in Section~\ref{sec:ildz} (Theorem~\ref{thm:main-sat-result}) to the more general notion of initial types $\sigma_1 \to \cdots \to \sigma_n \to \boolsort$, where at most one $\sigma_i$'s is $\Z$, and if $\sigma_i = \Z$ then for all $j < i$, $\level(\sigma_j) < \level(\sigma_i \to \cdots \to \sigma_n \to \boolsort)$.

    \item Introduce notion of \emph{order-initial} types, and prove decidability of satisfiability for order-initial limit \dz\ problems.

    \item Prove the conjectures in Section~\ref{sec:naturals}.
\end{list}}

\subsection*{Others}
\begin{list}{-}{}
    \item Abstract
    \item Introduction
    \item \changed[lo]{Examples (also need to settle syntax of example programs / problems in Section~\ref{sec:ldz})}
    \item Related work
    \item Conclusion
\end{list}
\else
\fi



\end{document}


\ifdraft

\section{Other possible type restrictions}
\label{sec:extensions}

\lo{4/1: It may be worth dropping the result about level-initial case, but it and the possibility of extension as a remark. If so, this will save space, and we can just replace order-initial by initial.}

There are two restrictions on HoCHC with a background theory of LIA we impose to obtain decidability. One of these is the ``limit'' requirement - interpretations must be upward closed when seen as sets of integers. The other is the ``initial'' restriction, which restricts the types allowed for predicates. One question to consider is ``What is a maximal set of types such that the higher-order limit \dz\ problem is decidable when predicates are restricted to those types?''

We have not given a maximal set of types.
Note that the hereditary requirement --- (O3) in \cref{def:order-initial} --- of order-initial type is unnecessary, since if we have no predicates with types ending in (for example) $\Z\to\Z\to\boolsort$, there is very little we can usefully say about predicates with type $(\Z\to\Z\to\boolsort)\to\boolsort$.
The hereditary restriction however makes some parts of the proof simpler without altering the expressivity of the language.

\subsection{Level-initial limit \dz}
\label{sec:ildz}

Another way to ensure that the restriction is satisfied is to insist that in the arguments to a predicate, an argument of type $\Z$ must occur before any argument whose type contains an occurrence of $\Z$.
For example, we would admit predicates of type $S \to \Z \to \boolsort$ 
and $\Z \to S \to (\Z \to \boolsort) \to \boolsort$, but not predicates of type $\Z \to \Z \to \boolsort$ nor $(\Z \to \boolsort) \to \Z \to \boolsort$.

\begin{definition}\rm
\label{def:level-initial}
We say that a relational type $\sigma = \sigma_1 \to \cdots \to \sigma_n \to o$ where $n \geq 0$ is \emph{level-initial} if for all $1 \leq i \leq n$
\begin{itemize}
\item[(I1)] if $\sigma_i = \Z$ then for all $j < i$, $\Z$ does not occur in  $\sigma_j$ (it follows that at most one of $\sigma_1, \cdots, \sigma_n$ is $\Z$), and
\item[(I2)] if $\sigma_i \not= S$ and $\sigma_i \not= \Z$, then $\sigma_i$ is level-initial.
\end{itemize}

An \firstdef{level-initial limit \dz\ problem} is a limit \dz\ problem where for each $X : \rho \in \Swouts$, it is the case that $\rho$ is level-initial.
\end{definition}

\tcb{The hereditary part doesn't matter, since if $X$ has type $\sigma\to\tau$ where no symbol has a type ending with $\sigma$, then no term involving $X$ could possibly matter in a resolution proof. Therefore such a symbol is irrelevant to the satisfiability of a problem.}

\lo{15/12: The notion of level-initial types \emph{less} the hereditary requirement (I2) is simpler / cleaner (and easier to describe) and therefore more attractive.
However the proof of Theorem~\ref{thm:main-sat-result} does assume $\tyint$- or $\tynint$-types (and requiring (I1)).}

\lo{15/12: QUESTION. Does Theorem~\ref{thm:main-sat-result} hold for level-initial types less the hereditary requirement (I1)?}

\tcb{ANSWER: Yes, but this relies on the fact that it would be completely useless. I might argue that the problem is more natural with the hereditary constraint}
\tcb{Say that a type is bad if it satisfies the constraint but not hereditarily. The fact that resolution works means that only concrete terms matter (that is the idea behind the whole decidability proof). Consider a resolution proof involving a predicate $P$ with a bad type e.g $(\Z\to\Z\to\boolsort)\to\boolsort$. $P$ cannot be applied to anything (unless we allow existential quantifiers over higher order types in which case it can be applied to exactly one thing, top). Other bad predicates may take $P$ as an argument, but since they can't apply it to anything, they can't do anything meaningful with its behaviour.}


\lo{TODO: We should give examples of ILDP to illustrate its expressivity.

Examples~\ref{eg:iter} and \ref{eg:paths} are ILDP, but Examples~\ref{eg:addition} and \ref{eg:multiplication} are not.}

\begin{theorem}
\label{thm:main-sat-result}
Satisfiability of level-initial limit-\dz\ problems is decidable.
\end{theorem}


The prefix ``level-'' refers to a measure of types similar to (but different from)  order, which we will use to construct an inductive proof for Theorem~\ref{thm:main-sat-result}), defined as follows.
\begin{align*}
\level(o)&:=0\\
\level(S\to\rho)&:=\level(\rho)\\
\level(\Z\to\rho)&:=\level(\rho)+1\\
\level(\rho_1\to\rho_2)&:=\max(\level(\rho_1),\level(\rho_2)) \quad \textrm{if }\rho_1 \not= S, \rho_1 \not= \Z
\end{align*}

\begin{example} 
Except for the first (which is a $\tyzero$-type) and the {third} (which is a $\tynint$-type), all of the following are $\tyint$-types.
\begin{enumerate}
\item $\level((S\to \boolsort)\to S \to \boolsort) =0$
\item $\level(\Z \to \boolsort) = 1$
\item $\level((\Z \to \boolsort) \to \boolsort) = 1$
\item $\level((\boolsort\to \boolsort) \to \Z \to S \to \boolsort) = 1$
\item $\level(\Z \to (\Z \to \boolsort) \to (\Z\to \boolsort) \to \boolsort) = 2$
\item $\level(\Z \to ((\Z \to o)\to o) \to (\Z \to o) \to o) = 2$
\item $\level(\Z \to (\Z\to (\Z \to o)\to o) \to o) = 3$
\end{enumerate}
\end{example}

\paragraph{Proof of \cref{thm:main-sat-result}}
The proof follows essentially the same argument as \cref{thm:order-main-sat-result}.
The details are in \cref{apx:sec:extensions}.
\lo{TODO: Point out the differences. E.g.~use of level to build level-entwined families.}

\subsection{A further extension}

The most obvious significant expansion is to allow types of level $n-1$ in front of $\Z$ in integer types. This would look something like ``types of the form $\sigma_1\to\cdots\to\sigma_n\to\boolsort$ where at most one of $\sigma_1,\cdots,\sigma_n$ is $\Z$ and if $\sigma_i=\Z$ then for all $j<i$, $\level(\sigma_j)<\level(\sigma_i\to\sigma_{i+1}\to\cdots\to\sigma_n\to\boolsort)$'' (this would require a minor change to the definition of level, since the current one only works with the current definition of level-initial).

This preserves the property that we can't apply a predicate to itself (or any partially applied form of itself) unless we already know that there are finitely many distinguishable values that form could have.


\begin{remark}
Level-initial types are neither strictly larger nor strictly smaller than order-initial types. 
\begin{itemize}
\item $((S\to o)\to o)\to \Z\to (\Z\to o)\to o$ is level-initial, but not order-initial; and 
\item $(\Z\to o)\to \Z\to ((S\to o)\to o)\to o$ is order-initial, but not level-initial.
\end{itemize}
\end{remark}

\lo{TODO: Need to rename integer / inactive types to avoid confusion with their use in the preceding section on order-initial types.}

\tcb{TODO: consider adding $S$ to $\tyzero$ to make the shape of $\tyint$ more consistent}

There is an equivalent presentation of level-initial types.

A \emph{level-initial limit \dz\ problem} is a limit \dz\ problem where for each $X: \rho \in \Swouts$, we have that $\rho$ is either a $\tynint$-type (called \emph{inactive type}) or a $\tyint$-type (called \emph{active type}) as defined by the BNF grammar:
\[
\begin{array}{rccl}
\hbox{\emph{Zero type}} & \tyzero &::=& \boolsort \mid S\to\tyzero \mid \tyzero\to\tyzero\\
\hbox{\emph{Inactive type}} & \tynint &::=& o \mid S\to\tynint \mid \tynint\to\tynint \mid \tyint\to\tynint\\
\hbox{\emph{Active type}} & \tyint &::=& \Z\to\tynint \mid S\to\tyint \mid \tyzero\to\tyint
\end{array}
\]
\medskip

Some immediate observations about $\tyint$-, $\tynint$- and $\tyzero$-types.
\begin{itemize}
    \item A zero type is by construction an inactive type (but not vice versa).
    We call $\tyzero$-types zero types because (as we will see) they are precisely the relational types that have level 0.

    \item Every active type $\tyint$ has the shape ($n, m \geq 0$)
    \[
    \tyint =\tyzero_1\to\cdots\to\tyzero_n\to\Z\to\tynint_1\to\cdots\to\tynint_m\to\boolsort
    \]
    \lo{The preceding is not correct: the $\tyzero_i$ and $\tynint_i$ could be $S$!}


    \item Every inactive type $\tynint$ has the shape ($n \geq 0$)
    \[
\tynint = \sigma_1 \to \cdots \to \sigma_n \to \boolsort
    \]
    where each $\sigma_i$ is either $S$, or a $\tynint$-type, or a $\tyint$-type.
\end{itemize}

We give a breakdown on how level works on zero types, inactive types and active types.
\begin{itemize}
\item The level of a zero type is 0.

\item For any inactive type $\tynint = \sigma_1\to\cdots\to\sigma_n$, we have
\[
\level(\tynint) = \max\big(\level(\sigma_1),\cdots,\level(\sigma_m)\big)
\]

\item For any active type $\tyint =\overline{\tau} \to \Z \to \tau_1'\to\cdots\to\tau_m'\to\boolsort$ (where $\tau_i$ is either a zero type or $S$, and each $\tau_j'$ is either a inactive type or $S$),
we have
\[
\level(\tyint) = 1 + \max\big(\level(\tau_1'),\cdots,\level(\tau_m')\big).
\]
\end{itemize}

\noindent\textbf{Notation}.
\emph{In this subsection, whenever we write a type as $\tyint$, we mean that it is an active type; similarly for $\tynint$ and $\tyzero$.}

%


\begin{lemma}
\label{lem:characteriselevelinitial}
For all relational type $\sigma$, $\init(\sigma)$
iff $\sigma$ is a $\tynint$-type or a $\tyint$-type.
\end{lemma}

\begin{proof} 
Let $\sigma = \sigma_1 \to \cdots \to \sigma_n \to o$.
In case $n \geq 1$, we write $\sigma = \sigma_1 \to \sigma'$.

$\Rightarrow$: Assume $\init(\sigma)$.
We prove by structural induction that $\sigma$ is a $\tynint$- or $\tyint$-type.
The base case: $\sigma = o$ is a $\tynint$-type by definition.
For the inductive case, it follows from the definition that $\init(\sigma')$ holds.
Hence, by the IH, $\sigma'$ is either a $\tyint$- or a $\tynint$-type.

Suppose $\sigma'$ is a $\tyint$-type. Then, because of (I1) of $\init(\sigma)$, $\Z$ does not occur in $\sigma_1$, and so, $\sigma_1$ is a $\tyzero$-type.
Hence $\sigma$ is a $\tyint$-type.

Suppose $\sigma'$ is a $\tynint$-type.
If $\sigma_1 = \Z$ or $S$, then $\sigma$ is a $\tyint$-type or $\tynint$-type respectively.
Otherwise, (I2) of $\init(\sigma)$ says that $\init(\sigma_1)$ holds.
By the IH, $\sigma_1$ is a $\tyint$- or $\tynint$-type.
Hence $\sigma$ is a $\tynint$-type.

$\Leftarrow$: Suppose $\sigma$ is a $\tyint$-type.
We show that (I1) and (I2) hold for all $i$.
\begin{itemize}
\item (I1): We have $\sigma'$ is either a $\tynint$- or a $\tyint$-type. By the IH, for $2 \leq i \leq n$, if $\sigma_i = \Z$ then for $2 \leq j < i$, $\Z$ does not occur in $\sigma_j$.
If $\sigma'$ is a $\tyint$-type, then $\sigma_1$ is $S$ or a $\tyzero$-type, i.e., $\Z$ does not occur in $\sigma_1$.
If $\sigma'$ is a $\tynint$-type, then $\sigma_1 = \Z$.

\item (I2): By the IH, (I2) holds for $2 \leq i \leq n$.
We have $\sigma_1$ is $S$ or $\Z$ or a $\tyzero$-type.
But if $\sigma_1$ is a $\tyzero$-type, then $\sigma_1$ is a $\tynint$-type, and so $\init(\sigma_1)$ by the IH.
\end{itemize}

Suppose $\sigma$ is a $\tynint$-type. Then for $1 \leq i \leq n$, $\sigma_i \not= \Z$, i.e., (I1) holds vacuously. As for (I2), observe that $\sigma'$ must be a $\tynint$-type.
Hence, by the IH $\init(\sigma')$, and so, (I2) holds for $2 \leq i \leq n$.
Moreover if $\sigma_1 \not= S$ then it is a $\tynint$- or a $\tyint$-type, and so, $\init(\sigma_1)$ by the IH, as desired.
\end{proof}

\begin{assumption}\label{ass:level-initial-limit-dz}
Recall signatures $\Sigma \subset \Sigma'$ from \cref{ass:sigma'}.
Henceforth we fix an level-initial limit \dz\ problem $\Gamma$ where \changed[lo]{$l = \max\{ \level(\rho) \mid X : \rho \in \Sigma' \setminus \Sigma\}$}.
\end{assumption}

\begin{definition}
\changed[lo]{Let $n \geq 1$.}
Given a $(\Sigma',\,\Hf)$-structure $\Bs$, define the \firstdef{\confused level-$n$ frame derived from $\Bs$}, written $\confusionf{\Bs}_n$, as follows.
\begin{compactitem}
\item $0 \leq \level(\sigma) \le n-2$, or $\sigma$ is not relational:
\lo{``not relational'' is not clear.
Surely we don't mean $\Z \to S$, which is not relational. I suppose the intension is to include $S$ (in case $n=1$).}
\[
\confusionf{\Bs}_n\sem{\sigma} := \Hf\sem{\sigma}
\]
\item $\level(\sigma)=n-1$:
\begin{align*}
\confusionf{\Bs}_n\sem{\Z\to\tynint} &:=
    \begin{array}{l}
    \big\{\Bs\sem{X} \, \overline{s} \; : \; X: \overline{\tyzero} \to \Z \to\tynint \in \Swouts,\\
    \qquad \qquad \qquad \quad |\overline s| = |\overline \tyzero|, s_i\in\Hf\sem{\tyzero_i}\big\}
    \end{array}\\
\confusionf{\Bs}_n\sem{\sigma_1\to\sigma_2} &:= \fspace{\confusionf{\Bs}_n\sem{\sigma_1}}{\confusionf{\Bs}_n\sem{\sigma_2}}
\end{align*}

\item $\level(\sigma)= n$:
\begin{align*}
\confusionf{\Bs}_n\sem{\Z\to\tynint}
&:=
\big\{f\in\fspace{\mathbb{Z}}{\confusionf{\Bs}_n\sem{\tynint}} : \forall i\le j .f(i)\le_\boolsort f(j) \big\}\\
\confusionf{\Bs}_n\sem{\tyzero\to\tyint}
&:= \fspace{\confusionf{\Bs}_n\sem{\tyzero}}{\confusionf{\Bs}_n\sem{\tyint}}
\end{align*}
\end{compactitem}

Note that $\confusionf{\Bs}_n$ does not have an interpretation of $\tyint$-types of level greater than $n$, nor $\tynint$-types of level greater than $n-1$.
\changed[lo]{In particular, $\confusionf{\Bs}_1$ is defined only for non-relational types, and $\tyint$-types of level 1, and $\tynint$-types and $\tyzero$-types of level 0.} \lo{CHECK!}
\end{definition}

We define $f\le_\boolsort g$ on booleans and relational functions by:
\begin{align*}
0 &\le_\boolsort 1 \\
f &\le_\boolsort g \quad \textrm{if $\forall x. f(x)\le_o g(x)$}
\end{align*}
{}
\lo{TODO. A simple example of $\Sigma \subset \Sigma'$ and structure $\Bs$ illustrating $\confusionf{\Bs}_2\sem{\sigma}$ for $\sigma = \Z \to o, (\Z \to o) \to o$ and $\Z \to (\Z \to o) \to o$.}

\begin{lemma}\label{lem:level-isframe}
Given a structure $(\Sigma',\,\Hf)$-structure $\Bs$,  $\confusionf{\Bs}_n$ is a frame.
\end{lemma}

\begin{proof}
We can use exactly the same proof of \cref{lem:isframe}.
\end{proof}

Let $\Sigma_i\supseteq\Sigma$ contain the predicate symbols of $\Swouts$ with types $\tynint$ such that $\level(\tynint)< i$ and types $\tyint$ such that $\level(\tyint)\le i$.




\begin{definition}
A family of structures \changed[lo]{$\{\Bs_n\}_{n \in \omega}$}, indexed by type level $n$, is said to be \firstdef{\confused} just if
$\Bs_0 = \As$, and
each $\Bs_{n+1}$ is a $(\Sigma_{n+1},\confusionf{\Bs_{n}}_{n+1})$-expansion of $\Bs_n$.

An \firstdef{entwined structure} is a member of an entwined family.
\end{definition}

\lo{From TCB:
The pattern goes: choose interpretations for predicates that are integer functions of level $n$; set the interpretation of level $n$ active types to be the finite set corresponding to the interpretations we've just picked; let the interpretation of level $n$ inactive types be the full function space between finite sets; let the interpretation of level $n+1$ active types be the monotone function space (which is countable); repeat.}

    \tcb{On consideration of the above remark, I believe we could allow $\tyint_1:=\tynint\to\tyint_2$ when $\level(\tynint)<\level(\tyint_2)$}

In the following lemmas, take \changed[lo]{$\{\Bs_n\}_{n \in \omega}$} to be an \confused family, and let $\Hf_n = \confusionf{\Bs_{n-1}}_n$.

\begin{lemma}
Let $\rho$ be a relational type. If $n>\level(\rho)$,
then $\Hf_n\sem{\rho}$ is finite.
\label{lem:finite}
\end{lemma}
\begin{proof}
We may assume that $\level(\rho)=n-1$, otherwise $\Hf_n\sem{\rho}=\Hf_{n-1}\sem{\rho}$.

If $\level(\rho)=0$ then $\Hf_n\sem{\rho} = \{0,1\}$ or a subset of a finite (by structural induction) function space, hence finite.

If $\level(\rho)>0$, then we work by induction on the structure of $\rho$. If $\rho=\Z\to \tynint$, then there are only finitely many $X\in\Swouts$; and since $\level(\tyzero_i)=0$, we have
$$\Hf_n\sem{\rho}=\{\Bs_{n-1}\sem{X} \, \overline s \mid
    X:\overline \tyzero \to\Z\to\tynint\in\Swouts,
                         s_i\in\Hf_n\sem{\tyzero_i}  \}$$
is finite. Otherwise, $\rho=\sigma_1\to\sigma_2$ and either $\sigma_1=S$ (which has a finite interpretation), or $\sigma_1$ has a relational type. By induction, $\bnsem{\sigma_1}$ and $\bnsem{\sigma_2}$ are finite, therefore so is $\bnsem{\rho}=\fspace{\bnsem{\sigma_1}}{\bnsem{\sigma_2}}$.
\end{proof}

\begin{lemma}\label{lem:lvnre}
If $n=\level(\tyint)$ then $\bnsem{\tyint}$ is recursively enumerable.

\changed[lo]{(N.B.~For any inactive type $\tynint$ with $\level(\tynint) = n \geq 1$, $\bnsem{\tynint}$ is not assumed to be defined.)}
\end{lemma}

\begin{proof}


By induction on \changed[lo]{$\textrm{arity}(\tyint)$}.
If $\tyint=\Z\to\tynint$ and 
\changed[lo]{$\level(\tyint)=n$}, then $\tynint=\sigma_1\to\cdots\to\sigma_k\to\boolsort$ and $\level(\sigma_i) \leq n-1$.
It follows from Lemma \ref{lem:finite} that each $\bnsem{\sigma_i}$ is finite.
Hence each monotone function $f : \fspace{\mathbb{Z}}{\Hf_n\sem{\tynint}}$ can be uniquely described by a map
\[
g :\fspace {\bnsem{\sigma_1}}{\cdots\to \fspace{\bnsem{\sigma_k}}{(\mathbb{Z}\cup \{\infty,-\infty\})}\cdots }
\]
where $f(x,\bar{s}) = [x \ge g(\bar{s})]$ (and $[\cdot]$ is the Iverson bracket).
The set of such $g$ is recursively enumerable.

Otherwise, $\tyint=\tyzero\to\tyint'$ with $\level(\tyzero) = 0$, so $\Hf_n\sem{\tyzero}$ is finite, hence $\fspace{\Hf_n\sem{\tyzero}}{\Hf_n\sem{\tyint'}}$ is r.e.~if $\Hf_n\sem{\tyint'}$ is, which holds by induction on arity.
\end{proof}

\begin{lemma}
The set of \confused families is recursively enumerable.
\end{lemma}
\begin{proof}
Since $\Swouts$ is finite, there is a maximum $m$ among the levels of the types of its symbols. For $i>m+1$, $\Bs_i$ is a structure extending $\Bs_{i-1}$ over the same alphabet, therefore $\Bs_i=\Bs_{m+1}$ by induction. Hence we only need to show that there are enumerably  many possible $\Bs_{m+1}$.

There is exactly one possible $\Bs_0$.

If we fix $\Bs_{n-1}$, then to construct the structure $\Bs_n$ we need to choose interpretations for
predicates in $\Swouts$ which are $\tyint$-type with level $n$, and predicates that are $\tynint$-type with level $n-1$.
\changed[lo]{(Notice that predicates that are $\tyint$-type with level $n-1$ already have an interpretation in the structure $\Bs_{n-1}$.)}
Since these choices must come from r.e. and finite sets respectively (by Lemmas~\ref{lem:lvnre} and~\ref{lem:finite}), there are enumerably many options for $\Bs_n$. Since an r.e.~collection of r.e.~sets is r.e., the number of families (up to level $m+1$) is r.e. by induction.
\end{proof}

\begin{lemma}\label{lem:emtonore}
If there is an \confused family such that $\Bs_m$ models $\Gamma$, then there is no resolution proof of $\bot$ from $\Gamma$.
\end{lemma}
\begin{proof}
We can use exactly the same proof of \cref{lem:order-emtonore}.

\end{proof}

\begin{lemma}\label{lem:smtoem}
If $\Gamma$ is standard-satisfiable then
 there is an \confused structure which models $\Gamma$.
\end{lemma}
\begin{proof}
%
%
%

We prove this using logical relations and Lemma~\ref{lem:termmonapp}.

As we construct an entwined family $\{\Bs_n\}_{n \in \omega}$,
we define logical relations \changed[lo]{$\arel^n$ between $\Hf_n := \confusionf{\Bs_{n-1}}_n$} and $\Sf$ recursively as follows:
\begin{align*}
j\arel_{\iota}^n i' & \defeq i=i'\\ 
b\arel_o^n b' & \defeq b\leq b'\\ 
r\arel_{\tau\to\sigma}^n r' & \defeq \forall s\in\Hf_n\tden\tau,s'\in\Sf\tden\tau \ldotp\\
& \qquad \qquad \qquad \qquad s\arel_\tau^n
s'\rightarrow r(s)\arel_\sigma^n r'(s')
\end{align*}
\changed[lo]{Thus $\arel^n$ is, by construction, the unique logical relation which is $\le$ on $\boolsort$ and is $=$ on other base types.}


\changed[lo]{Given a standard model $\Bs$, we construct a family of \confused structures by induction on $n$.
Let us suppose the $(\Sigma_{n-1},\confusionf{\Bs_{n-2}}_{n-1})$-structure $\Bs_{n-1}$ is defined.
We construct $\Bs_{n}$ as a $(\Sigma_{n},\confusionf{\Bs_{n-1}}_{n})$-expansion of $\Bs_{n-1}$ by:
\[
\Bs_n\sem{Y}f_1 \cdots f_k := \bigwedge_{\bar{f}\lr\bar{g}}\Bs\sem{Y}g_1 \cdots g_k
\]
where $Y:\sigma_1\to\sigma_2\to\cdots\to\sigma_k\to\boolsort$ is either of $\tyint$-type with level $n$, or of $\tynint$-type with level $n-1$, and each $f_i \in \Bs_n\sem{\sigma_i}$.}
\tcb{I just changed $\rho$ to $\sigma$ since it may be $S$}

\changed[lo]{We need to show that $\Bs_n$ is well-defined: $\Bs_n
\sem{Y} \in \Bs_n \sem{\overline \sigma \to o}$.
First suppose the former, i.e., $Y$ is of a $\tyint$-type
\[Y : \overline\tyzero \to \Z \to \overline \tynint \to \boolsort\]
with level $n$.\tcb{The types $\tyzero_i$ or $\tynint_i$ could also be $S$. It's possible that the easiest way to fix that is to include $S$ in these}
Then we need to show that $\Bs_n\sem{Y} \; \overline s$ is monotone, for each $s_i \in \Bs_n\sem{\tyzero_i}$.
I.e., assume $i \leq j$, we want to show that $\Bs_n\sem{Y} \; \overline s \; i \leq_o \Bs_n\sem{Y} \; \overline s \; j$;
or equivalently
\begin{align}
\forall f_1, \cdots, f_k \, . \,
\Bs_n\sem{Y} \; \overline s \; i \; f_1 \cdots f_k \leq \Bs_n\sem{Y} \; \overline s \; j \; f_1 \cdots f_k.
\label{eq:smtoen-monotone}
\end{align}
Now, since $\Bs$ is a model of $\Gamma$ which is assumed to be level-initial limit \dz, $\Bs\sem{Y} \; \overline t \in \Bs\sem{\Z \to \overline \tynint \to \boolsort}$ is upward-closed in the first argument for all $\overline s \lr \overline t$, i.e., for all $\overline f \lr \overline g$
\[
\Bs\sem{Y} \; \overline t \; i \; \overline g \leq \Bs\sem{Y} \; \overline t \; j \; \overline g
\]
which in turn implies~(\ref{eq:smtoen-monotone}), by considering the definition of $\Bs_n\sem{Y}$.}



\changed[lo]{Next suppose $Y$ is a $\tynint$-type with level $n-1$.
Given $f_i \in \Bs_n\sem{\sigma_i}$ for each $i$, we have $\Bs\sem{Y} \, \overline f = \bigwedge_{\overline f \lr \overline g} \Bs\sem{Y} \, \overline g$. [Prove this.] It follows from the definition that $\Bs_n\sem{Y} = \Bs\sem{Y} \in \Bs_n\sem{\overline\sigma \to o}$, as desired.}




Note that $\Bs_m\lr \Bs$. \lo{What is $m$? Clarify with reference to $l$ in \cref{ass:level-initial-limit-dz}.}
\tcb{As stated at \cref{point:hereiswhereImeanttofixm} (made a remark so that I could reference it now), $m$ is some integer greater than $l$. Taking $m=l+1$ is fine. (We need $l+1$ rather than $l$, because we want $\Bs\sem{\tynint}$ for inactive types $\tynint$ of level $m$ to be finite).}

\tcb{Could describe further ($\Bs_m\sem{X}$ is defined to be a lower bound for all argument combinations, matching the definition for $\lr$)} \lo{Yes please.}

We now show that $\Bs_m$ models $\Gamma$, by considering goal clauses and definite clauses separately. For goal clauses $\lnot G$, we show that for any $\alpha$, $\Bs_m\sem{G}(\alpha)=0$. For definite clauses $X\;\bar{x}\impliedby G\in \Gamma$, we show that for any $\alpha$, if $\Bs_m\sem{X\ \bar{x}}(\alpha)=0$ then $\Bs_m\sem{G}(\alpha)=0$.

Consider $\lnot G\in\Gamma$ where $G$ is positive existential. For any valuation $\alpha$ over $\Hf_m$, there is a standard valuation $\alpha'$ such that $\alpha'\glr\alpha$ (take $\alpha'(x)=\alpha(x)$ on variables of type $\iota$ or $S$ and take $\alpha'(x)=\top$ otherwise). Since $\Bs$ is a model, $\Bs\sem{G}(\alpha')=0$, and by Lemma \ref{lem:termmonapp}
, $\Bs_m\sem{G}(\alpha)=0$. Therefore $\lnot G$ is satisfied by $\Bs_m$.

Now consider $ X\;\bar{x}\impliedby G\in \Gamma$ and some $\Hf_m$-valuation $\alpha$. If $\Bs_m\sem{X\ \bar{x}}(\alpha)=0$ then $\Bs_m\sem{X}\alpha(x_1)\cdots \alpha(x_k)=0$. By construction of $\Bs_m$, there exist $g_1,\cdots, g_k$ such that $\Bs\sem{X}\overline g=0$ and each $g_i\glr\alpha(x_i) $. This allows us to
construct $\alpha'\glr \alpha$ such that $\Bs\sem{X\ \bar{x}}(\alpha')=0$. Again, since $\Bs$ is a model, $\Bs\sem{G}(\alpha)=0$, and by Lemma \ref{lem:termmonapp}
, $\Bs_m\sem{G}(\alpha)=0$. Therefore $X\;\bar{x}\impliedby G$ is satisfied by $\Bs_m$.

This shows that $\Bs_m$ is an entwined structure that models $\Gamma$.
\tcb{TODO: right now, Lemma~\ref{lem:termmonapp} only applies to applicative terms. The correctness of this argument relies on $b\arel_o b' \; \defeq \; b\leq b'$  --- that way it holds across $\land$ and $\lor$.}
\end{proof}

\begin{lemma}\label{lem:decidablemodel}
    Given an entwined structure $\Bs_m$, determining if it satisfies a goal or definite clause $G$ is decidable.
\end{lemma}
\begin{proof}
    \tcb{TODO: formalise}

    We prove this by converting the clause into a formula of linear integer arithmetic. The formulas at each step of the transformation are always terms, but some are neither HoCHC clauses nor first-order terms. \lo{First-order terms? One would expect formulas to be transformed to order-0 terms, and not first-order terms.}
    \tcb{``first-order terms'' is somewhat ambiguous. Here I mean formulas of LIA.}

    For each free variable $x$ in $G$ that is not of type $\Z$, $\Bs_m\sem{\Delta(x)}$ is finite.
    \lo{I take it that here $m$ is assumed to be greater the highest level of the predicates in $\Sigma' \setminus \Sigma$.}
    \tcb{Yes.}
    Therefore we may replace $G$ by a conjunction of clauses - one for each possible interpretation of $x$ (we introduce constants corresponding to each possibility).
    \lo{Strictly speaking, constant symbols are restricted to those in the first-order signature $\Sigma$, but it seems harmless to allow constants of a higher-order type.}
    \tcb{We could easily add them as new predicate symbols - just begin by extending $\Sigma'$ and $\Bs_m$}
    \lo{Note that this (para.) deals with the case of atoms of the form $x \; \overline M$ (in essence by reduction to atoms of the shape $X \; \overline N$).}

    We now describe how to turn applicative terms into formulas of linear integer arithmetic.
    For terms of the form $X\;x$,
    because \changed[lo]{$\Bs_m$} is an entwined 
    \changed[lo]{structure},
    $\Bs_m\sem{X\;x}$ is equivalent to $x>k_X$ (or $\mathit{true}$ or $\mathit{false}$).
    \lo{So we are assuming that $k_X \in \Z$ (as opposed to $\Z \cup \set{\infty, - \infty}$).}
    \tcb{I did this so that we end up with a formula strictly in LIA.}
    If the term is instead $X\;x\;M\;N$ where $M$ and $N$ are terms that do not contain free variables of type $\Z$, due to the preprocessing above, $M$ and $N$ have constant interpretations, so again we may replace the formula by $x>k_X\; M \; N$
    \lo{28/12: OK. By $k_X$ you mean an element of a finite set of functions.
    Equivalently we can take the formula translate to be $x>k_{X, M, N}$ where $k_{X, M, N}$ is an integer (depending on $X$ and ``constants'' $M$ and $N$).} (or $\mathit{true}$ or $\mathit{false}$).
    \lo{The argument works for terms of the form $X \; \overline M \; x \; \overline N$ where the $M_i$s and $N_j$s do not contain free variables of type $\Z$.}
    \tcb{Yes, although the $X\; \overline{M}\; x$ can be dealt with before thinking about $\overline N$. You end up with something like $\bigvee_{i}(x\in I_i\land f_i\; \overline{N}) )$ (where $f_i$ are constants as described in my next comment)}

    Otherwise, we have something like $X\; (Y\; y)\; x$ where both $x$ and $y$ have type $\Z$. In this case, by the level-initiality restriction, $Y\; y$ has an inactive type $\tynint$ (say),
    so $\Bs_m\sem{\tynint}$ is finite \changed[lo]{(because $\level(\tynint) < m$)}. Since $\Bs_m\sem{Y}$ is monotone in its first argument, $\Z$ can be divided into at most $|\Bs_m\sem{\tynint}| + 1$ intervals such that $\Bs_m\sem{Y}$ is constant within each interval.
    This means that we can break it into cases guarded by formulas asserting $y$ is in the appropriate interval.
    \lo{Thus, $\Bs_m\sem{Y}$ determines a subset $\set{b_1, \cdots, b_r} \subseteq \Bs_m\sem{\tynint}$, and a partition $\Z = \bigcup_{i=1}^r I_i$; and (using the notation in the preceding) we have that $\Bs_m\sem{X\; (Y \; y) \; x}$ is equivalent to $\bigwedge_{i=1}^r \big(y \in I_i \to  x > k_{X, b_i}\big)$.}

    \lo{28/12: More generally, suppose $X\; (Y \; y) \; x$ has a $\tynint$-type (which could be $\boolsort$, subsuming the preceding case) with $x, y$ of type $\Z$ as before, then
    \[
    \Bs_m\sem{X\; (Y \; y) \; x} = \mathsf{case}(x, y)\left[(x \in I_i) \wedge (y \in J_j) \mapsto b_{i, j}\right]_{i, j}
    \]
    where $\set{I_i}$ and $\set{J_j}$ are finite partitions of $\Z$.}
    \tcb{What you write is correct, but I believe my thoughts were slightly different and might make the recursion clearer. I would first transform $X\; (Y \; y) \; x$ into $\mathsf{case}(y)\left[(y \in J_j) \mapsto X\; f_j\; x\right]_j$ where $f_j$ are the new constants and then apply a transformation to each $X\; f_j\; x$.}

    \lo{28/12: Thus, take the term $X \; \overline M \; x \; \overline N : \tynint$ with $x : \Z$. Assume that the $\Z$-type variables in $\overline M$ and $\overline N$ are $x_1, \cdots, x_n$.
    Since $\Bs_m$ is an entwined structure, there exist:
    \begin{itemize}
        \item (corresponding to $x$) a finite partition $\set{I_k}_k$ of $\Z$, and
        \item for each $1 \leq i \leq n$, (corresponding to $x_i$) a finite partition $\set{I^i_{j_i}}_{j_i}$ of $\Z$,
        \item for each tuple $k, j_1, \cdots, j_n$ (ranging over a finite set), an element $b_{k, \overline j} \in \Bs_m\sem{\tynint}$
    \end{itemize} satisfying
    \[\begin{array}{rl}
    {}& \Bs_m\sem{X\; \overline M \; x \; \overline N}\\
    = & \mathsf{case}(x, \overline x)\left[(x \in I_k) \wedge \bigwedge_{i=1}^n(x_i \in I^i_{j_i}) \mapsto b_{k, \overline j}\right]_{k, \overline j}
    \end{array}\]
    (The case statement is easily expressible in (extended) LIA.)
    }
    \sr{29/12: Here $b_{k,\overline j} \in \Bs_m\sem{\tynint}$, but it is not clear to me how to express such objects in (first-order) LIA.  More difficult, I think, is how to determine the $b_{k,\overline j}$ from $X\ \overline{M}\ x\ \overline{N}$ effectively?
    \lo{31/12: My first response. I have completely reworked this proof - see the proof of the corresponding \cref{lem:order-decidablemodel} for the order-level-initial version of the problem. Two quick points. 1. The transformation of a term $M$ to a case-construct has been formalised as a big-step relation defined by induction over 3 rules, which should address your concern about effectivity.
    2. The claim is that if $M$ is a formula and $M \downarrow C$ then $C$ is expressible in LIA.
    Can you please see if the new proof addresses your concerns?}
    In particular, the $b_{k,\overline j}$ are a function of the interpretation of $\overline{M}$.  Take the simple case in which e.g. $\overline M$ is a single predicate symbol $Y$ of zero-arity.  This $Y$ may be defined by some definite clause $Y \Leftarrow G$ where $G$ is an arbitrary goal term.  Consequently, I don't see how determining the interpretation of $Y$ is any easier than the original problem.

    I have tried to sketch an alternative, but it at least needs that we restrict entwined structures to be monotone (which I guess is not a problem).

    The idea is to define all elements of an entwined preframe by a \emph{recursion-free} set of clauses over extended LIA.  Since extended LIA is decidable, it follows that resolution is complete for such sets of clauses because there will only be finitely many resolvents.

    We define a set of new relation symbols $R_d$, one for each element $d$ of the entwined preframe.
    Consider an arbitrary element (we only consider the integer function case):
    \[
        d \in \Bs_m\sem{\tyzero_1 \to \cdots{} \to \tyzero_k \to \Z \to \tynint_1 \to \cdots{} \to \tynint_m \to o}
    \]
    Each tuple of elements $(d_1,\ldots,d_k,e_1,\ldots,e_m)$ drawn from $\mathsf{D}(d) \coloneqq \Bs_m\sem{\tyzero_1} \times \cdots \times \Bs_m\sem{\tyzero_k} \times \Bs_m\sem{\tynint_1} \times \cdots{} \times \Bs_m\sem{\tynint_m}$ determines some $n \in \Z \cup \{\infty,-\infty\}$, computable from the enumeration, such that $d\ d_1\cdots{}d_k\ m\ e_1 \cdots{} e_m = 1$ iff $m \geq n$.  We write this uniquely determined $n$ as $\ulcorner (\overline{d},\overline{e}) \urcorner$.

    For each $d$ we construct the following clause:
    \[
            R_d\ \overline{x}\ y\ \overline{z} \Leftarrow \bigvee_{(\overline{d},\overline{e}) \in \mathsf{D}(d)} (\bigwedge_{i=1}^k \mathit{Is}_{d_i}(x_i) \wedge \bigwedge_{i=1}^m \mathit{Is}_{e_i}(z_i) \wedge y \geq \ulcorner (\overline{d},\overline{e}) \urcorner)
    \]
    The relation symbol $R_d$ acts as a definition for the element $d$ in the sense that it agrees with $d$ on finite arguments, but this will be quite complicated to state because $R_d$ should be interpreted in a frame built over extended LIA and $d$ is drawn from a frame built over LIA.
    The definition relies on a subsidiary family of formulas, abbreviated $\mathit{Is}$ with a subscript carrying the identity of a strictly lower-order element of the frame.
    \[
          \mathit{Is}_{d}(x) \coloneqq \bigwedge_{(\overline{d},\overline{e}) \in \mathsf{D}(d)} x\ \overline{R_{d_i}} \ \ulcorner (\overline{d},\overline{e}) \urcorner\ \overline{R_{e_i}}
    \]
    The formulas are true whenever their argument is no smaller, logically, than $d$, because they require that their argument holds whenever $d$ would hold.
    }

    Note that for any applicative term $M$ containing more than one integer variable, we can always find a subterm of the form $N\;x$ where $N$ does not contain any integer variable (and $x$ is an integer variable). We apply the case illustrated above.
\end{proof}

\section{Generalising the decidability argument}

\subsubsection*{Quasi-orders that contain no infinite antichains}

Recall that a \emph{well-quasi order} (WQO) is a quasi-order that is well-founded, and contains no infinite antichain.
It is well-known that in a WQO, every upward closed set has a finite representation (because ${\uparrow} (\min A) = A$, if $A$ is upward closed).

Less well-known is that every downward closed set in a WQO also has a finite representation:

\begin{theorem}[\citet{DBLP:conf/stacs/FinkelG09}]\label{thm:wqo-finite-rep-down-sets}
Every downward closed subset of a WQO is a finite union of ideals.\footnote{Let $(X, \leq)$ be a quasi-order.
An \emph{ideal} is a nonempty subset $I \subseteq X$ that is downward closed, and \emph{directed} i.e.~if $a, b \in I$ then there exists $c \in I$ such that $a, b \leq c$).}
\end{theorem}

Theorem~\ref{thm:wqo-finite-rep-down-sets} is in fact an immediate consequence of an old result of \citet{ErdosT43} which has been reproved many times.
See~\citet{DBLP:journals/lmcs/BlondinFM17} for a modern account (and the references therein).

\begin{theorem}[Erd\"os and Tarski]
A countable quasi-order $X$ contains no infinite antichain if, and only if, every downward closed subset of $X$ is equal to a finite union of ideals (moreover these ideals may be canonically chosen to be maximal ideals under inclusion).
\end{theorem}

However, if ideals are to serve as representatives (of downward closed sets), there must not be too many of them.
Alas, there are countable sets $X$ with an uncountable $\mathrm{Ideals}(X)$
(take e.g.~$X = \Sigma^\ast$ with the prefix ordering; then $\mathrm{Ideals}(X)$, isomorphic to $\Sigma^\ast \cup \Sigma^\omega$, is uncountable if $\Sigma$ has size at least 2).

Fortunately (\citet[Prop.~4.4]{DBLP:conf/icalp/BlondinFM14})
\begin{lemma}
A WQO $X$ is countable iff $\mathrm{Ideals}(X)$ is countable.
\end{lemma}

CHECK: This lemma extends to quasi-orders that contain no infinite antichains.

\paragraph{\dfrac{Assumption}{den} 1.}
\begin{enumerate}
\item $I$ is a countable quasi-order that contains no infinite antichain
\item $I$ has only countably many maximal ideals (all of which finitely representable)
\item $P$ consists of downward closed subsets of $I$.
\end{enumerate}

\begin{example}[Quasi-orders with no infinite antichains]~
\begin{enumerate}
\item The quasi-order of finite words over a finite alphabet, with lexicographical ordering, is not well-founded (take e.g.~the infinite descending sequence: $b, a \, b, a \, a \, b$, etc.), but contains no infinite antichains.
\item The set $\mathbb{Z}^n$ of $n$-tuples of integers, ordered lexicographically, is a quasi-order that contains no infinite antichains.
Because the ordering is total, if $I \subseteq \mathbb{Z}^n$ is downward closed, then $I \subseteq \mathrm{Ideals}(\mathbb{Z}^n)$.
\end{enumerate}
\end{example}

We seek examples of \emph{decidable} WQO and, more generally, quasi-orders that contain no infinite antichains.

The set of finite trees over a finite alphabet of tree constructors, ordered by tree embedding, is a WQO.
This is Kruskal's Tree Theorem \citet{Kruskal60}.
\citet{DBLP:conf/tapsoft/BoudetC93} showed that the positive existential fragment of the theory of tree embedding is decidable.
\else
\fi
\fi